\newcommand\dia[2]{\ensuremath{dia_#2(#1)}}
\newcommand\circn[1]{\ensuremath{({#1})}}
\newcommand\indx[2]{\ensuremath{\textrm{Index}(#1,#2)}}
\renewcommand\implies{\ensuremath{\Rightarrow}}
\newcommand\as{\ensuremath{\alpha}}
\newcommand\bs{\ensuremath{\beta}}
\newcommand\calA{\mathcal{A}}
\newcommand\calB{\mathcal{B}}
\newcommand\calT{\mathcal{T}}
\newcommand\calS{\mathcal{S}}
\newcommand\calF{\mathcal{F}}
\newcommand{\tdots}{\ensuremath{\,.\,.\,}}
\newcommand\N{\mathbb{N}}
\newcommand\lborder{\textit{lborder}}
\newcommand\rborder{\textit{rborder}}
\newcommand\interior{\textit{interior}}
\newcommand\dlen{d_{\mathit{len}}}
\newcommand\dlcs{d_{\mathit{lcs}}}
\newcommand\ddl{d_{\mathit{dl}}}
\newcommand\omicron{\mathit{o}}
\newcommand\dom{\mathit{dom}}
\renewcommand\){\right)}
\newtheorem{definition}[theorem]{Definition}
\newtheorem{claim}{Claim}
\title{Edit Distance of Finite State Transducers}
\author{C. Aiswarya}{Chennai Mathematical Institute, India and IRL ReLaX, CNRS France}{aiswarya@cmi.ac.in}{https://orcid.org/0000-0002-4878-7581}{}
\author{Amaldev Manuel}{Indian Institute of Technology Goa}{amal@iitgoa.ac.in}{}{Supported by the DST SERB MATRICS grant MTR/2022/000628 \textit{Deciding closeness of finite state transducers}.} 
\author{Saina Sunny}{Indian Institute of Technology Goa}{saina19231102@iitgoa.ac.in}{https://orcid.org/0009-0005-1366-0168}{}
\authorrunning{C. Aiswarya, A. Manuel, and S. Sunny} 
\keywords{transducers, edit distance, conjugacy} 
\begin{document}
\maketitle

\begin{abstract}
 We lift  metrics over words to  metrics over  word-to-word  transductions, by defining the distance between two transductions as the supremum of the distances of their respective outputs over all inputs. This allows to compare transducers beyond equivalence. 
 
 Two transducers are \emph{close} (\textit{resp.}~$k$-close) with respect to a metric if their distance is finite (\textit{resp.}~at most $k$). Over integer-valued metrics computing the distance between transducers is equivalent to deciding the closeness and $k$-closeness problems. For common integer-valued edit distances such as, Hamming, transposition, conjugacy and Levenshtein family of distances, we show that the closeness and the $k$-closeness problems are decidable for functional transducers. Hence, the distance with respect to these metrics is also computable. 

Finally, we relate the notion of distance between functions to the notions of diameter of a relation and index of a relation in another. We show that computing edit distance between functional  transducers is equivalent to computing diameter of a rational relation and both are a specific instance of the index problem of rational relations. 
\end{abstract}

\section{Introduction}
For meaningfully comparing two words (or sequences, vectors, functions, etc.), it is often necessary to have a measure that quantifies their (dis)similarity. It usually consists of associating a nonnegative integer to two words that indicates how different they are from each other. This usually defines
a distance between words, the most popular of which are edit distances. It is the minimum number of edit operations required to transform one word into another. These operations typically include inserting or deleting a letter, substituting a letter with another,  swapping adjacent letters (transpositions), and cyclic shifts. 
Edit distances are studied in coding   \cite{levenshtein1966binary,ullman1966near}, parsing   \cite{aho1972minimum,pighizzini1992parallel}, speech recognition \cite{okuda1976method,ackroyd1980isolated},  molecular biology \cite{eppstein1990efficient, karp1993mapping} etc. Interesting combinatorial problems on words such as the computation of longest common subsequences can be reduced to computing edit distances \cite{apostolico1987longest}. 
 For a detailed overview of the history and applications of edit distances,  see  \cite{kruskal1983overview}.

The notion of distance between two words can be lifted naturally to distance between a word and a set of words, or between two sets of words, and so on. There is a long line of research of  this kind: computing the edit distance between two languages --- usually defined as the smallest distance between any two pairs from the respective sets.  
 It could be between a word and a regular language \cite{OrderNCorrectionRegular} \cite{StringFiniteAutomaton}, two regular languages \cite{EditDistanceOfRegularLanguages}, a regular language and itself \cite{EditDistanceRegularLanguage}, or a regular language and a context-free language \cite{EditDistanceRegularCFL}. In all these settings there are efficient algorithms for computing the edit distances. 

In this paper, we study the distance between two word-to-word functions (transductions) given by finite state transducers, i.e., automata with output. 
A \emph{transducer} is a machine that reads an input word and produces one or more output words. They were one of the earliest machines to be studied in automata theory, and in fact finite state automata were first defined as a special case of finite state transducers \cite{RabinScott}.
Finite state transducers are used in a variety of software and hardware systems such as encoders, decoders, demuxers, spell checkers, text normalizers, schema translators, template code generators, etc. 
 
\begin{figure}[htbp]
\centering
\scalebox{.8}{
\begin{subfigure}[b]{0.5\textwidth}
\begin{tikzpicture}[shorten >=1pt,node distance=3 cm,on grid,auto] 
   \node[state,initial,accepting,initial text=] (q_0)   {$q_0$}; 
   \node[state,accepting] (q_1) [right=of q_0] {$q_1$};
    \path[->] 
    (q_0) edge [bend left] node {$a|a, b|b$} (q_1)
    (q_1) edge [bend left] node  {$a|\epsilon, b|\epsilon$} (q_0);
\end{tikzpicture}
\caption{$\calT_1$} 
\end{subfigure}
\hfill
\begin{subfigure}[b]{0.5\textwidth}
\begin{tikzpicture}[shorten >=1pt,node distance=3cm,on grid,auto] 
   \node[state,initial,accepting,initial text=] (q_0)   {$q_0$}; 
   \node[state,accepting] (q_1) [right=of q_0] {$q_1$};
    \path[->] 
    (q_0) edge [bend left] node {$a|\epsilon, b|\epsilon$} (q_1)
    (q_1) edge [bend left] node  {$a|a, b|b$} (q_0);
\end{tikzpicture}
\caption{$\calT_2$}
\end{subfigure}
 \hfill 
\begin{subfigure}[b]{0.5\textwidth}
\begin{tikzpicture}[shorten >=1pt,node distance=3 cm,on grid,auto] 
   \node[state,initial,accepting,initial text=] (q_0)   {$q_0$}; 
    \path[->] 
    (q_0) edge [loop above] node {$a|a$} (q_0)
            edge [loop below] node {$b|\epsilon$} (q_0);
\end{tikzpicture}
\caption{$\calT_3$}
\end{subfigure}
}
\caption{$\calT_1$ outputs letters at the odd positions, $\calT_2$ outputs letters at the even positions and $\calT_3$ outputs only $a$'s.}
\label{UnboundedFig}
\end{figure}
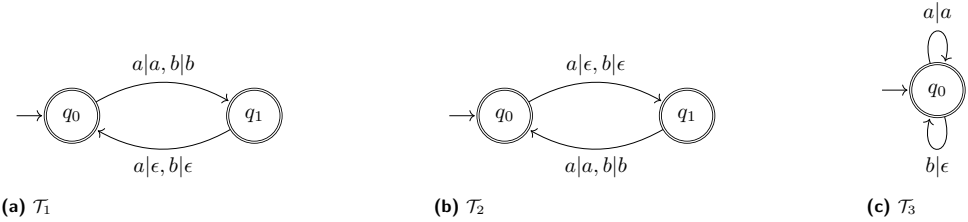 
Our aim is to develop a framework to meaningfully compare two transductions beyond equivalence. 
Consider the functions given by the transducers in \Cref{UnboundedFig}. The transducers $\calT_1$ and  $\calT_2$ output the letters at the odd and even positions respectively, while the transducer $\calT_3$ erases $b$'s in the input. If we were to find the odd one among these three functions, arguably $\calT_3$ will be picked, with the length of the respective output on any input deviating significantly from that of the others. Our aim is to define a measure that quantifies such distances. 

If we have a metric to compare the output words, we can extend it to transductions as follows. The distance between two transductions is the least upper bound of the distances between their respective outputs on any input word. We assume that their domains are the same, and we set the distance to infinity if this is not the case.  We say that two transductions are close if their distance is finite, and they are $k$-close if their distance is at most $k$. We may simply say that two transducers are close (or $k$-close) instead, to mean that the transductions defined by these transducers are close (or $k$-close).

We are interested in the following question: \textit{Given two finite state transducers, are the transductions defined by them close (or simply are the transducers close)?}
Clearly, deciding closeness is a boundedness problem. We show closeness as well as  $k$-closeness  are decidable for various edit distance metrics, in particular Hamming (letter-to-letter substitutions), transposition (swapping adjacent letters), conjugacy (only cyclic shifts) and Levenshtein family of distances --- Longest common subsequence (insertion and deletion), Levenshtein (insertion, deletion and substitution), and Damerau-Levenshtein (insertion, deletion, substitution and adjacent transposition). It turns out that computing distance between transducers is equivalent to deciding closeness and $k$-closeness over integer-valued metrics (see \Cref{prop:computedistance}).  Hence for the edit distances mentioned above, the distance between transducers is computable.

A related notion is that of \textit{diameter} of a relation. We define it to be the supremum of the distance of every pair in the relation. We are interested in computing the diameter of rational relations over words, that is those given by (not necessarily functional) finite state transducers. A rational relation is said to have \emph{bounded diameter} (\textit{resp.}~$k$-bounded diameter) if the diameter of the relation is finite (\textit{resp.}~at most $k$). It turns out that for every pair of transductions $\calT_1$ and $\calT_2$ there is a rational relation $R$ such that for every metric, the diameter of $R$ is same as the distance between $\calT_1$ and $\calT_2$. In fact, the converse is also true by virtue of Nivat's theorem (see \Cref{nivat}).

Another related notion is that of the  \emph{index of a rational relation in the composition closure of another}. Let $R,S$ be a rational relation over words. The index of $R$ in the composition  closure of $S$ is defined to be the smallest integer $k$ such that the relation $R$ is contained in the $k$-fold composition of $S$. If such a $k$ exists we say that $R$ has the \textit{finite index property} in the composition closure of $S$.  We show that the finite index property is undecidable for arbitrary rational relations. However, if $S$ is a metrizable relation (see \Cref{metrizable}) w.r.t.~the edit distances mentioned above, the index of $R$ in the composition closure of $S$ is computable. 

\medskip

Our decision procedure for $k$-closeness involves designing a weighted automaton that counts the number of edit operations for transforming one output to the other. We need to check whether there are input instances for which the weight is more than $k$. We extract a finite state automaton of size exponential in $k$ that achieves this (see \Cref{prop:kapprox}). This is a generic approach independent of the particular edit operations. However for Hamming and transposition distances, we have a direct polynomial time procedure for deciding $k$-closeness (see \Cref{kclosehammingtrans}).

Recall that deciding the closeness of transductions is the same as deciding whether the diameter of a rational relation $R$ is bounded. For the latter, consider a transducer recognising $R$. It turns out that if there are loops in this transducer that produce nonconjugate words (that are not cyclic shifts of each other) then such loops can be iterated to get unbounded diameter/distance. Thus a crucial ingredient in our decision procedure is checking for conjugacy of loops, which is decidable \cite{decidingconjugacy}. 
 For boundedness w.r.t.~Levenshtein distances, we show that this is also a sufficient condition (see \Cref{closewrtlevenshtein}). For conjugacy distance, we show that the diameter of a rational relation $R$ is bounded if and only if every pair in $R$ is conjugate (see \Cref{closewrtconjugacy}). Notice that this is not the case for arbitrary relations.  In the case of Hamming distance, which only includes substitutions, we show that it is sufficient to check if the pairs of words generated by the loops after some shifted delay are identical (see \Cref{Claim:hamming}). This also holds true for transposition distance, but additionally, we also need to check if the words are permutations of each other (see \Cref{Claim:trans}).

\subsection{Related Work}
\label{Subsec:relatedwork}
The \emph{adjacent functions} in \cite{ReutenauerS1991} is an analogous definition for closeness between transductions  with respect to prefix distance. Two functions $f,g : A^* \rightarrow B^*$ are \emph{adjacent} if $\sup \left \{\, d_p(f(w),g(w)) \mid w \in \dom(f) \cap \dom(g) \, \right \} < \infty$. Here, $d_p(u,v) = |u| + |v| - 2\max \{|z| \mid u,v \in zA^*\}$ denotes the \emph{prefix distance} between two words $u$ and $v$. The adjacency of two rational functions is used in deciding the sequentiality of a function. It is decidable to check if two given rational functions are adjacent or not \cite{ReutenauerS1991}(Proposition 1).

Another problem that is similar in spirit is the \emph{robustness problem}. We say a transducer $\calT$ is \emph{robust} w.r.t.~a distance $d$ if there is a nontrivial relation $R$ between the distance between two input words (say $d(u,v)$) and distance between their corresponding outputs on $\calT$ (say $d(\calT(u),\calT(v))$). For instance, $R$ could be \emph{Lipschitz continuity} --- there is some $k>0$ such that $d(\calT(u),\calT(v)) \leq k \cdot d(u,v)$, or \emph{locally Lipschitz continuity} --- there exists $b,k>0$ such that if $d(u,v)<b$ then $d(\calT(u),\calT(v)) \leq k \cdot d(u,v)$, etc. Sometimes, weaker notions of distance are considered (for instance by dropping the triangle inequality), and respective distances are called \emph{cost} or \emph{similarity} functions. The work \cite{RobustnessAnalysis} solves the locally Lipschitz continuity problem for sequential and unambiguous transducers using reversal bounded counter automata. The problem is shown to be undecidable for Lipschitz continuity even for deterministic transducers and the decidability is shown for the class that has a bound on the delay between input and output words \cite{LipschitzRobustness}.

 Frougny and Sakarovitch studied rational relations with bounded delay \cite{frougny1991rational}, which is actually our diameter problem for rational relations when the distance over words is measured by their length difference. A problem related to the diameter of a rational relation is \emph{almost reflexivity} of rational relations studied in \cite{choffrut2002distances}. A relation $R \subseteq A^* \times A^*$ is $k$-reflexive, for some integer $k \leq \infty$, if every element $u$ of the domain is at a distance at most $k$ from some element of the range $v$, with $(u,v) \in R$, and vice versa. The relation $R$ is almost reflexive if $k < \infty$. It is shown undecidable to check if a deterministic rational relation is almost reflexive, or $k$-reflexive, for any given integer k, with respect to the following -- Hamming,  prefix, suffix, subword and Levenshtein edit distances. It is shown decidable for synchronized rational relation w.r.t.~Hamming distance.

In 1966, Brzozowski raised the question of \emph{finite power property} on regular languages ---  it takes a regular language $L$ as input and asks whether there exists some positive integer $n$ such that $(L + \epsilon)^n = L^*$. It was solved in 1979 by Hashiguchi \cite{hashiguchi1979decision} and Simon \cite{simon1978limited}, independently. We study the \emph{finite index property} of a rational relation in the iterative composition of another relation.  Notice that the finite index property is different from the finite power property in two respects. One, it is over relations and not languages, and secondly and more importantly, the iteration is obtained by relation composition and not concatenation.

\subsection{Organisation of the Paper}
In Section~\ref{sec:prelim}, we recall the definitions of finite state transducers, metrics on words and edit distances. In Section~\ref{sec:3}, we define the notion of distance between transducers, the diameter of a rational relation, and the index of a rational relation in another. We also establish the relation between these notions and state our results in this section. In Section~\ref{Subsec:closeness}, we give the connections with conjugacy and the proof arguments remaining from Section~\ref{sec:3}. Finally, we conclude in Section~\ref{Sec:conclusion} with a short discussion on future directions. 

\section{Preliminaries}
\label{sec:prelim}
Let $A^*$ denote the set of all finite words over the alphabet $A$. We use $|w|$ to denote the length of the word $w$. 
Let $w[i \tdots j]$ denote the factor of $w$ from index $i$ to $j$ where $1 \leq i \leq j \leq |w|$. 
A transduction is a function from words to words. 

\subsection{Finite State Transducers}
\label{Subsec:fst}

The simplest form of a transducer is a deterministic finite state machine whose each transition and each final state is labelled by a possibly empty \emph{output word}. Formally, a {\em sequential transducer} $\calT = \langle \calA, \lambda, \omicron\rangle$ with input alphabet $A$ and output alphabet $B$ is a {\em deterministic} finite state automaton $\calA$ with two associated output functions $\lambda: \Delta \rightarrow B^*$ and $\omicron:F \rightarrow B^*$ where $\Delta$ and $F$ are the set of transitions and the set of accepting states of $\calA$ respectively. 

On an input word that is accepted by the automaton, we concatenate the output words produced by the transitions in the unique run of the machine and finally append the end-of-input word of the final state to obtain the output of the machine.
That is to say, if $\rho=\delta_1 \cdots \delta_n$ is the successful run of $\calA$ on a word $w \in A^*$, the {\em output} of $\calT$ on $w$, denoted by $\calT(w)$, is the word $\lambda(\rho)\cdot \omicron(q)$ where $\lambda(\rho)= \lambda(\delta_1)\cdots\lambda(\delta_n)$ and $q$ is accepting state reached by the run. Let $L(\calA)$ denote the set of words accepted by $\calA$, called the \emph{language} of $\calA$ or the \emph{domain} of $\calT$ (denoted as $\dom(\calT)$). We can see that $\calT$ defines a function from $\dom(\calT)$ to $B^*$. Functions defined by sequential transducers are called {\em sequential}. In the literature, they are known as \emph{subsequential functions}, introduced by Sch\"utzenberger \cite{schuetzenberger1977variante}. Transducers given in \Cref{UnboundedFig} are sequential.

If we allow the finite state automaton $\calA$ to be nondeterministic, then $\calT$ no longer defines a function, but a binary relation on $A^{*} \times B^{*}$. Such relations are called \emph{rational}. If the relation is a function, then the transducer is called \emph{functional}, and the corresponding functions are called \emph{rational functions}. 
We can restrict the nondeterminism and still compute all rational functions. A finite state automaton is \emph{unambiguous} if on each input word the machine has at most one run. It is a well-known fact in the theory of transducers that all \emph{rational functions}  are computed by finite state transducers whose underlying automata are unambiguous \cite{choffrut1999uniformization}. Such transducers are called \emph{unambiguous transducers}.  Clearly sequential functions are a strict subset of rational functions. 
For instance, the function `output the input word if the last letter of the input is an $a$, otherwise the empty word' is rational but not sequential.

There exist generalisations of rational functions where the underlying automaton is a two-way finite state automaton or equivalently a finite state automaton with registers (corresponding functions are called \emph{regular} \cite{Engelfriet,SST}), or two-way finite state automaton with pebbles (polyregular functions \cite{bojanczyk2022transducers, Polyregular1}). An overview of the classical theory of transducers is given in \cite{FilliotSigLog}. In this paper, we restrict our attention to one-way functional transducers.

\subsection{Metric on Words, Edit Distances}
Simply put, a metric on a set is used to measure distance between any two elements of the set.  A \emph{metric on words} over the alphabet $A$ is a function $d: A^* \times A^* \rightarrow [0, \infty ]$ such that for any words $u,v$ and $w$ in $A^*$, $d(u,v) = 0 \iff u = v$ (\emph{separation}), $d(u,v) = d(v,u)$ (\emph{symmetry}), and $d(u,v) \leq d(u,w) + d(w,v)$ (\emph{triangle inequality}).

A metric is \emph{integer-valued} if it has range $\N \cup \{\infty\}$.
A trivial metric on words is the \emph{discrete metric} --- distance between words $u$ and $v$, denoted by $d_\infty(u,v)$, is $0$ if $u = v$ and $\infty$ otherwise. Another straightforward distance on words is the absolute difference of their lengths (denoted as $\dlen$). This is a \emph{pseudo-metric} since the distance between two distinct words can be zero, i.e., does not satisfy the separation property of a metric.

An important class of metrics in the context of word transducers is \emph{edit distances}. Loosely speaking, \emph{edits} are operations that transform words, such as \emph{inserting a letter}, \emph{deleting a letter}, \emph{substitutions (letter-to-letter)}, \emph{adjacent transpositions} (swapping adjacent letters), \emph{left and right shifts} etc. 
For a fixed set of edit operations $C$, the edit distance with respect to $C$ between words $u$ and $v$, is the minimum number of edits in $C$ required to transform $u$ to $v$ if it is possible, and $\infty$ otherwise.   
The common edit distances and their corresponding operations are recalled in Table~\ref{table:editdistances}.
\begin{table}[t]
\begin{tabular}{| l|c|p{0.5\textwidth}| } 
 \hline
 Edit Distance & Denotation & Allowed Operations \\ 
\hline
Hamming distance & $d_h$ & letter-to-letter substitutions \\
Transposition distance & $d_t$ & swapping adjacent letters \\
Conjugacy distance & $d_c$ & left and right cyclic shifts \\ 
Levenshtein edit distance & $d_l$ & insertions, deletions, and substitutions \\ 
Longest Common Subsequence & $\dlcs$ & insertions and deletions \\
Damerau-Levenshtein distance  & $\ddl$ & insertions, deletions, substitutions and adjacent transpositions\\
 \hline
\end{tabular}
\caption{Edit Distances}
\label{table:editdistances}
\end{table}
 Since many of these operations are obtained by combinations of the others, we can relate these metrics. The notation $d_1 \leq d_2$ is an abbreviation for $d_1(u,v) \leq d_2(u,v)$ for all words $u,v$.
We can also relate the metrics up to boundedness (See \cite{colcombet2013regular} for a detailed introduction). Let $\alpha:\N \rightarrow \N$ be a \emph{correction} function. Usual examples are increments (e.g.~$x\mapsto x+2$), scaling (e.g.~$x \mapsto 2\cdot x$) etc. We extend $\alpha$ to the domain $\N\cup \{\infty\}$ by letting $\alpha(\infty)=\infty$.  
We write $d_1 \lesssim d_2$ to mean that there is some $\alpha$ such that $d_1 \leq \alpha \circ d_2$. Clearly, if $d_1 \leq d_2$ then $d_1 \lesssim d_2$. If $d_1 \lesssim d_2$ and $d_2 \lesssim d_1$, we write $d_1 \approx d_2$ (this is known as the cost equivalence or the boundedness equivalence). If two functions $f$ and $g$ are cost-equivalent then $f$ and $g$ are bounded over precisely the same family of subsets (See Proposition 1 of \cite{colcombet2013regular}).   

\begin{lemma}
\label{lemma:metricrelation} The metrics defined in Table~\ref{table:editdistances} are related as follows: 
\begin{enumerate}
\item\label{Len:1} $\dlen \leq d \leq d_\infty$, for each edit distance metric $d\in \{d_l,d_h,d_t,d_c,\dlcs,\ddl\}$
\item $d_l \approx \dlcs \approx \ddl$
\item $d_l\leq d_h \lesssim d_t$
\item $d_l \lesssim d_c$
\item $d_c$ and $d_t$ as well as $d_c$ and $d_h$ are incomparable, i.e., $d_h \not \lesssim d_c, d_c \not \lesssim d_h$ and $d_t \not \lesssim d_c, d_t \not \lesssim d_c$
\end{enumerate}
\end{lemma}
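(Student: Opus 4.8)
The plan is to reduce all five items to two elementary principles and a short list of witness families. The first principle is monotonicity under operation sets: if every edit operation defining a distance $d$ is also available to a distance $d'$, then $d' \le d$, because any optimal $d$-transformation is a legal $d'$-transformation. This immediately yields the easy directions: Levenshtein is richer than Hamming and than the longest-common-subsequence edits, giving $d_l \le d_h$ and $d_l \le \dlcs$; and Damerau--Levenshtein is richer than Levenshtein, giving $\ddl \le d_l$. The second principle is a \emph{simulate-and-sum} argument for the $\lesssim$ relations: to prove $d_1 \lesssim d_2$ it suffices to bound by a constant $c$ the per-operation $d_1$-cost of each $d_2$-operation, since along an optimal $d_2$-sequence $u = w_0 \to \cdots \to w_k = v$ with $k = d_2(u,v)$, the triangle inequality for the metric $d_1$ gives $d_1(u,v) \le \sum_{i=1}^{k} d_1(w_{i-1},w_i) \le c\,k = c\,d_2(u,v)$, so $\alpha(x)=cx$ works, and $\alpha(\infty)=\infty$ covers the case $d_2=\infty$.

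Instantiating the constants settles items 2, 3 and 4 uniformly. A substitution costs at most $2$ in $\dlcs$ (delete then insert), so $\dlcs \le 2\,d_l$; combined with $d_l \le \dlcs$ this gives $d_l \approx \dlcs$. A transposition costs at most $2$ in $d_l$ (delete the moved letter and reinsert it), so $d_l \le 2\,\ddl$; with $\ddl \le d_l$ this gives $d_l \approx \ddl$, and by transitivity of $\approx$ we obtain item 2. A transposition costs at most $2$ in $d_h$ (two substitutions on the swapped window), so $d_h \le 2\,d_t$, i.e.\ $d_h \lesssim d_t$, which together with $d_l \le d_h$ is item 3. Finally a single cyclic shift costs at most $2$ in $d_l$ (delete the shifted letter, reinsert it at the opposite end), so $d_l \le 2\,d_c$, giving item 4. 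For item 1, the upper bound $d \le d_\infty$ is immediate since $d_\infty$ is $0$ on equal words and $\infty$ otherwise; the lower bound $\dlen \le d$ holds because each edit changes the length by at most one (so at least $\bigl||u|-|v|\bigr|$ edits are needed when $d$ is finite), while for the length-preserving distances $d_h,d_t,d_c$ unequal lengths force $d=\infty \ge \dlen$.

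Item 5 is where the real work lies, and the transposition bound is the step I expect to be the main obstacle. The directions $d_c \not\lesssim d_h$ and $d_c \not\lesssim d_t$ are cheap: a single pair with finite $d_h$ (resp.\ $d_t$) but $d_c = \infty$ already defeats every correction $\alpha$, since $\infty \le \alpha(\text{finite})$ is impossible. I would take $u=aa,\ v=ab$ (Hamming distance $1$, non-conjugate) and $u=aabb,\ v=abab$ (one adjacent transposition, equal multiset but non-conjugate). The harder directions $d_h \not\lesssim d_c$ and $d_t \not\lesssim d_c$ use the conjugate family $u_n=(ab)^n$, $v_n=(ba)^n$: one left cyclic shift turns $u_n$ into $v_n$, so $d_c(u_n,v_n)=1$ is bounded, while $u_n$ and $v_n$ disagree in all $2n$ positions, giving $d_h(u_n,v_n)=2n \to \infty$ and hence $d_h \not\lesssim d_c$.

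The delicate point is the transposition lower bound $d_t(u_n,v_n)\to\infty$. I would compute it by inversion counting: matching the $i$-th occurrence of $a$ (at position $2i-1$ in $u_n$ and $2i$ in $v_n$) and symmetrically the $i$-th occurrence of $b$, the induced permutation is a product of $n$ disjoint adjacent transpositions, which costs exactly $n$ adjacent swaps. Because letters repeat, several matchings are possible, so the care-requiring part is showing that matching equal letters in order is optimal and that no matching beats $n$; once this is established, $d_t(u_n,v_n)=n\to\infty$ yields $d_t \not\lesssim d_c$ and completes the lemma.
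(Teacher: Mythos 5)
Your proof is correct and follows essentially the same route as the paper: items 1--4 via per-operation simulation with the same constants (substitution $\leftrightarrow$ insertion plus deletion, transposition $\leftrightarrow$ two substitutions, cyclic shift $\leftrightarrow$ insertion plus deletion), and item 5 via witness families, with $((ab)^n,(ba)^n)$ used on both sides for the direction $d_h, d_t \not\lesssim d_c$. One substantive remark: the step you single out as the main obstacle --- the lower bound $d_t((ab)^n,(ba)^n) \to \infty$ --- does not need any inversion-counting or optimal-matching argument. By your own item 3 you have $d_h \leq 2\,d_t$, and $d_h((ab)^n,(ba)^n) = 2n$ since the words disagree in every position, so $d_t((ab)^n,(ba)^n) \geq n$ immediately; this is presumably why the paper states the divergence without further proof. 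A second, minor difference: for $d_c \not\lesssim d_h$ and $d_c \not\lesssim d_t$ you use single pairs ($aa$ vs.\ $ab$, and $aabb$ vs.\ $abab$), which suffices because any correction function $\alpha$ takes a finite value on a finite argument, whereas the paper uses the family $\{(10^k1,\,010^{k-1}1) \mid k \geq 1\}$; your choice is equally valid and even slightly cleaner, since the paper's pair at $k=1$, namely $(101,011)$, is in fact conjugate (the family works for $k \geq 2$).
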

\begin{proof}
\begin{enumerate}
\item From definition of the metrics.
\item Clearly $d_l \leq \dlcs$. Since one substitution can be achieved by an insertion and deletion, $\dlcs \leq 2\cdot d_l$. Hence $d_l \approx \dlcs$. Similarly $\ddl \leq d_l$ and since a transposition is equivalent to two substitutions $d_l \leq 2 \cdot \ddl$. Therefore $d_l \approx \ddl$.
\item $d_l \leq d_h$ is obvious. Since a transposition is equivalent to two substitutions $d_h \leq 2 \cdot d_t$.
\item Since a cyclic shift is achieved by an insertion and a deletion, $d_l \leq 2\cdot d_c$. 
 \item For the last statement, consider the family of words $\{((01)^k,(10)^k) \mid k \geq 0 \}$, $d_h,d_t \rightarrow \infty$, but $d_c$ is $1$. Conversely, for the family of words $\{(10^k1,010^{k-1}1) \mid k \geq 1 \}$, $d_t = 1, d_h=2$, but $d_c$ is $\infty$.
\end{enumerate}
\end{proof}
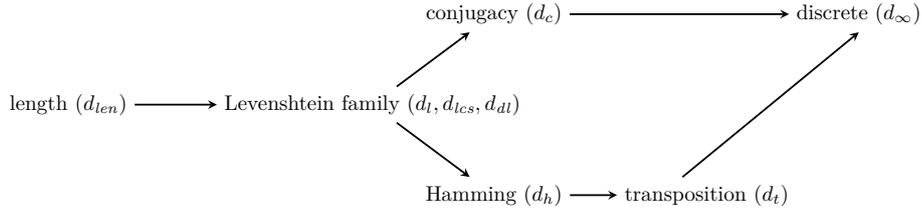
\begin{figure}
\centering\scalebox{.8}{
\begin{tikzpicture}[node distance=1cm]
\node (A) at (-4, 0) {length ($\dlen$)};
\node (B) at (1, 0) {Levenshtein family ($d_l, \dlcs, \ddl$)};
\node (C) at (3, 1.5) {conjugacy ($d_c$)};
\node (D) at (3, -1.5) {Hamming ($d_h$)};
\node (E) at (6.5, -1.5) {transposition ($d_t$)};
\node (F) at (9, 1.5) {discrete ($d_\infty$)};
\draw[-][-stealth][line width=0.3mm](6.1, -1.2) -- (9,1.2);
\draw[-][-stealth][line width=0.3mm]
  (A) edge (B) (B) edge (C) (B) edge (D) (D) edge (E)  (C) edge (F);

\end{tikzpicture}}
\caption{The Boundedness preorder of edit distances}
\label{preorder}
\end{figure}
\section{Distance between Transductions}
\label{sec:3}

In this section, we define the notion of distance between two rational functions, the diameter of a rational relation, and the index of a rational relation in another. We establish the relation between these notions and state our results. 

\subsection{Comparing Transducers}
\label{Subsec:comparing}
We lift a metric over words to the class of word-to-word functions as follows.

\begin{definition}[Metric on  transductions]
\label{defn:transmetric}
Let $d$ be a metric on words over the alphabet $B$.  Given two  partial functions $\calT, \calS : A^\ast \to B^\ast$, we define
$$d(\calT,\calS) = \begin{cases} \sup \left \{\,  d(\calT(w), \calS(w)) \,\mid\,  w \in \dom(\calT) \right \} & \text{ if $\dom(\calT) = \dom(\calS)$} \\
 \infty & \text{ otherwise } 
 \end{cases}$$
\end{definition}

\begin{proposition}
 $d$ is a metric on  transductions.
\end{proposition}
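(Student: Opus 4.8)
The plan is to verify the three metric axioms — separation, symmetry, and the triangle inequality — for the lifted distance $d(\calT,\calS)$, using the fact that $d$ is already a metric on words over $B$. In each case the argument splits according to whether the domains coincide, since the lifted distance is defined by a supremum only when $\dom(\calT)=\dom(\calS)$ and is $\infty$ otherwise.

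For \emph{symmetry}, I would first observe that the condition $\dom(\calT)=\dom(\calS)$ is itself symmetric in $\calT$ and $\calS$, so the two functions fall into the same case. In the domain-equal case, $d(\calT,\calS)=\sup\{d(\calT(w),\calS(w)) \mid w\in\dom(\calT)\}$, and since $d(\calT(w),\calS(w))=d(\calS(w),\calT(w))$ for every $w$ by symmetry of the word metric, the two suprema are taken over identical sets of values; in the domain-unequal case both sides equal $\infty$. For \emph{separation}, I would argue that $d(\calT,\calS)=0$ forces the domains to be equal (otherwise the value is $\infty\neq 0$) and then forces $d(\calT(w),\calS(w))=0$ for every $w$ in the common domain, because a supremum of nonnegative values is zero exactly when every value is zero. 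By separation of the word metric this gives $\calT(w)=\calS(w)$ for all $w\in\dom(\calT)=\dom(\calS)$, i.e.\ $\calT=\calS$ as partial functions. The converse, that $\calT=\calS$ implies $d(\calT,\calS)=0$, is immediate.

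The \emph{triangle inequality} is the step I expect to require the most care, because of the interaction between the domain-matching condition and the supremum, and because one factor can be $\infty$. Given $\calT,\calS,\calR$, I would check $d(\calT,\calS)\le d(\calT,\calR)+d(\calR,\calS)$ by cases. If any of the three pairwise domains disagree, I would show the right-hand side is $\infty$: concretely, if $\dom(\calT)\ne\dom(\calS)$ then $\dom(\calR)$ cannot equal both $\dom(\calT)$ and $\dom(\calS)$, so at least one of $d(\calT,\calR)$, $d(\calR,\calS)$ is $\infty$, making the inequality trivial. When all three domains coincide, say to a common set $D$, then for each fixed $w\in D$ the word-metric triangle inequality gives $d(\calT(w),\calS(w))\le d(\calT(w),\calR(w))+d(\calR(w),\calS(w))$, and each summand on the right is bounded above by the respective supremum; taking the supremum over $w\in D$ on the left preserves the inequality (the supremum of a pointwise-bounded family is bounded by the bound). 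This handles the inequality even when one or both of the right-hand suprema are $\infty$, since then the bound is vacuously satisfied.

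The only genuine subtlety is the supremum bookkeeping in the triangle step — ensuring that passing to the supremum on the left, while bounding each of the two right-hand terms by its own supremum separately, is valid even in the extended reals $[0,\infty]$ where the intermediate function $\calR$ may disagree with both $\calT$ and $\calS$ to an unbounded extent. This is routine once one writes $\sup_w\bigl(a(w)+b(w)\bigr)\le \sup_w a(w)+\sup_w b(w)$ explicitly for nonnegative extended-real-valued functions, so no real obstacle remains.
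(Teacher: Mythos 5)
Your proof is correct and follows essentially the same route as the paper's: separation and symmetry follow directly from the word metric, and the triangle inequality is proved by a case split on domains together with the pointwise triangle inequality and the bound $\sup_w\bigl(a(w)+b(w)\bigr)\le \sup_w a(w)+\sup_w b(w)$ for nonnegative extended-real-valued functions. If anything, you are slightly more careful than the paper, which jumps from ``$\dom(\calT)=\dom(\calS)$'' to assuming all three domains coincide, whereas you explicitly note that a mismatch involving the intermediate function makes the right-hand side $\infty$ and the inequality trivial.
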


\begin{proof}
Clearly, since $d$ is a metric on words, $d(\calT,\calS)=0$ if and only if $\calT$ and $\calS$ are the same transductions, and $d(\calT,\calS) = d(\calS, \calT)$  for transductions $\calT$ and $\calS$. Let $\calT,\calS,\calF$ be three transductions with the output alphabet $B$. It remains to show that $d(\calT,\calS) \leq d(\calT,\calF) + d(\calF,\calS)$.
  
Assume the domains of $\calT$ and $\calS$ are different. Then, either $\dom(\calT) \neq \dom(\calF)$ or $\dom(\calF) \neq \dom(\calS)$. In both cases, $d(\calT,\calS)$ and 
$d(\calT,\calF) + d(\calF,\calS)$ are $\infty$. Therefore, assume that the domains of $\calT,\calS$ and $\calF$ are the same, call it $L$. Since for each word $w$ in $L$, $d(\calT(w),\calS(w)) \leq d(\calT(w),\calF(w)) + d(\calF(w),\calS(w))$ by virtue of $d$ being a metric, it follows that 
\setlength{\arraycolsep}{0.0em}
\begin{eqnarray*}
d(\calT,\calS) && =\sup\left\{\,d(\calT(w),\calS(w)) \,\mid\, w \in L\,\right\} \\
               && \leq \sup\left\{\,d(\calT(w),\calF(w)) + d(\calF(w),\calS(w))  \,\mid\, w \in L\,\right\}\\
                                          &&\leq \sup\left\{d(\calT(w),\calF(w)) \,\mid\, w \in L \,\right\} + \sup \left\{\, d(\calF(w),\calS(w))  \,\mid\, w \in L \,\right\} \\
                                          &&=  d(\calT,\calF) + d(\calF,\calS)~. \hspace{9cm}\qedhere
\end{eqnarray*}\setlength{\arraycolsep}{5pt}
\end{proof}

\begin{remark}
We can define a notion of distance between word-to-word relations in the above manner, however this distance will not be a metric. In particular $d(R,R)$ will not be $0$ for a relation $R$ that is not a (partial) function.
\end{remark}	

\begin{example}
Consider the sequential transducers $\calT_1$ and $\calT_2$ in \Cref{UnboundedFig}.
The transducers $\calT_1$ and  $\calT_2$ output the letters at the odd and even positions respectively. For any input word $u$, $||\calT_1(u)| - |\calT_2(u)|| \leq 1$. Hence $\dlen(\calT_1,\calT_2)= 1$. For input word $(ab)^n$ where $n > 1$, the outputs produced by $\calT_1$ and $\calT_2$ are $a^n$ and $b^n$ respectively. Since $n$ substitutions are required to convert $a^n$ to $b^n$, $d_l(a^n,b^n) = n$. Therefore, $d_h(\calT_1,\calT_2)=\infty$ as well as $d_l(\calT_1,\calT_2)=\infty$.
\end{example}
 
\begin{example} 
The sequential transducer $\calT_4$ in \Cref{Bounded} replaces each block of $0$'s by a single $0$ and each block of $1$'s by a single $1$. Similarly, $\calT_5$ substitutes a block of $0$'s by a single $1$ and a block of $1$'s by a single $0$. The output words produced by the transducers on any input word is an alternate sequence of $0$'s and $1$'s. If $\calT_4$ outputs $010$, then $\calT_5$ produces its complement, i.e., $101$. The Hamming distance between $\calT_4$ and $\calT_5$ is $\infty$, but the Levenshtein distance is $2$.

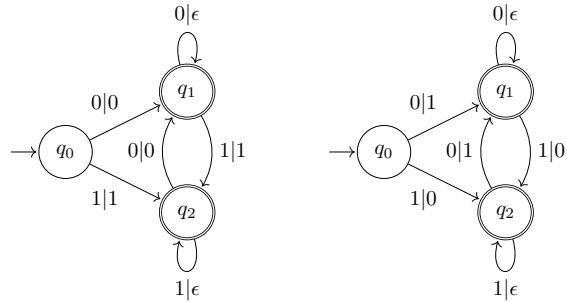
\begin{figure}[htbp]
\centering
\scalebox{.8}{
\begin{tikzpicture}[shorten >=1pt,node distance=3cm,on grid,auto] 
   \node[state,initial,initial text=] at (0,0) (q_0)   {$q_0$}; 
   \node[state,accepting] at (2,1) (q_1) {$q_1$};
   \node[state,accepting] at (2,-1) (q_2) {$q_2$};
    \path[->] 
    (q_0) edge [above left] node {$0|0$} (q_1)
             edge [below left] node {$1|1$} (q_2)
    (q_1) edge [bend left] node  {$1|1$} (q_2)
            edge [loop above] node {$0|\epsilon$} ()
    (q_2) edge [bend left] node  {$0|0$} (q_1)
            edge [loop below] node {$1|\epsilon$} ();
\end{tikzpicture}\hspace{1cm}\begin{tikzpicture}[shorten >=1pt,node distance=3cm,on grid,auto] 
   \node[state,initial,initial text=] at (0,0) (q_0)   {$q_0$}; 
   \node[state,accepting] at (2,1) (q_1) {$q_1$};
   \node[state,accepting] at (2,-1) (q_2) {$q_2$};
    \path[->] 
    (q_0) edge [above left] node {$0|1$} (q_1)
             edge [below left] node {$1|0$} (q_2)
    (q_1) edge [bend left] node  {$1|0$} (q_2)
            edge [loop above] node {$0|\epsilon$} ()
    (q_2) edge [bend left] node  {$0|1$} (q_1)
            edge [loop below] node {$1|\epsilon$} ();
\end{tikzpicture}
}
\caption{$\calT_4$ (left) outputs $0$ \& $1$ for each block of $0$'s \& $1$'s \textit{resp.}~whereas $\calT_5$ (right) outputs $1$ \& $0$ for each block of $0$'s \& $1$'s \textit{resp.}}
\label{Bounded}
\end{figure}
\end{example}

Let $d$ be a distance on words. The value $d(\calT,\calS)$ is an upper bound on how dissimilar the outputs of transducers $\calT$ and $\calS$ can be on any input. It is natural to ask the computational and boundedness problems given in Table~\ref{table:distance}. 
\begin{table}[t]
\begin{tabular}{|p{0.3\textwidth}|p{0.3\textwidth}|p{0.3\textwidth}|} 
 \hline
 Problem & Input & Question \\ 
\hline
Distance Problem &transducers $\calT,\calS$ & $d(\calT,\calS)$? \\
Closeness Problem &transducers $\calT,\calS$ & Is $d(\calT,\calS) < \infty$? \\
$k$-Closeness Problem &integer $k$, transducers $\calT,\calS$ & Is $d(\calT,\calS) \leq k$?  \\ 
 \hline
\end{tabular}
\caption{Problems about distance between two transducers w.r.t.~the metric $d$}
\label{table:distance}
\end{table}

Closeness and $k$-closeness are respectively a boundedness and an upper bound problem on distance. 
\begin{proposition}\label{prop:computedistance}
Let $d$ be an integer-valued metric. The distance problem w.r.t.~$d$ is computable if and only if $k$-closeness and closeness problems w.r.t.~$d$ are decidable.
\end{proposition}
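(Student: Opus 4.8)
The plan is to prove the two directions of the equivalence separately, with the backward direction carrying the substance.

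For the forward direction (distance computable $\Rightarrow$ closeness and $k$-closeness decidable), I would simply observe that an algorithm computing the value $d(\calT,\calS) \in \N \cup \{\infty\}$ immediately yields decision procedures for both problems: to decide closeness, compute the value and test whether it equals $\infty$; to decide $k$-closeness, compute the value and test whether it is at most $k$, treating $\infty$ as exceeding every $k$. Note that this direction uses nothing about $d$ beyond its having range $\N \cup \{\infty\}$, and in fact would go through for any $[0,\infty]$-valued metric.

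For the backward direction (closeness and $k$-closeness decidable $\Rightarrow$ distance computable), I would first run the closeness decision procedure on the pair $\calT,\calS$. If it reports $d(\calT,\calS) = \infty$, the algorithm outputs $\infty$ and halts. Otherwise $d(\calT,\calS)$ is guaranteed to be a finite nonnegative integer, and this is precisely where integer-valuedness is used: I would then perform a linear search, querying the $k$-closeness oracle for $k = 0, 1, 2, \ldots$ in turn, and output the first $k$ for which the answer is ``yes''.

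The correctness of the search rests on the monotonicity of the predicate ``$d(\calT,\calS) \leq k$'' in $k$ together with the fact that $d(\calT,\calS)$ is an integer: the answer to $k$-closeness is ``no'' for every $k < d(\calT,\calS)$ and ``yes'' for every $k \geq d(\calT,\calS)$, so the least $k$ returning ``yes'' is exactly $d(\calT,\calS)$. Termination is guaranteed because the closeness step has already certified finiteness, so some $k$ in the search succeeds. The only point deserving care --- and the closest thing to an obstacle in an otherwise routine argument --- is that integer-valuedness cannot be dropped: for a metric with range $\mathbb{R}_{\geq 0} \cup \{\infty\}$, the integer thresholds probed by $k$-closeness would not pin down a possibly non-integer exact value, and the reduction from the distance problem to the two decision problems would break.
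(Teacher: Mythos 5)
Your proof is correct and takes essentially the same route as the paper: both directions are handled identically, with the backward direction first invoking the closeness decision procedure to certify finiteness and then searching for the exact value via the $k$-closeness oracle, relying on the monotonicity of ``$d(\calT,\calS)\leq k$'' in $k$ and on integer-valuedness. The only difference is that the paper finds the distance by exponential search followed by binary search (using $O(\log d(\calT,\calS))$ oracle calls) whereas you use a linear search; this affects only efficiency, not computability, which is all the proposition claims.
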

\begin{proof}
Clearly, if we can compute the distance w.r.t.~$d$ then we can decide $k$-closeness as well as closeness. For the other direction, given two transducers, we first check if they are close and if it is we perform an exponential search --- check if they are $k$-close for $k=2^0, 2^1, 2^2,\ldots$ till it fails and subsequently perform a binary search on the interval $[2^n, 2^{n+1}]$, $n\in \N$ that contains the distance. 
\end{proof}
We say two transducers $\calT$ and $\calS$ are \emph{close} (\textit{resp}.~\emph{$k$-close}, for $k \geq 0$) w.r.t.~$d$ if $d(\calT,\calS) < \infty$ (\textit{resp}.~$d(\calT,\calS) \leq k$).
Closeness with respect to the discrete metric $d_\infty$ is precisely the {equivalence problem}. 
In the case of edit distances, closeness means that the output of $\calT_1$ can be converted to the output of $\calT_2$ by doing a bounded number of edits.
\begin{remark}\label{remark:metricrelation}
From \Cref{defn:transmetric}, it is easy to verify that \Cref{lemma:metricrelation} holds for transducers as well. If $d_1 \lesssim d_2$, then it is easy to see that if transducers $\calT_1$ and $\calT_2$ are \emph{not} close w.r.t.~$d_1$, then they are not close w.r.t.~$d_2$ either.
\end{remark}
The problems in Table~\ref{table:distance} for unambiguous transducers with identical domains can be reduced to that for sequential transducers by considering the cartesian product of the unambiguous transducers.  Given two unambiguous transducers $\calT_1$ and $\calT_2$, we obtain the sequential transducers $\calT'_1$ and $\calT'_2$ as follows. The  input automata for $\calT'_1$ and $\calT'_2$ are the same, call it $\calA$, which is the cartesian product of the input automata of $\calT_1$ and $\calT_2$. 
By treating the transitions of the cartesian product as the input alphabet, we get input determinism. The output functions of $\calT'_1$ and $\calT'_2$ are lifted from $T_1$ and $T_2$ respectively.
\begin{proposition}
\label{Prop:rattoseq}
Let $d$ be a distance on words. For each pair of unambiguous transducers $\calT_1$ and $\calT_2$ with identical domain, there exist a DFA $\calA$ and output functions $\lambda_1',\omicron_1'$ and $\lambda_2',\omicron_2'$ such that  $d(\calT_1,\calT_2) = d(\calT_1',\calT_2')$ where the sequential transducer $\calT_i' = \langle \calA, \lambda_i', \omicron_i' \rangle$, $i\in \{1,2\}$. Furthermore, the size of the automaton $\calA$ is polynomial in the size of $\calT_1$ and $\calT_2$. 
\end{proposition}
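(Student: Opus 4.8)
The plan is to realise the informal construction sketched just above the statement and then verify that it preserves the entire family of output pairs, whence the supremum defining the distance. I would begin by forming the Cartesian product $\mathcal{P}$ of the underlying (nondeterministic but unambiguous) input automata of $\calT_1$ and $\calT_2$: its states are pairs $(p_1,p_2)$, its initial and accepting states are componentwise, and it has a transition $((p_1,p_2),a,(q_1,q_2))$ exactly when $\calT_i$ has a transition $(p_i,a,q_i)$ for each $i$. Because $\calT_1$ and $\calT_2$ are unambiguous and share the same domain $L$, every $w\in L$ has exactly one accepting run in each, hence exactly one accepting run in $\mathcal{P}$; conversely every accepting run of $\mathcal{P}$ reads off an input word lying in $L$. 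This yields a bijection between $L$ and the set of accepting runs of $\mathcal{P}$, which is the crux of the argument and is precisely where unambiguity and equality of domains are used.

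Next I would turn $\mathcal{P}$ into a genuine DFA $\calA$ by taking its transition set $\Delta_{\mathcal{P}}$ as the new input alphabet, adding a sink to absorb letters whose source does not match the current state. Reading a ``letter'' $\delta=(s,a,t)$ from state $s$ moves deterministically to $t$, and from any other state to the sink; the initial and accepting states are inherited from $\mathcal{P}$. By construction $L(\calA)$ is exactly the set of words over $\Delta_{\mathcal{P}}$ that spell accepting runs of $\mathcal{P}$, and since $\calA$ is shared by $\calT_1'$ and $\calT_2'$ we automatically get $\dom(\calT_1')=\dom(\calT_2')=L(\calA)$, so the domain-mismatch clause of \Cref{defn:transmetric} never triggers for the new transducers.

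I would then lift the outputs: for a product transition $\delta$ with $i$-th projection $\delta_i$, set $\lambda_i'(\delta)=\lambda_i(\delta_i)$, and for a product state $(q_1,q_2)$ set $\omicron_i'((q_1,q_2))=\omicron_i(q_i)$. If $w\in L$ corresponds under the bijection to the accepting run $r=\delta_1\cdots\delta_n$ of $\mathcal{P}$ (a word of $L(\calA)$), then concatenating the lifted outputs gives $\calT_i'(r)=\lambda_i(\rho_i)\,\omicron_i(q_i)=\calT_i(w)$, where $\rho_i$ is the run of $\calT_i$ on $w$ and $q_i$ its final state. Hence $(\calT_1'(r),\calT_2'(r))=(\calT_1(w),\calT_2(w))$, so the bijection identifies the two families $\{(\calT_1(w),\calT_2(w))\mid w\in L\}$ and $\{(\calT_1'(r),\calT_2'(r))\mid r\in L(\calA)\}$. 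Taking the supremum of $d$ over equal sets yields $d(\calT_1,\calT_2)=d(\calT_1',\calT_2')$, including the case where the common value is $\infty$.

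For the size bound, $\mathcal{P}$ has at most $|Q_1|\cdot|Q_2|$ states and at most $|\Delta_1|\cdot|\Delta_2|$ transitions, and $\calA$ only adds a sink while using $\Delta_{\mathcal{P}}$ as its alphabet, so everything is polynomial in the sizes of $\calT_1$ and $\calT_2$. The main obstacle is conceptual rather than computational: the construction changes the input alphabet from $A$ to $\Delta_{\mathcal{P}}$, so the new transducers are not defined on the original inputs at all. The whole argument therefore hinges on establishing the run-to-input bijection carefully and checking that output concatenation is invariant along it, so that this change of domain leaves the set of output pairs, and hence the supremum, unchanged.
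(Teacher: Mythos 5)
Your proposal is correct and follows essentially the same route as the paper: both constructions obtain determinism by re-encoding the input alphabet as (pairs of) transitions of the product of the two unambiguous automata, lift the output functions componentwise, and conclude via the run-to-word correspondence that the sets of output pairs, hence the suprema, coincide. The only cosmetic differences are your explicit two-step presentation (product NFA, then DFA over its transition alphabet) and the added sink state, which the paper omits by allowing a partial DFA.
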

\begin{proof}
Assume we are given two unambiguous transducers $\calT_1$ and $\calT_2$ with the input alphabet $A$ and the output alphabet $B$. Since they are unambiguous, they have at most one accepting run on any input word. The idea is to construct a deterministic automaton whose language is the set of pairs of successful runs of $\calT_1$ and $\calT_2$ on words in their domain.

For convenience, we can assume that $\calT_1$ and $\calT_2$ have exactly one initial state each, denoted by $i_1$ and $i_2$. 
Let $\Delta_i$, $Q_i$ and $F_i$, $i\in \{1,2\}$, be the transitions, the set of states, and the set of final states of $\calT_i$. 

The input alphabet of $\calA$ is going to be $A'=\Delta_1 \times \Delta_2$.  The deterministic automaton $\calA$ has the set of states $Q_1\times Q_2$, initial state $(i_1,i_2)$, set of final states $F_1\times F_2$, and the set of transitions (denoted as $\Delta$) of the form $((p,q), (\delta,\chi), (p',q'))$ such that $\delta \in \Delta_1,\chi \in \Delta_2$ are of the form $(p,a,p')$ and
$(q,a,q')$ for some letter $a\in A$. It is easy to verify that the language accepted by $\calA$,  denoted as $L' \subseteq (\Delta_1 \times \Delta_2)^*$,  is the language
of all sequences of pairs of the form $\rho=(\delta_1,\chi_1) \cdots (\delta_k,\chi_k), k\geq 1$ and $\delta_i \in \Delta_1$, $\chi_i \in \Delta_2$ for each $1 \leq i\leq k$, such that
\begin{enumerate}
\item $\delta_i$ and $\chi_i$ are transitions on the same letter,
\item $\delta_1 \cdots \delta_k$ (denoted as $\pi_1(\rho)$) is a successful run of $\calT_1$,
\item $\chi_1 \cdots \chi_k$ (denoted as $\pi_2(\rho)$) is a successful run of $\calT_2$.
\end{enumerate}

Let $\lambda_i:\Delta_i \rightarrow B^*, \omicron_i:F_i \rightarrow B^*, i\in \{1,2\}$, be the output functions of $\calT_i$. We define $\lambda_1':\Delta \rightarrow B^*$ and $\lambda_2':\Delta \rightarrow B^*$ as follows: If $\Delta \ni \psi = ((p,q),(\delta,\chi)(p',q'))$, we let $\lambda_1'(\psi) = \lambda_1(\delta)$ and $\lambda_2'(\psi) = \lambda_2(\chi)$. Similarly, we define $\omicron_1',\omicron_2':F_1\times F_2 \rightarrow B^*$ as $\omicron_1'(p,q) = \omicron_1(p)$ and $\omicron_2'(p,q) = \omicron_2(q)$, for each $(p,q) \in F_1\times F_2$. 

We take $\calT_1'$ and $\calT_2'$ to be the automaton $\calA$ with the output functions $\lambda_1',\omicron_1'$ and $\lambda_2'$, $\omicron_2'$ respectively. Clearly, they are of polynomial size w.r.t.~$\calT_1$ and $\calT_2$. It is easy to show that there is a correspondence ($\mapsto$) between $L$ and $L'$, namely $L \ni w \mapsto \rho \in L'$ such that $\pi_1(\rho)$ and $\pi_2(\rho)$ are the unique successful runs of $\calT_1$ and $\calT_2$ on $w$ respectively. Moreover, if $w\mapsto \rho$ then $\calT_1(w) = \calT_1'(\rho)$ and $\calT_2(w) = \calT_2'(\rho)$ ($\star$).

It remains to show that $d(\calT_1,\calT_2) = d(\calT_1',\calT_2')$. Using ($\star$) above, we observe that $d(\calT_1(w),\calT_2(w)) = d(\calT_1'(\rho),\calT_2'(\rho))$ such that $w\in L, \rho \in L'$ and $w \mapsto \rho$. Since $\mapsto$ is a correspondence, we conclude that  $d(\calT_1,\calT_2) = d(\calT_1',\calT_2')$. 
\end{proof}

Using the above result, the closeness w.r.t.~the length metric $\dlen$ can be characterised in terms of delay as follows.
\begin{proposition}\label{Prop:finitedelay}
Consider the sequential transducers $\calT_1, \calT_2$ defined by the DFA $\calA$ and output functions $\lambda_1,\omicron_1$ and $\lambda_2,\omicron_2$ respectively. The following are equivalent.
\begin{enumerate}
\item $\dlen(\calT_1,\calT_2)$ is finite.
\item There is $k\in \N$ such that on any $w\in L(\calA)$, the difference in lengths of the partial outputs of $\calT_1, \calT_2$ on any prefix of $w$ is bounded by $k$.
\end{enumerate}
\end{proposition}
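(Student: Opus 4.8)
The plan is to prove the two implications separately, with the reverse direction $(2)\Rightarrow(1)$ routine and the forward direction $(1)\Rightarrow(2)$ requiring a pumping argument on the shared automaton $\calA$. Throughout, for a prefix $u$ of some word in $L(\calA)$, I write $\rho_u$ for the run of $\calA$ on $u$, which is unique since $\calA$ is deterministic, and I define the \emph{signed delay} $D(u) = |\lambda_1(\rho_u)| - |\lambda_2(\rho_u)|$. Note that $\rho_u$ is a prefix of the run on any word extending $u$, so the partial outputs in item (2) are exactly the $\lambda_i(\rho_u)$, and item (2) asserts precisely that $|D(u)|$ is bounded uniformly over all such $u$.

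For $(2)\Rightarrow(1)$ I would argue directly. A full word $w\in L(\calA)$ is a prefix of itself, so $|D(w)|=\big||\lambda_1(\rho_w)|-|\lambda_2(\rho_w)|\big|\le k$. The actual outputs differ from the partial outputs only by the end-of-input words, i.e.\ $|\calT_i(w)| = |\lambda_i(\rho_w)| + |\omicron_i(q)|$ where $q$ is the final state reached. Since there are finitely many states, $M := \max_q |\omicron_i(q)|$ is a constant, whence $\dlen(\calT_1,\calT_2)\le k + 2M < \infty$.

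For $(1)\Rightarrow(2)$ I would argue the contrapositive through an intermediate notion I will call a \emph{productive loop}: a cycle of $\calA$ all of whose states are \emph{useful} (reachable from the initial state and co-reachable to a final state) and whose two output words under $\lambda_1$ and $\lambda_2$ have different lengths. The first claim is that if no productive loop exists, then $|D(u)|$ is bounded. Indeed, every state on $\rho_u$ is useful, since it is reachable and it reaches the last state of $\rho_u$, which extends to an accepting run. So I can repeatedly excise cycles from $\rho_u$; each excised cycle runs through useful states and hence, by assumption, has equal-length outputs and contributes $0$ to the signed delay, so $D(u)$ is preserved. What remains is a simple path, whose delay is bounded by $(n-1)\cdot C$, where $n=|Q_1\times Q_2|$ and $C$ bounds the output length of a single transition. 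This yields condition (2). Contrapositively, if (2) fails then $|D(u)|$ is unbounded, and choosing a prefix with $|D(u)| > (n-1)C$ forces $\rho_u$ to contain a productive loop $\gamma$, say at a useful state $q$.

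Finally I would pump this productive loop to contradict (1). Taking the portion $\alpha$ of $\rho_u$ from the initial state to $q$, the loop $\gamma$ from $q$ to $q$ with length difference $c\ne 0$, and a path $\beta$ from $q$ to a final state (which exists by co-reachability of $q$), the run $\alpha\gamma^m\beta$ spells an accepted word whose signed delay equals $D_\alpha + m\cdot c + D_\beta$. Adding the bounded end-of-input contribution $2M$, the quantity $\big||\calT_1(\cdot)|-|\calT_2(\cdot)|\big|$ on this family of words grows linearly in $m$, so $\dlen(\calT_1,\calT_2)=\infty$, contradicting (1). The main obstacle is exactly this forward direction: a prefix $u$ need not itself lie in $L(\calA)$, so the finiteness hypothesis (1) cannot be applied to $u$ directly, and the real work lies in manufacturing an accepted witness via co-reachability together with the cycle-excision and pumping argument.
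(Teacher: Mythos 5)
Your proof is correct and follows essentially the same route as the paper's: the direction $(2)\Rightarrow(1)$ is immediate, and $(1)\Rightarrow(2)$ is established by locating a loop whose two outputs have unequal lengths and pumping it to contradict finiteness of $\dlen(\calT_1,\calT_2)$. If anything, your write-up is more careful than the paper's on two points it leaves implicit --- the explicit cycle-excision argument showing that a large delay forces such a loop, and the bounded contribution of the end-of-input words $\omicron_i$ --- though note the minor slip that $n$ should simply be the number of states of the given DFA $\calA$, not of a product $Q_1\times Q_2$.
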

\begin{proof}
$(2) \implies (1)$ is obvious. 
For $(1) \implies (2)$, WLOG assume that automaton $\calA$ is trimmed, i.e., all states are accessible (reachable from the initial state) and coaccessible (from each state there is a path to some final state). Let $\ell$ be the maximum difference in the output lengths on any single transition, and $n$ be the number of states of $\calA$.  We claim that, if the difference in output lengths of the transducers on any input is bounded, then $ k = n \ell$ validates our proposition.  Note that, the difference in the output lengths on any partial input $u$ is at most $ |u|\ell$ and we show that this does not exceed $k$. Suppose $|u|\ell > k = n \ell$, then $|u| > n$, and hence there is a state that is repeated on $u$ such that the repeated part has nonzero difference in the output lengths. That is, $u = u_1u_2u_3 $, with $u_2 \neq \epsilon$ such that the states reached after $u_1$ and that reached after $u_1u_2$ are the same, say $q$, and $|\lambda_1(\rho)| - |\lambda_2(\rho)| \neq 0$, where $\rho$ is the unique partial run from $q$ on $u_2$. If $w = uv$ for some $w \in L(\calA)$ then $u_1u_2^iu_3v \in L(\calA)$ for all $i$, and the difference in the output lengths become unbounded with increasing $i$. This contradicts that $\dlen(\calT_1,\calT_2)$ is finite. Therefore, the difference in the partial outputs of $\calT_1, \calT_2$ on any prefix of $w$ is bounded by $k = n \ell$.
\end{proof}

\newcommand{\dist}[3]{f_{#1, #2}^{#3}}
\newcommand{\kdist}[4]{\lceil \dist{#1}{#2}{#3}\rceil^{\leq #4}}

Given two transductions $\calT$ and $\calS$ we define a distance function which maps each word $w$ to the  distance between their outputs on $w$.
\begin{definition}[Distance function]
	The \emph{distance function} $\dist \calT \calS d :A^*\rightarrow \N \cup \{\infty\}$ of $\calT$ and $\calS$ is  
	$$\dist \calT \calS d (w) = \begin{cases} d(\calT(w),\calS(w))
		& \text{ if $w\in \dom(\calT) \cap \dom(\calS)$} \\
		\infty & \text{ otherwise }
	\end{cases}$$
\end{definition}

Transducers $\calT$ and $\calS$ are close w.r.t.~a metric $d$ if their domains are the same and their distance function $\dist \calT \calS d$ is \emph{limited} (i.e.,$<\infty$ on its domain). Similarly $k$-closeness w.r.t.~$d$ of $\calT$ and $\calS$ reduces to $k$-limitedness of $\dist \calT \calS d$. Limitedness problems are well-studied in the context of weighted automata \cite{leung2004limitedness,colcombet2007factorisation}. Therefore, when the distance function $\dist \calT \calS d$ is computable by a $(\min, +)$-automaton, the distance between $\calT$ and $\calS$ is computable due to \Cref{prop:computedistance}.

However, there are distance functions that are not computable by weighted automata. Let $A=\{a,b\}$. Consider the sequential transducers $\calT_1,\calT_2:A^*\rightarrow A^*$ with the domain $a^*b^*$ defining the functions $a^pb^q\mapsto a^p$, $a^pb^q\mapsto a^q$ respectively ($\calT_1$ outputs the $a$'s and erases the $b$'s, $\calT_2$ erases $a$'s and renames the $b$'s as $a$'s). It is easily checked that their distance function w.r.t.~the Levenshtein family ($d \in \{d_l,\dlcs,\ddl \}$) is $\dist{\calT_1}{\calT_2}{d} : a^pb^q \mapsto |p-q|$.  

If $f:A^*\rightarrow \N\cup\{\infty\}$ is a function computed by weighted automata ($(\min, +)$ or $(\max,+)$ or $B$-automata \cite{colcombet2009theory}), then $L_{f\leq k} = \{ w\in A^* \mid f(w) \leq k\}$ is regular for each $k\leq \N$. Hence the function $\dist{\calT_1}{\calT_2}{d}$ is not realised by any of them (consider the language $L_{\dist{\calT_1}{\calT_2}{d}\leq k}$). In fact, it can be shown that the function $\dist{\calT_1}{\calT_2}{d}$ is not computed even upto boundedness \cite{costfunctions}. 

To compute $k$-closeness w.r.t.~any of the edit distances, it is not necessary to compute the distance function precisely.  The $k$-approximation of the distance function $\dist \calT \calS d$ is the function $\kdist \calT \calS d k : w \mapsto \dist \calT \calS d(w)$ if $\dist \calT \calS d(w) \leq k$ and $\infty$ otherwise. 
\begin{proposition}\label{prop:kapprox}
If $\calT$ and $\calS$ are close w.r.t.~the length metric, then the approximation $\kdist \calT \calS d k$ for a metric $d \in \{d_l,\dlcs,\ddl,d_h,d_t,d_c\}$ is computed by a distance automaton for each $k \in \N$.
\end{proposition}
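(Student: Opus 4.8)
The plan is to reduce to sequential transducers sharing a single automaton, and then construct an explicit $(\min,+)$-automaton (distance automaton) that guesses an optimal alignment of the two outputs while keeping only a bounded window of the two output streams in its finite memory. First I would apply \Cref{Prop:rattoseq} to assume $\calT$ and $\calS$ are sequential and share the same underlying DFA $\calA$; then on an input $w$ the unique accepting run is determined, and each of its transitions emits a pair of output words, $x_t$ on the $\calT$-side and $y_t$ on the $\calS$-side, so that $\calT(w)=U=x_1\cdots x_n$ and $\calS(w)=V=y_1\cdots y_n$ (up to the final outputs). The automaton I build reads $w$, deterministically follows this run, and is a distance automaton whose value equals $\dist \calT \calS d(w)$ when this is at most $k$ and is $\infty$ otherwise, i.e.\ it computes $\kdist \calT \calS d k$. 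Since $\calT$ and $\calS$ are close w.r.t.\ $\dlen$, \Cref{Prop:finitedelay} supplies a constant $N$ bounding, on every prefix of every run, the difference between the lengths of the two partial outputs.

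The heart of the argument is the boundedness observation that makes a finite state space suffice. Consider an alignment realising $d(U,V)$ for one of the local edit distances $d\in\{d_l,\dlcs,\ddl,d_h,d_t\}$ and suppose its cost is at most $k$. The only operations that change the \emph{offset} between the two alignment heads---the difference between the number of characters already consumed from $U$ and from $V$---are insertions and deletions, and a cost-$\leq k$ alignment uses at most $k$ of them, so the offset stays within $[-k,k]$ throughout (substitutions and adjacent transpositions consume equally from both sides and leave it unchanged). Consequently, whenever the alignment has fully consumed one of the two streams, the as-yet-unconsumed suffix of the other stream has length at most $N+k$: it is bounded by the current length difference, at most $N$, plus the offset, at most $k$ (with one extra constant of lookahead for $d_t$ to realise adjacent swaps). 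Thus the only a priori unbounded datum, the pending output, can be recorded as a signed word over $B$ of bounded length, of which there are finitely many.

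With this in hand I would take the states of the distance automaton to be triples $(q,\beta,c)$, where $q$ is a state of $\calA$, $\beta$ is the bounded signed pending buffer, and $c\in\{0,\dots,k\}$ is the cost accumulated so far; the initial state is $(q_0,\epsilon,0)$ and the accepting states are those with $q$ final and $\beta=\epsilon$. On reading the letter corresponding to $\psi_t$ the automaton appends $x_t$ and $y_t$ to the two sides of $\beta$ and then nondeterministically performs a sequence of metric-specific edit operations consuming buffer characters---matches and substitutions consume one symbol from each side, insertions and deletions one symbol from a single side, adjacent transpositions a bounded lookahead---updating $\beta$ and incrementing $c$ by the number of non-matching operations, which is also the weight of the transition. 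A branch is discarded whenever $c$ would exceed $k$ or $|\beta|$ would exceed its bound; since each chunk has bounded length, the weight of each transition is a bounded integer and can be normalised to $\{0,1\}$ by expanding it into a bounded path through fresh states. By the previous paragraph every cost-$\leq k$ alignment survives as some run, and every surviving accepting run is a genuine alignment, so the $(\min,+)$-value equals the true distance when it is at most $k$, and is $\infty$ (no surviving accepting run) otherwise, exactly as required. The per-metric operation sets---substitutions only with a length check for $d_h$, adjacent swaps with a permutation check for $d_t$, and the full insert/delete/substitute(/transpose) repertoire for the Levenshtein family---are simulated faithfully within a single transition.

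Conjugacy is the one case that does not fit this local-edit template, and I expect it to be the main obstacle: $d_c(U,V)\le k$ holds precisely when $V$ is a cyclic rotation of $U$ by some amount $r$ with $\min(r,|U|-r)\le k$, a global condition rather than a bounded sequence of local edits. I would handle it by a separate gadget that nondeterministically guesses the direction and the rotation amount $r\le k$, charges weight $r$ once, and then verifies $V=\mathrm{rot}_r(U)$ by exactly the same bounded-buffer streaming comparison, now comparing the two streams at a fixed offset $r$ (wrapping the leading $r$ symbols of $U$ around to meet the trailing symbols of $V$), so the buffer is again bounded by $N+k$; taking the minimum over the guesses yields $d_c$ whenever it is at most $k$. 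The crucial and least mechanical step throughout remains the offset bound of the second paragraph, since it is precisely what turns the unbounded pending output into a finite set of states and thereby licenses the whole construction.
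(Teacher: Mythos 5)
Your proposal is correct and follows essentially the same route as the paper: reduce via \Cref{Prop:rattoseq} to sequential transducers over a common DFA, use the delay bound of \Cref{Prop:finitedelay} together with the budget $k$ to bound the unmatched leftover buffer, build a $(\min,+)$-automaton whose states augment $\calA$ with that buffer and the remaining budget and whose transition weights count the edits performed, and treat conjugacy separately by storing a prefix of length at most $k$ and verifying the shifted match. Your explicit offset-bound argument and the normalisation of weights to $\{0,1\}$ are just slightly more detailed renderings of what the paper asserts directly.
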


\begin{proof}
Let $\calT_1$ and $\calT_2$ be two transducers. The domains of these transducers are identical since they are close w.r.t.~the length metric. Hence, we can construct $\calT_1 = \langle \calA, \lambda_1, \omicron_1 \rangle$ and $\calT_2 = \langle \calA, \lambda_2, \omicron_2 \rangle$ with domain $L$ (by virtue of \Cref{Prop:rattoseq}).
Since the transducers  $\calT_1$ and $\calT_2$ are close w.r.t.~the length metric, by virtue of \Cref{Prop:finitedelay}, there is a maximum delay $\partial \in \mathbb{N}$ between any partial outputs of $\calT$ and $\calS$ on any input.

For all metrics $d\in \{d_l,d_h,d_t,\dlcs,\ddl\}$, fix a set of edits $C$ supported in the metric.  We construct a distance automaton $\calA_{C,k}$ --- weighted automaton over the semiring $(\mathbb{N},\min,+, \infty,0)$ ---  that reads an input word $w$ and accepts it if we can perform at most $k$ edits in $C$ to $\calT_1(w)$ and obtain $\calT_2(w)$.  The cost of a run is the number of edits that it uses. For this the states of automaton $\calA_{C,k}$ are those of $\calA$ augmented with a budget $b$ and unmatched leftover word of the form  $(u, \epsilon)$ or $(\epsilon, u)$ with $|u|\leq \mathit{max}(\partial,k)$. Initially, the budget is $k$, and the word is $(\epsilon, \epsilon)$.  While performing a transition $\delta$ of $\calA$, from a state annotated with $b$ and $(\epsilon, u)$, the automaton $\calA_{C,k}$ nondeterministically performs $b'$ edits from $C$ so as to partially match $(\lambda_1(\delta), u\lambda_2(\delta))$. It then moves on to the target state with the reduced budget $b-b'$ and the updated leftover word.  The cost of the transition is the number of edits used in the transition, i.e., $b'$. It is easy to verify that $\calA_{C,k}$ exactly computes $\kdist{\calT_1}{\calT_2}{d}{k}$. 

For conjugacy distance, we construct a new distance automaton $\calA_k$ that remembers a prefix of length at most $k$ of the output of the first (or second) transducer and matches the shifted output with the remaining output of the transducer. The cost of a run is the length of the prefix, i.e., the number of cyclic shifts required to match the outputs.  
 For this, the states of automaton $\calA_k$ are those of $\calA$ augmented with a word of the form $(u, v)$ with $||u| - |v|| \leq \partial$, and  either $|u|\leq k$ or $|v| \leq k$ . Here, $u$ (\textit{resp.}~$v$) is a suitable candidate for the prefix of the output of the first (\textit{resp.}~second) transducer to be matched with the suffix of the output of the second (\textit{resp.}~first) transducer. Initially the word is $(\epsilon, \epsilon)$. While performing a transition $\delta$ of $\calA$, from a state annotated with $(u,v)$, the automaton $\calA_{k}$ either choose to perform step 1 (stores the output of first and second transducer of at most $k$ length), or until nondeterministically chooses to perform step 2 where it starts matching the output of first and second transducer and finally check if the stored output matches with the remaining output.
\begin{enumerate}
\item Go the the target state with updated word as $(u', v') = (u\lambda_1(\delta), v\lambda_2(\delta))$ only if $||u'| - |v'|| \leq \partial$, and $|u'| \leq k$ or $|v'| \leq k$ with the cost of the transition being zero. 
\item There are two symmetric cases. Start to partially match the output of the first (\textit{resp.}~second) transducer $u\lambda_1(\delta)$ (\textit{resp.}~$v\lambda_2(\delta)$) with that of the upcoming output of the second  (\textit{resp.}~first) transducer, i.e., $\lambda_2(\delta)$ (\textit{resp.}~ $\lambda_1(\delta)$) only if $|v| \leq k$ (\textit{resp.}~$|u| \leq k$). It then moves on to the target state with updated word $(\epsilon,v)$ (\textit{resp.}~$(u,\epsilon)$) and also the leftover word obtained after matching $u\lambda_1(\delta)$ and $\lambda_2(\delta)$ (\textit{resp.}~$v\lambda_2(\delta)$ and $\lambda_1(\delta)$). The cost of the transition is set to be $|v|$ (\textit{resp.}~$|u|$). For the upcoming transitions on $\calA$, the matching continues until the output of the second (\textit{resp.}~first) transducer is exhausted with cost of transitions being zero. While matching, the state may need to additionally store the leftover words of the form $(w, \epsilon)$ or $(\epsilon, w)$. After that, it checks if the stored word $v$ (\textit{resp.}~$u$) matches the first (\textit{resp.}~second) transducer's remaining output. 
\end{enumerate}

Note that the size of $\calA_{C,k}$ and $\calA_k$ is exponential in $k$ (as it has to keep track of the unprocessed words).
\end{proof}

To check if $\calT$ and $\calS$ are $k$-close, we check if they have the same domain and they are close w.r.t.~the length metric (otherwise they are neither close nor $k$-close). If so, we check if the domain of $T$ is same as the domain of $\kdist \calT \calS d k$. Thus we get:
\begin{corollary}\label{corr:kclose}
	Let $d$ be any metric from Table~\ref{table:editdistances}, and   $\calT$ and $\calS$ be any functional transducers.  It is decidable if $\calT$ and $\calS$ are  $k$-close w.r.t. $d$.  
\end{corollary}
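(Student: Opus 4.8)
The plan is to reduce $k$-closeness to a short sequence of decidable checks, exploiting the fact that $d(\calT,\calS)$ can be at most $k$ only when the domains coincide and the transducers are already close with respect to the length metric. Since $\calT$ and $\calS$ are functional, I would first put them into unambiguous form \cite{choffrut1999uniformization} and test whether $\dom(\calT)=\dom(\calS)$; these are regular languages presented by the underlying automata, so their equality is decidable, and if it fails then $d(\calT,\calS)=\infty$ by \Cref{defn:transmetric} and the transducers are not $k$-close. Assuming equal domains, I would invoke \Cref{Prop:rattoseq} to replace $\calT$ and $\calS$ by sequential transducers sharing a single DFA $\calA$, which does not change their distance.

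The second check is closeness with respect to $\dlen$. By item~\ref{Len:1} of \Cref{lemma:metricrelation} we have $\dlen \leq d$ for every edit distance $d$ in Table~\ref{table:editdistances}, so if $\dlen(\calT,\calS)=\infty$ then $d(\calT,\calS)=\infty$ and again the answer is \emph{no}. Decidability of this check follows from \Cref{Prop:finitedelay}: length-closeness is equivalent to a bounded-delay condition on $\calA$, which holds unless some reachable and co-reachable loop of $\calA$ produces outputs of differing length, and searching for such a loop is a routine reachability question on $\calA$. When it succeeds, it also yields the finite maximum delay $\partial$ needed in the next step.

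The decisive step uses \Cref{prop:kapprox}. Once the domains agree and the delay $\partial$ is finite, that proposition supplies, for the chosen metric $d$, a distance automaton computing the $k$-approximation $\kdist \calT \calS d k$, whose value equals $\dist \calT \calS d(w)$ whenever the true distance is at most $k$ and is $\infty$ otherwise. Consequently $\calT$ and $\calS$ are $k$-close precisely when $\kdist \calT \calS d k$ is finite on all of $\dom(\calT)$, i.e.\ when the set of words carrying a finite-weight accepting run of this automaton equals $\dom(\calT)$. Because every transition weight is a natural number, this finite-value domain is just the language of the underlying nondeterministic automaton, hence a regular language effectively computable from the distance automaton; so the required equality — equivalently the inclusion $\dom(\calT)\subseteq \dom(\kdist \calT \calS d k)$ — is decidable by regular inclusion, and $k$-closeness follows.

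I expect the only genuine difficulty to be concentrated in \Cref{prop:kapprox}, which I am entitled to assume: the construction must keep the unmatched \emph{leftover} words bounded, and this is exactly what the finite delay $\partial$ together with the budget $k$ guarantee, making the automaton finite. The conjugacy case behaves slightly differently, since the relevant automaton $\calA_k$ guesses and later verifies a cyclic prefix rather than performing local edits; nevertheless it is still a finite distance automaton computing the same $k$-approximation, so the reduction to a regular-domain containment check is uniform across all six metrics of Table~\ref{table:editdistances}.
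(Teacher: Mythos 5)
Your proposal is correct and follows essentially the same route as the paper: check equality of domains and closeness w.r.t.\ the length metric (rejecting otherwise), then use the distance automaton of \Cref{prop:kapprox} computing $\kdist \calT \calS d k$ and decide whether its finite-value domain (a regular language, since the budget $k$ is hardwired into the states) coincides with $\dom(\calT)$. The extra details you supply --- unambiguous normalization, \Cref{Prop:rattoseq}, and the reduction of the final check to regular-language inclusion --- are exactly the ingredients the paper leaves implicit.
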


 Interestingly, as elaborated in \Cref{kclosehammingtrans},  for Hamming and transposition distances we have direct procedures for $k$-closeness that does not involve the weighted automata. 

\subsection{Diameter of a Rational Relation} 
\label{subsec:diameter}

\begin{definition}[Diameter of a Rational Relation w.r.t.~a distance $d$]
The diameter of a rational relation $R$ with respect to a distance $d$, denoted by $\dia{R}{d}$, is the supremum of the distance of the related words in $R$.
$$\dia{R}{d} = \sup \left \{\,  d(u,v) \,\mid\, (u,v) \in R \,\right \}$$
\end{definition}
Similar to the questions asked in Table~\ref{table:distance}, we can ask the questions given in Table~\ref{table:diameter} about diameter of a rational w.r.t.~a metric $d$.
\begin{table}[t]
\begin{tabular}{|p{0.3\textwidth}|p{0.3\textwidth}|p{0.3\textwidth}|} 
 \hline
 Problem & Input & Question \\ 
\hline
Diameter Problem &rational relation $R$ & $\dia{R}{d}$? \\
Bounded Diameter Problem &rational relation $R$& Is $\dia{R}{d} < \infty$? \\
$k$-Bounded Diameter Problem &integer $k$, rational relation $R$ & Is $\dia{R}{d} \leq k$?  \\ 
 \hline
\end{tabular}
\caption{Problems about diameter of a rational relation w.r.t.~the metric $d$}
\label{table:diameter}
\end{table}
We say a rational relation has \emph{bounded (resp.~$k$-bounded) diameter w.r.t.~a distance $d$} if the diameter of the relation w.r.t.~$d$ is finite (\textit{resp.}~$\leq k$).
A rational relation with \emph{bounded delay} are precisely those relations with bounded diameter w.r.t.~a length metric. Relations with 0-delay are called \emph{length-preserving relations} \cite{eilenberg1974automata}  where any two related words are of equal length. It is decidable to check if a rational relation has bounded delay or 0-delay \cite{frougny1991rational}.

Relations bounded w.r.t.~the discrete metric are simply those with only identical pairs. It is decidable to determine if a rational relation $R$ is identity. First, check if $R$ is length-preserving. If so, we can construct a letter-to-letter transducer for $R$ based on Eilenberg and Sch\"utzenberger's theorem \cite{eilenberg1974automata} stating that a length-preserving rational relation over $A^* \times B^*$ is a rational subset of $(A \times B)^*$, or equivalently, it can be realised by a letter-to-letter transducer whose transitions are labelled with elements of $A \times B$.  Finally, validate this transducer for identity by examining the labels of each transition.

\subsection{Index of a Rational Relation in a Composition Closure}
\label{subsec:index}

Consider two rational relations $R$ over $A^* \times B^*$ and $S$ over $B^* \times C^*$. The composition $S \circ R$ over $A^* \times C^*$ is defined by
$(S \circ R)(u) = S(R(u)) = \bigcup_{v \in R(u)} S(v)$.
\begin{definition}[Composition closure of a Rational Relation]
Let $S$ be a rational relation over $A^* \times A^*$. Let $S^{\circn n}$ denote the composition of $S$ with itself $n\geq 0$ times ($S^{\circn 0}$ is taken to be the identity relation), and let $S^{\leq \circn n}$ denotes the composition of $S$ with itself at most $n$ times, i.e., $S^{\leq \circn n} =  S^{\circn 0} \cup S^{\circn 1} \cup \cdots S^{\circn n}$.

The \emph{composition closure} of $S$, denoted as $S^{\circn *}$, is defined as
$S^{\circn *} = \bigcup_{i \geq 0} S^{\circn i}$. 
\end{definition}
Notice that we use parenthesis around the superscript to indicate that the base operation is composition, and not concatenation.
\begin{definition}[Index of a Rational Relation in a Composition Closure]
Let $S$ be a rational relation over $A^* \times A^*$. An \emph{index} of a rational relation $R$ in the composition closure of $S$, denoted as $\indx{R}{S}$, is the smallest integer $k$ such that $R$ is contained in $S^{\leq \circn k}$. 
\end{definition}

\begin{example}
Consider a relation $S$ over $\{a,b\}^* \times \{a,b\}^*$ that deletes the first $a$ if exists on any input. Fix an integer $k > 0$ and let $R$ be the relation that deletes the first $k$ $a$'s from the input if exists. The index of $R$ in $S^{\circn *}$ is $k$ since for any input word $u \in A^*$, $R(u) \in S^{\leq \circn k}(u)$.  

Consider another relation $R^\prime$ that deletes all $a$'s from the input. Since $R^\prime(a^{k+1}) \not \in S^{\leq \circn k}(a^{k+1})$  for any $k >0$, the index of $R^\prime$ in $S^{\circn *}$ is $\infty$.
\end{example}

As seen in the case of the distance and diameter problem, we can ask questions in Table~\ref{table:index} about the index of a rational relation in the composition closure of a relation.
\begin{table}[t]
\begin{tabular}{|p{0.347\textwidth}|p{0.34\textwidth}|p{0.207\textwidth}|} 
 \hline
 Problem & Input & Question \\ 
\hline
Index Problem &rational relation $R$, $S$ & $\indx{R}{S}$? \\
Bounded (or Finite) Index Problem &rational relation $R$, $S$& Is $\indx{R}{S} < \infty$? \\
$k$-Bounded Index Problem &integer $k$, rational relation $R$, $S$& Is $\indx{R}{S} \leq k$?  \\ 
 \hline
\end{tabular}
\caption{Problems about the index of a rational relation in the composition closure of another.}
\label{table:index}
\end{table}
We say a rational relation $R$ has \emph{bounded (resp.~$k$-bounded)} index  in the composition closure of a rational relation $S$ if the index of $R$ in $S^{\circn *}$ is finite (\textit{resp.}~$\leq k$).

The following lemma shows that checking the boundedness of the index problem for an arbitrary rational relation is difficult. 
\begin{lemma}\label{lemma:indexgeneral}
It is undecidable to check if a rational relation has a bounded index in the composition closure of an arbitrary rational relation.
\end{lemma}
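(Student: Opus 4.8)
The plan is to reduce from the \emph{inclusion problem} for rational relations: given two rational relations $P$ and $Q$ over $A^* \times A^*$, decide whether $P \subseteq Q$. This problem is undecidable: the equivalence problem for rational relations is undecidable (a classical consequence of the Post Correspondence Problem), and any decision procedure for inclusion would decide equivalence by testing both inclusions, so inclusion is undecidable as well. The conceptual crux is to engineer an $S$ whose composition closure collapses after a single step, so that the finite index question degenerates into a plain inclusion test.

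First I would introduce two fresh letters $a,b\notin A$ with $a\neq b$ and work over $B = A\cup\{a,b\}$. Given an instance $(P,Q)$ I would build the rational relations
$$S = \{\, (a\,u,\; b\,v) \mid (u,v)\in Q \,\}, \qquad R = \{\, (a\,u,\; b\,v) \mid (u,v)\in P \,\}.$$
Both are rational, since prepending a fixed letter to each coordinate is a rational operation. The key observation is that every word in the range of $S$ begins with $b$, whereas every word in $\dom(S)$ begins with $a$; as $a\neq b$, no output of $S$ lies in $\dom(S)$, and therefore $S^{\circn 2} = S\circ S = \emptyset$. Consequently $S^{\circn 0}=\mathrm{id}$, $S^{\circn 1}=S$, and $S^{\circn n}=\emptyset$ for all $n\geq 2$, whence $S^{\leq \circn k}=\mathrm{id}\cup S$ for every $k\geq 1$ and $S^{\circn *}=\mathrm{id}\cup S$.

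I would then argue that $\indx{R}{S}$ is finite if and only if $P\subseteq Q$. Every pair of $R$ has the form $(a\,u, b\,v)$ with $a\,u\neq b\,v$, so no pair of $R$ lies in $\mathrm{id}$; hence $R\subseteq \mathrm{id}\cup S$ holds exactly when $R\subseteq S$, which by construction holds exactly when $P\subseteq Q$. If $P\subseteq Q$, then $R\subseteq S = S^{\leq \circn 1}$, so the index is finite (at most $1$). If $P\not\subseteq Q$, then $R\not\subseteq \mathrm{id}\cup S = S^{\leq \circn k}$ for every $k$, and so $\indx{R}{S}=\infty$. Thus a decision procedure for the bounded index problem would decide inclusion of rational relations, a contradiction, which establishes the lemma.

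The step I expect to be the main obstacle is not any calculation but the design choice that makes everything routine, namely the tagging by fresh letters. It simultaneously forces the collapse $S^{\circn n}=\emptyset$ for $n\geq 2$ (so that the composition closure carries no genuine iteration) and prevents the identity component $\mathrm{id}$ of $S^{\circn *}$ from interfering, since tagged pairs can never be equal. Everything else is bookkeeping, and the only external ingredient is the classical undecidability of inclusion/equivalence for rational relations.
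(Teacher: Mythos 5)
Your proof is correct, but it takes a genuinely different route from the paper's. The paper reduces from a version of the halting problem: it takes $S$ to be the one-step relation $\vdash$ between Turing machine configurations (which is rational) and $R$ the singleton $\{(q_{in}x,\, q_{accept})\}$, so that $R$ has bounded index in $S^{\circn *}$ precisely when the machine halts on $x$ and erases the tape; there the unboundedness genuinely arises from iterating $S$ an unbounded number of times. You instead reduce from inclusion of rational relations, and your tagging trick deliberately kills the iteration: $S \circ S = \emptyset$, so $S^{\leq \circn k} = \mathrm{id} \cup S$ for every $k \geq 1$ and the index question collapses to the single inclusion $R \subseteq S$, equivalently $P \subseteq Q$. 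Both reductions are sound; your appeal to the classical undecidability of inclusion/equivalence of rational relations (via PCP) is a legitimate external ingredient, on par with the paper's appeal to the halting problem. Your construction in fact yields a slightly sharper conclusion: since the index is either at most $1$ or infinite, the $k$-bounded index problem is undecidable for every fixed $k \geq 1$ as well, which the paper's reduction does not give (checking whether a Turing machine halts within $k$ steps is decidable). Conversely, the paper's proof shows hardness already for singleton relations $R$, locating the difficulty in the unbounded iteration of $S$ rather than in the inclusion test, so the two arguments isolate different sources of undecidability.
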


\begin{proof}
Proof by reducing a version of the halting problem of the Turing machine to the bounded index problem of a rational relation. 

A Turing machine is a  mathematical model of computation consisting of 7-tuple $M = \langle Q,\Sigma,\Gamma,\delta,q_{in},q_{accept},q_{reject} \rangle$ where $Q$ is a finite set of states; $q_{in}, q_{accept}, q_{reject} \in Q$ is the start state, accept state and reject state respectively; $\Sigma,\Gamma$ are the finite input and tape alphabet respectively; and $\delta$ is the transition function, $\delta: Q \setminus \{q_{accept},q_{reject}\} \times \Gamma \rightarrow Q \times \Gamma \times \{L,R\}$ --- given a state and the current tape symbol (pointed by the tape head) the transition writes a new symbol, changes state and moves the tape-head one cell either to the left or to the right. A configuration of a Turing Machine is a snapshot of the current state of the machine. It can be expressed as a word consisting of three components: the tape contents to the left of the tape head, the current state, and the tape contents to the right of the tape head (including the current cell). For example, configuration $x_1x_2 \cdots x_n q y_1y_2 \cdots y_m$, where $x_i,y_i \in \Gamma$, indicates that tape content is $x_1 \cdots x_ny_1 \cdots y_m$ followed by infinite blanks, the tape head is pointing at $y_1$ and the current state is $q$. On an input word $w$, the initial configuration is $q_{in} w$, and a configuration of the form $xq_{accept} y$ and $x q_{reject} y$, $x,y \in \Gamma^*$, are called the accepting and rejecting configuration respectively. Let $\vdash$ represent a relation between two configurations $C_1,C_2$ such that $C_1 \vdash C_2$ only if $C_2$ is a possible next valid configuration. For instance, for a transition $\delta(q,a) = (q^\prime,b,L)$ and current configuration $xqay$, where $x,y \in \Gamma^*, a \in \Gamma$, a next valid configuration would be $xbq^\prime y$. Hence $xqay \vdash xbq^\prime y$. It is easy to verify that $\vdash$ is a rational relation. A computation on a Turing Machine is a sequence of configurations $C_0, C_1 \cdots C_m$ where each configurtion is obtained from the previous one by the relation $\vdash$, i.e.,
$ C_{i-1} \vdash C_i \quad \forall 1 \leq i \leq m $
and $C_0$ is the initial configuration. If the final configuration, i.e., $C_m$, is accepting then the word is accepted by the machine (accepting computation). If instead $C_m$ is rejecting then the word is rejected by the machine (rejecting computation). There can be computations where the input word is neither accepted nor rejected and the Turing machine runs forever.

\textit{Reduction:} First we recall a version of the halting problem for Turing machines.\\
\textit{Input:} a Turing machine $M$ and an input word $x$\\
\textit{Output:} Yes, if $M$ halts on $x$ and erases the tape contents; No, otherwise.

Given input $M$ and $x$, let $q_{in}$ and $q_{accept}$ be the initial and accepting state of $M$ respectively. We can define a singleton relation $R= \{(q_{in} x, q_{accept})\}$. Let $S$ be the $\vdash$ relation over configurations of $M$. Both $R$ and $S$ are rational relations.

$R$ has a bounded index in $S^{\circn{*}}$ if and only if there exists some $k > 0$ such that $q_{accept} \in \ \vdash^{k}\,(q_{in}x)$. This is equivalent to saying that $R$ has a bounded index in $S^{\circn{*}}$ if and only if $M$ halts on $x$ and erases the tape contents. Since the halting problem for the Turing machine is undecidable, we can conclude that it is undecidable to check if a rational relation has a bounded index in the composition closure of a rational relation.
\end{proof}

However, we show that the index problem is decidable w.r.t.~a large class of rational relations defined below.

\begin{definition}[Metrizable Relation]
\label{metrizable}
Let $S$ be a rational relation over $A^* \times A^*$. Let $d_S: A^* \times A^* \rightarrow \N \cup \{\infty\}$ be the distance between two vertices in the graph of $S$, i.e., for any two words $u$ and $v$, $d_S(u,v)$ is the smallest $i$ such that $v \in S^{\circn i}(u)$, and $\infty$ otherwise.

We say $S$ is a $d$-\emph{metrizable relation} for a metric $d$ if $d_S \approx d$. 
\end{definition}

\begin{proposition}\label{d-index}
Let $R$ be a rational relation and $S$ be a $d$-metrizable relation for an integer-valued metric $d$ for which $\dlen \lesssim d$. If boundedness of diameter w.r.t.~$d$ is decidable for a rational relation, then $\indx{R}{S}$ is computable.
\end{proposition}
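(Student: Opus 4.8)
The plan is to identify the index with a diameter and then use the metrizability hypothesis to turn the remaining threshold tests into decidable questions. First I would observe that the index is literally a diameter: for any words $u,v$ we have $v \in S^{\leq \circn k}(u)$ exactly when $d_S(u,v) \leq k$, so $R \subseteq S^{\leq \circn k}$ holds iff $d_S(u,v) \leq k$ for every $(u,v) \in R$, i.e.\ iff $\dia{R}{d_S} \leq k$. Taking the least such $k$ yields $\indx{R}{S} = \dia{R}{d_S}$ (a pair with $d_S(u,v) = \infty$ makes both sides $\infty$). Thus computing the index is the same as computing the diameter of $R$ with respect to the graph metric $d_S$, and since $d$ is integer-valued and $d_S$ is integer-valued by definition, a finite diameter is attained as a maximum.

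Next I would decide finiteness. As $S$ is $d$-metrizable we have $d_S \approx d$, so there are correction functions $\alpha,\beta$ with $d_S \leq \alpha \circ d$ and $d \leq \beta \circ d_S$; hence $\dia{R}{d_S} \leq \alpha(\dia{R}{d})$ and $\dia{R}{d} \leq \beta(\dia{R}{d_S})$, which gives $\dia{R}{d_S} < \infty$ iff $\dia{R}{d} < \infty$. The right-hand condition is decidable by hypothesis, so I can decide whether $\indx{R}{S}$ is finite, returning $\infty$ otherwise.

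The main work is to decide, for each fixed $k$, whether $\indx{R}{S} \leq k$, i.e.\ whether $R \subseteq S^{\leq \circn k}$. Containment of rational relations is undecidable in general, and this is the step I expect to be the real obstacle. It is rescued by the length hypothesis: from $\dlen \lesssim d \approx d_S$ we get $\dlen \leq \gamma \circ d_S$ for some $\gamma$. Consequently every $(u,v) \in S^{\circn i}$ satisfies $d_S(u,v) \leq i$ and hence $\bigl| |u| - |v| \bigr| \leq \gamma(i)$, so each $S^{\circn i}$ (rational, because rational relations are closed under composition) and the finite union $S^{\leq \circn k}$ have bounded length difference, i.e.\ bounded delay. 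In the finite-diameter branch the same inequality shows $R$ has bounded $\dlen$-diameter and so is also of bounded delay. By Eilenberg and Sch\"utzenberger's normal form \cite{eilenberg1974automata}, after padding the shorter component with a fresh symbol up to the (bounded) delay, such relations become length-preserving over a padded alphabet and are therefore recognised letter-to-letter, i.e.\ synchronised; see also \cite{frougny1991rational}. This class is effectively closed under Boolean operations and has decidable emptiness, so $R \subseteq S^{\leq \circn k}$ becomes decidable.

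Finally I would assemble the procedure: decide finiteness as above; if infinite, output $\infty$; otherwise search $k = 0, 1, 2, \ldots$, testing the now-decidable condition $R \subseteq S^{\leq \circn k}$ and returning the first $k$ that succeeds. The search halts precisely at $\indx{R}{S} = \dia{R}{d_S}$, since that value is a finite integer in this branch. Everything hinges on the containment test of the third paragraph; the identity of the first step and the boundedness oracle of the second handle the rest.
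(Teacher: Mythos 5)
Your proof is correct and follows essentially the same route as the paper's: both reduce finiteness of $\indx{R}{S}$ to boundedness of $\dia{R}{d}$ via $d_S \approx d$, decide each threshold $R \subseteq S^{\leq \circn k}$ by observing that $\dlen \lesssim d$ forces $R$ and the compositions of $S$ to have bounded delay (so containment reduces to set difference/emptiness of bounded-delay rational relations, decidable by Frougny--Sakarovitch), and then search for the least $k$. Your explicit identity $\indx{R}{S} = \dia{R}{d_S}$ and the padding/synchronization detour are just a slightly more detailed packaging of the same argument.
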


\begin{proof}
Similar to distance problem, the index problem is computable iff bounded index and $k$-bounded index problem are decidable. 
For a rational relation $R$ and $d$-metrizable relation $S$, we show that $\indx{R}{S} < \infty$ iff $\dia{R}{d} < \infty$ as follows.
\begin{align*}
\dia{R}{d} < \infty &\iff \exists k\in \mathbb{N} \text{ s.t. } \forall (u,v) \in R, d(u,v) \leq k\\
&\iff \exists k' \in \mathbb{N} \text{ s.t. } \forall (u,v)\in R, d_S(u,v) \leq k' && \text{(Since $d_S \approx d$)}\\
                  &\iff \forall (u,v)\in R, v \in S^{\leq \circn{k'}}(u) && \text{(Definition of $d_S$)}\\
                  &\iff \indx{R}{S} < \infty
\end{align*}

Therefore, if the boundedness of diameter w.r.t.~$d$ is decidable for a rational relation, then we can decide if $\indx{R}{S} < \infty$. If so, then it suffices to decide if $\indx{R}{S} \leq k$ for $k =0, 1, \ldots$ and output the smallest $k$ as the index of $R$ in the composition closure of $S$. 

Since $\dia{R}{d} < \infty$ and $\dlen \lesssim d$, the rational relation $R$ has a bounded delay. Similarly, $S$ also has a bounded delay since for all $(u,v) \in S$, $d_S(u,v) = 1 \implies \exists k \in \mathbb{N}$ s.t. $d(u,v) \leq k$ (since $d_S \approx d$) $\implies \exists k' \in \mathbb{N}$ s.t. $\dlen(u,v) \leq k'$ (since $\dlen \lesssim d$). Since $S$ has bounded delay, for any $k \in \mathbb{N}$, $S^{\circn k}$ also has bounded delay.  It is known that emptinesss  and set difference of two rational relations with bounded delay is decidable (Corollary 2 of \cite{frougny1991rational}). For any $k \in \mathbb{N}$, deciding $\indx{R}{S} \leq k$ reduces to checking if $R \subseteq S^{\circn k}$ (or equivalently, $R \setminus S^{\circn k} = \emptyset$), and hence decidable.
\end{proof}
A close and (almost) dual notion is that of a metric that defines a rational relation.
\begin{definition}[Rationalizable Distance]
A distance $d$ on words is \emph{rationalizable} if the relation $S_d=\{(u,v) \mid d(u,v)=1\}$, called the \emph{distance relation} of $d$, is rational. 
\end{definition}

\begin{example}
Consider the hamming distance $d_h$. We can construct a rational relation $S_h = \{(u,v) \mid u \text { and } v \text{ differ only in exactly one position}\}$.  For example, let $A=\{a,b\}$ and $S_h(aba) = \{bba,aaa,abb\}$. For this, construct a transducer that nondeterministically chooses a position and replaces the input letter with other letters in the alphabet.  Similarly, 
the distance relation of the length metric $S_{len} = \{ (u,v) \mid ||u| - |v|| =1\}$ is also rational.
\end{example}
In fact, we have the following result about the rationalizability of edit distances referred in Table~\ref{table:editdistances}.

\begin{proposition}
Every edit distance $d\in \{d_l,d_h,d_t,d_c,\dlcs,\ddl\}$ is rationalizable.
\end{proposition}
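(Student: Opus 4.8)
The plan is to show that for each edit distance $d\in\{d_l,d_h,d_t,d_c,\dlcs,\ddl\}$ the distance relation $S_d=\{(u,v)\mid d(u,v)=1\}$ is rational by directly exhibiting a finite state transducer recognising it. Since a single edit of the relevant kind is a purely \emph{local} operation on the word, each such transducer will read the input $u$ from left to right, copy it to the output, and nondeterministically apply exactly one edit operation at a guessed position, using its finite control to ensure that precisely one edit is used. I would handle the cases one by one, as the operations differ.

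First I would treat the substitution-based and insertion/deletion-based distances. For $d_h$ (Hamming), the transducer copies letters verbatim except at one nondeterministically chosen position, where it substitutes the input letter $a$ with some $b\neq a$; a single control bit records whether the substitution has already been performed, forbidding a second one, and the unique final state is the one where exactly one substitution has occurred. For $\dlcs$ (insertions and deletions) one allows the single edit to be either an inserted letter (an $\epsilon$-input transition emitting one letter) or a deleted letter (a transition reading one letter and emitting $\epsilon$), again gated by a ``used/unused'' flag so that exactly one such operation happens. The Levenshtein distance $d_l$ combines both: one substitution \emph{or} one insertion \emph{or} one deletion, which is the union of the previous constructions. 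The Damerau--Levenshtein relation $\ddl$ additionally permits a single adjacent transposition.

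The transposition-based and conjugacy cases need slightly more state. For $d_t$ and for the transposition option in $\ddl$, an adjacent swap is still local: the transducer guesses a position, reads the two adjacent letters $ab$ with $a\neq b$ (if $a=b$ the swap is a no-op and must be disallowed so that the distance is genuinely $1$), and emits them in the order $ba$, with the flag ensuring this is done exactly once and all other letters are copied. For $d_c$ (conjugacy), a single left cyclic shift sends $aw$ to $wa$ and a single right cyclic shift sends $wa$ to $aw$; I would realise the left shift by guessing and remembering the first letter $a$ in the finite control, copying the remaining $w$, and emitting the stored $a$ at the end (and symmetrically, or by symmetry of the relation, for the right shift), again disallowing the trivial shift where $u=v$ so that $d_c=1$ and not $0$.

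The main obstacle is not the construction of any individual transducer --- each is routine --- but the bookkeeping needed to guarantee that the machine realises \emph{exactly} distance $1$ rather than distance $\le 1$: one must exclude the degenerate single edits that leave the word unchanged (substituting a letter by itself, transposing two equal letters, a cyclic shift of a word all of whose letters are equal), since those would wrongly put pairs with $d(u,v)=0$ into $S_d$. I would address this uniformly by building into the guessed edit the side condition that the two letters involved differ (for substitution and transposition) or, for conjugacy, by intersecting with the complement of the identity relation, which is rational because a single shift keeping the word fixed forces the word to be a power of a single letter and that condition is checkable by the finite control. With these guards in place the emitted output $v$ satisfies $d(u,v)=1$ in each case, and since rational relations are exactly those recognised by finite state transducers, each $S_d$ is rational, proving the proposition.
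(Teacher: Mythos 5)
Your proposal is correct and takes essentially the same approach as the paper: for each metric, a nondeterministic finite state transducer applies exactly one edit at a guessed position (with an $a \neq b$ guard for substitutions), and for conjugacy it guesses and remembers the first (or last) letter and emits it at the other end of the word. Your additional care in excluding degenerate edits --- in particular ruling out cyclic shifts of words that are powers of a single letter, which the paper's stated $S_c$ silently admits even though such pairs have distance $0$ --- is a small refinement of, not a departure from, the paper's argument.
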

\begin{proof}
Let $d$ be an edit distance. Let $C$ be the set of permissible edits in $d$. We can construct a rational relation $S$ such that $S = \{(u,v) \mid d(u,v) = 1\}$. For $d \in \{d_l,d_h,d_t,\dlcs,\ddl\}$, we can construct a transducer that nondeterministically chooses a position and an edit from set $C$ and applies that edit to the chosen position. For example, the rational relation $S_l$ over $A^* \times A^*$ for Levenshtein edit distance is as follows. For all, $u,v \in A^*$, $(u,v) \in S_l$ if and only if there exists $x,y \in A^*, a,b \in A$ with $a \neq b$, such that either one of the following is true.
\begin{enumerate}
\item $u=xy, v=xay$ (insertion)
\item $u=xay, v=xy$ (deletion)
\item $u = xay, v=xby$ (substitution)
\end{enumerate}
For conjugacy distance, $S_c = \{(u,v) \mid \text{either $u=ax$ and $v=xa$, or $u=xa$ and $v=av$}, x \in A^*, a\in A\}$. For this, we can construct a transducer that nondeterministically remembers (or guesses) the first (or last) letter of the input word and output it at the end (or beginning) along with the rest of the input word. 
\end{proof}
 
\subsection{Reductions between Distance, Diameter and Index Problems}
\label{subsec:reductions}
We show that the distance problem of two rational functions is mutually reducible to the diameter problem of a rational relation, which in turn is mutually reducible to the index problem of a rational relation in the composition closure of a rationalizable distance. Thus, their closeness and boundedness problems are also interreducible.

The correspondence between distance and diameter follows from Nivat's theorem:
\begin{theorem} (\cite{nivat1968transductions}) \label{nivat}
Let $A$ and $B$ be alphabets. The following conditions are equivalent.
\begin{enumerate}
\item $R$ is a rational relation over $A^* \times B^*$.
\item There exist an alphabet C, two alphabetic morphisms $\phi : C^* \rightarrow A^*$ and $\psi : C^* \rightarrow B^*$ and a regular language $L \subset C^*$ such that $R = \{(\phi(w),\psi(w)) \mid w \in L\}$
\end{enumerate}
\end{theorem}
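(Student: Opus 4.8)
The plan is to establish the two implications separately; throughout I take an \emph{alphabetic morphism} to be a morphism sending each letter to a word of length at most one, i.e.\ $\phi(c) \in A \cup \{\epsilon\}$ and $\psi(c) \in B \cup \{\epsilon\}$ for every $c \in C$.

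For the implication $(2) \implies (1)$, suppose $R = \{(\phi(w), \psi(w)) \mid w \in L\}$ with $L$ regular. Fix a finite automaton $\calA$ over $C$ recognising $L$. I would relabel $\calA$ into a transducer $\calT$ over $A^* \times B^*$: each transition of $\calA$ reading a letter $c \in C$ is replaced by a transition reading $\phi(c) \in A \cup \{\epsilon\}$ on the input tape and writing $\psi(c) \in B \cup \{\epsilon\}$ on the output tape, keeping the same source, target, initial and final states. An accepting run of $\calT$ is then in bijection with an accepting run of $\calA$ on some $w \in L$, and the pair it realises is exactly $(\phi(w), \psi(w))$. Hence $\calT$ recognises $R$, so $R$ is rational.

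For the converse $(1) \implies (2)$, start from a transducer $\calT$ recognising $R$ whose transitions carry labels in $A^* \times B^*$. The first step is a normalisation: I would split every transition labelled $(u,v)$ with $|u| = p$ and $|v| = q$ into a chain of $p + q$ elementary transitions through fresh intermediate states, each elementary transition carrying a label in $(A \times \{\epsilon\}) \cup (\{\epsilon\} \times B)$ (read one input letter and write nothing, or read nothing and write one output letter). This yields an equivalent transducer $\calT'$ in which every transition reads at most one input letter and writes at most one output letter. Now take $C$ to be the set of transitions of $\calT'$, and define $\phi \colon C^* \to A^*$ and $\psi \colon C^* \to B^*$ on each letter $\delta \in C$ as the input label and the output label of $\delta$ respectively; both are alphabetic by construction. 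Finally let $L$ be the set of sequences $\delta_1 \cdots \delta_k$ that form accepting runs of $\calT'$; this is regular, being recognised by the automaton over $C$ whose states are those of $\calT'$ and whose transition on letter $\delta$ goes from the source of $\delta$ to its target, with the same initial and final states. Reading off the labels along a run shows $R = \{(\phi(w), \psi(w)) \mid w \in L\}$.

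The routine but essential obstacle is the normalisation of $\calT$ into $\calT'$: I must check that splitting transitions through intermediate states preserves the recognised relation exactly, and that the resulting run language $L$ over $C$ is genuinely regular and in faithful correspondence with the accepting runs of $\calT'$. Once elementary labels are in place the two morphisms are forced to be alphabetic and the verification $R = \{(\phi(w), \psi(w)) \mid w \in L\}$ reduces to concatenating labels transition-by-transition.
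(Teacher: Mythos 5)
The paper offers no proof of this statement: it is quoted directly from Nivat (1968) and used as a black box, so there is no in-paper argument to compare yours against. Judged on its own terms, your proof is the classical textbook argument and is essentially correct in both directions: relabelling an automaton for $L$ through the two alphabetic morphisms yields a transducer for $R$, and conversely, normalising a transducer to elementary transitions and taking its run language over the alphabet $C$ of transitions, with $\phi$ and $\psi$ reading off the input and output labels, gives the required presentation. Your reading of \emph{alphabetic morphism} as allowing erasure ($\phi(c)\in A\cup\{\epsilon\}$) is also the right one; with letter-to-letter morphisms the theorem would be false. One corner case of your normalisation needs a patch: a transition labelled $(\epsilon,\epsilon)$ has $|u|+|v|=0$, so your prescription replaces it by an empty chain, i.e., it collapses its source and target into one state. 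That collapse is unsound in general, since a run entering the target can afterwards leave through edges of the source, which may create pairs not in $R$. The fix is immediate: keep such transitions as elementary ones labelled $(\epsilon,\epsilon)$ --- the corresponding letter of $C$ is then erased by both $\phi$ and $\psi$, which alphabetic morphisms permit --- or eliminate them first by the usual $\epsilon$-closure. With that one-line adjustment your argument is complete.
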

From \Cref{Prop:rattoseq} and $(2) \implies (1)$ in the above theorem, it follows that distance of two rational functions reduces to the diameter of a rational relation. Now, given a rational relation $R$, we can create two functional transducers $\calT_1$ and $\calT_2$ in the following way. The domain for these transducers corresponds to the set $L$ in \Cref{nivat}. For each transition in $\calT_1$ and $\calT_2$ that involves an input alphabet symbol $\sigma$, we set the outputs to be $\phi(\sigma)$ and $\psi(\sigma)$ in \Cref{nivat}, respectively. Consequently, $\calT_1$ and $\calT_2$ consist of the sets $\{\phi(w) \mid w \in L\}$ and $\{\psi(w) \mid w \in L\}$ respectively. Since the domain of these transducers is identical, the distance between $\calT_1$ and $\calT_2$ with respect to any distance $d$, 
$d(\calT_1,\calT_2) =  \sup \left \{\,  d(\phi(w),\psi(w)) \,\mid\, w \in L \,\right \}$,
 that is equivalent to the diameter of $R$ w.r.t.~the distance $d$.

The correspondence between diameter and  index  for rationalizable distances is stated in the following proposition.
\begin{proposition}\label{reduction2}
The diameter of a rational relation $R$ w.r.t.~a rationalizable distance $d$ is equal to the index of the rational relation $R$ in the compostion closure of the distance relation of $d$.
\end{proposition}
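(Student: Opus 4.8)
The plan is to reduce both sides to a single quantity: I will show that the metric $d$ coincides pointwise with the graph distance $d_{S_d}$ induced by its own unit-distance relation $S_d$, and then observe that both the diameter and the index are nothing but the supremum of that common quantity over the pairs of $R$. Here $d_{S_d}(u,v)$ is, as in \Cref{metrizable}, the smallest $i$ with $v \in S_d^{\circn i}(u)$ (and $\infty$ if no such $i$ exists).

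First I would rewrite the index in a form directly comparable to the diameter. Since $S_d^{\leq \circn k}(u) = \{v \mid d_{S_d}(u,v) \leq k\}$, the containment $R \subseteq S_d^{\leq \circn k}$ holds if and only if $d_{S_d}(u,v) \leq k$ for every $(u,v) \in R$. Hence the smallest such $k$ is exactly $\sup\{d_{S_d}(u,v) \mid (u,v) \in R\}$, with the convention that this value is $\infty$ when no finite $k$ works; that is, $\indx{R}{S_d} = \sup\{d_{S_d}(u,v) \mid (u,v)\in R\}$, which is to be compared with $\dia{R}{d} = \sup\{d(u,v)\mid (u,v)\in R\}$. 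So it suffices to prove the pointwise identity $d(u,v) = d_{S_d}(u,v)$ for all words $u,v$ and take suprema over the pairs of $R$.

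The inequality $d \leq d_{S_d}$ is immediate from the triangle inequality: a witnessing chain $u = w_0, w_1, \ldots, w_i = v$ with $(w_{j-1},w_j) \in S_d$ satisfies $d(w_{j-1},w_j) = 1$ for each $j$, so $d(u,v) \leq i$. For the reverse inequality I would use that $d$ is an edit distance from Table~\ref{table:editdistances}: if $d(u,v) = k < \infty$, fix a \emph{minimal} sequence of $k$ edits transforming $u$ into $v$ and let $w_0, \ldots, w_k$ be the intermediate words. By minimality consecutive words are distinct, and two distinct words one edit apart are at distance exactly $1$, so each $(w_{j-1}, w_j) \in S_d$; thus $v \in S_d^{\circn k}(u)$ and $d_{S_d}(u,v) \leq k = d(u,v)$. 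Reading these two inequalities over $\N \cup \{\infty\}$ also settles the case $d(u,v)=\infty$ (no finite chain can exist, forcing $d_{S_d}(u,v)=\infty$). Combining them gives the identity, and taking suprema yields $\dia{R}{d} = \indx{R}{S_d}$.

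I expect the reverse inequality $d_{S_d} \leq d$ to be the only delicate point, and it is exactly where the argument must appeal to the structure of edit distances rather than to arbitrary rationalizable metrics: it asserts that the unit ball of $d$ generates the whole metric by one-step moves, i.e.\ that $S_d$ coincides with the single-edit relation and that graph distance in $S_d$ reproduces $d$. This is precisely the property guaranteed by each $d \in \{d_l,d_h,d_t,d_c,\dlcs,\ddl\}$ being defined as a \emph{minimum number of unit edits}, so that a minimal edit sequence furnishes the required $S_d$-chain of the same length.
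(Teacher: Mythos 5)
Your proof is correct and follows essentially the same route as the paper's: both arguments reduce the claim to the pointwise identification of $d$ with the graph distance $d_{S_d}$ of its distance relation, and then take suprema over the pairs of $R$. The only real difference is that you make explicit, via the minimal-edit-sequence argument, the direction $d_{S_d} \leq d$ that the paper's proof treats as immediate from $S_d$ being the distance relation of $d$ (and you rightly flag that this step uses the edit-distance structure, not mere rationalizability).
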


\begin{proof}
Assume that the diameter of a relation $R$ w.r.t.~a distance $d$ is $\infty$. We claim that the index of $R$ in $S_d^{\circn *}$ is also $\infty$ where $S_d$ is the distance relation of $d$. Suppose not, i.e., let $k<\infty$ be the index of $R$ in $S_d^{\circn *}$. Thus, $\forall (u,v)\in R$, $v\in S_d^{\leq \circn k}(u)$. Since $S_d$ is the distance relation of $d$, $\forall (u,v)\in R$, $d(u,v)\leq k$. However, this contradicts the fact that $\dia{R}{d}=\infty$. Hence, the index of $R$ in $S_d^{\circn *}$ is infinite. Similarly, we can prove the other direction. 
Now, suppose the diameter of $R$ w.r.t.~$d$ is finite, i.e.,
\begin{align*}
\text{diameter of $R$ w.r.t.~$d$ is $k < \infty$} &\iff \forall (u,v) \in R, d(u,v) \leq k\\
                  &\iff \forall (u,v)\in R, v \in S_d^{\leq \circn k}(u)\\
                  &\iff \text{ index of $R$ in $S_d^{\circn *}$ is $k$.}  
                  && \qedhere
\end{align*}
\end{proof}

\subsection{Decidability Results}
We study the problems stated in Tables~\ref{table:distance}, \ref{table:diameter} and \ref{table:index} and show that they are decidable for the metrics in Table~\ref{table:editdistances}. The index problems stated in Table~\ref{table:index} are undecidable in general (see \Cref{lemma:indexgeneral}), but is decidable for $d$-metrizable relations for metrics $d$ given in Table~\ref{table:editdistances}.

Recall that, w.r.t.~a metric $d$, distance problem is computable if and only if both closeness and $k$-closeness are decidable (see \Cref{prop:computedistance}). We have shown that the $k$-closeness is decidable for all the metrics in Table~\ref{table:distance} (\Cref{corr:kclose}). Hence to show the decidability of all the problems in Table~\ref{table:distance}, it suffices to show the decidability of the closeness problem. Furthermore, thanks to the inter-reductions described above (see \Cref{subsec:reductions}), the decidability of Table~\ref{table:diameter} follows as well as the decidability for the problems in Table~\ref{table:index} for the rationalizable distance. Moreover, the index of a rational relation in the composition closure of a $d$-metrizable relation for a metric $d$ given in Table~\ref{table:distance} is computable by \Cref{d-index}.

It only remains to prove that closeness is decidable for edit distances in Table~\ref{table:editdistances}. This is stated below, and proved in the following section.

\begin{theorem}\label{thm:closeDecidable}
Let $d$ be any metric from Table~\ref{table:editdistances}, and  $\calT$ and $\calS$ be any functional transducers.  It is decidable if $\calT$ and $\calS$ are  close w.r.t. $d$.  
\end{theorem}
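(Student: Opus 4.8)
The plan is to reduce closeness of $\calT$ and $\calS$ to a decidable structural condition on the loops of a single automaton, where the one nontrivial primitive is testing conjugacy of loop outputs. First I would dispose of two necessary preconditions. Since the distance is $\infty$ whenever the domains differ, and functional transducers have effectively computable regular domains, I check domain equality and reject otherwise. Since $\dlen \lesssim d$ for every $d$ in Table~\ref{table:editdistances} (\Cref{lemma:metricrelation}), closeness w.r.t.\ $d$ forces closeness w.r.t.\ $\dlen$ (\Cref{remark:metricrelation}), which is decidable by \Cref{Prop:finitedelay} together with decidability of bounded delay; if it fails I reject. Otherwise I apply \Cref{Prop:rattoseq} to obtain sequential transducers $\calT_1=\langle\calA,\lambda_1,\omicron_1\rangle$, $\calT_2=\langle\calA,\lambda_2,\omicron_2\rangle$ over a common DFA $\calA$ with a uniform delay bound $\partial$, so that every simple loop of $\calA$ at a state produces a pair of \emph{equal-length} outputs $(\alpha,\beta)$. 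Because $d_l\approx\dlcs\approx\ddl$, closeness is the same property for the whole Levenshtein family (cost-equivalent functions are bounded over the same sets), so it remains to treat the four metrics $d_l$, $d_c$, $d_h$, $d_t$.

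For the Levenshtein family I would prove that $\calT_1,\calT_2$ are close w.r.t.\ $d_l$ iff every loop of $\calA$ produces a \emph{conjugate} pair, i.e.\ $\alpha=xy$ and $\beta=yx$ for some $x,y$. Necessity is a pumping argument backed by a combinatorics-on-words lemma: if some loop produces nonconjugate equal-length words then $d_l(\alpha^i,\beta^i)\to\infty$, and since the loop sits inside an accepting run this forces $d_l(\calT_1,\calT_2)=\infty$. Sufficiency uses the identity $\alpha^i x = x\beta^i$ (valid when $\alpha=xy,\beta=yx$), which yields $d_l(\alpha^i,\beta^i)\le 2|x|$; the delicate part is to glue such bounds across an arbitrary run by decomposing it into a bounded number of transient segments and iterated simple loops and using the delay bound $\partial$, so that the two outputs admit an edit alignment whose cost depends only on $\calA$ and $\partial$, never on $|w|$.

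For conjugacy I would show that $\calT_1,\calT_2$ are close w.r.t.\ $d_c$ iff $\calT_1(w)$ and $\calT_2(w)$ are conjugate for \emph{every} input $w$; the content is that bounded delay upgrades pairwise conjugacy to a bounded cyclic offset, keeping the per-word $d_c$ bounded. For Hamming I would prove that closeness holds iff the loop outputs become letterwise identical after a fixed shift (so that, up to a bounded prefix delay, $\alpha$ and $\beta$ coincide), and for transposition I would additionally require that $\alpha$ and $\beta$ be permutations of one another. In all four cases the deciding condition concerns the finitely many state-to-state loop behaviours and is checkable effectively: equal length is already guaranteed, conjugacy of loop outputs is decidable by \cite{decidingconjugacy}, and the Hamming/transposition refinements reduce to letter-equality and multiset-equality tests on loop outputs.

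The main obstacle is the \emph{sufficiency} direction — turning a local condition on loops into a uniform global bound on the distance — and it is hardest for conjugacy: conjugate words can be as far apart as linear in their length under $d_c$, yet a bounded-delay rational relation all of whose pairs are conjugate must have bounded diameter. Since this fails for arbitrary relations (e.g.\ $\{(a^nb^n,b^na^n)\}$, which is not rational), the argument must genuinely exploit bounded delay to cap the cyclic offset. A secondary obstacle is justifying decidability of the loop conditions despite there being infinitely many loops: one reduces ``all loops produce conjugate pairs'' to a conjugacy question on a rational relation and invokes \cite{decidingconjugacy}, while bounding the transient contributions uniformly.
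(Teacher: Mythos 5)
Your overall architecture is the paper's: check domain equality, reduce to a common DFA via \Cref{Prop:rattoseq}, characterise closeness for the Levenshtein family by conjugacy of loop outputs (your pumping necessity is exactly \Cref{kleeneconjugacy}), and invoke \cite{decidingconjugacy} for decidability, with separate shift/permutation conditions for Hamming and transposition. The genuine gap is your sufficiency step. You propose to glue per-loop bounds by ``decomposing an arbitrary run into a bounded number of transient segments and iterated simple loops,'' but no such decomposition exists once a strongly connected component contains two distinct cycles: a run may alternate between them unboundedly often ($\ell_1\ell_2\ell_1\ell_2\cdots$). What is true is that a run is a simple path with at most $n$ (number of states) arbitrary closed runs inserted; but then you need a bound on $d_l(u,v)$ that is \emph{uniform} over the infinite relation $M_{q,q}$ of all loop outputs at $q$, which is precisely the statement being proved, so the argument is circular. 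Pairwise conjugacy only yields per-pair witnesses, and the identity $\alpha^i x = x\beta^i$ bounds iterates of a \emph{single} loop; over all of $M_{q,q}$ the witness lengths can grow without bound. The paper closes exactly this hole with the common-witness theorem of \cite{decidingconjugacy} (\Cref{conjugacytheorem}): if every pair in a starred set is conjugate, there is \emph{one} word $z$ witnessing all pairs simultaneously, giving $d_l \le 2|z|$ uniformly (\Cref{closenessClaim}), and these bounds are summed over a sumfree decomposition of the expression (\Cref{closewrtlevenshtein}). The paper's conclusion explicitly flags this pitfall: checking conjugacy of simple cycles of the automaton is not enough. The same objection hits your conjugacy-distance case: ``bounded delay caps the cyclic offset'' is not an argument, and bounded delay by itself does not bound the number of shifts between conjugate words; the paper instead caps $d_c$ by $|z|$ via the common witness (\Cref{conjugacyclaim}).

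Second, your transposition characterisation is incorrect. Take $\calT_1(a^nb)=10^n$ and $\calT_2(a^nb)=0^n1$ (both sequential, same domain, delay $1$). Every loop outputs an identical pair $(0^j,0^j)$, so your Hamming-style condition and your additional permutation requirement on loops both hold, and the complete outputs are even permutations of each other; yet $d_t(10^n,0^n1)=n$ because the unique $1$ must cross $n$ zeros by adjacent swaps, so the transducers are not close w.r.t.\ $d_t$. Closeness for transposition also constrains the transient segments entering and leaving loops relative to the loop borders; this is \Cref{trans3} of \Cref{Claim:trans} (e.g.\ $\alpha\not\equiv\beta\cdot\lborder(u,v)$ for $(\alpha,\beta)\in M_{q_0,q}$), a condition a loops-only test cannot see. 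Your Hamming characterisation does match \Cref{Claim:hamming}, and reducing ``all loops produce conjugate pairs'' to conjugacy of the rational relations $M_{q,q}$ is fine for decidability; but as written the proposal establishes neither the sufficiency direction for the Levenshtein and conjugacy metrics nor a correct criterion for transposition.
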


\section{Closeness for Edit Distances}
\label{Subsec:closeness}

In this section we show that closeness is decidable for all the edit distances in Table~\ref{table:editdistances}. The first step is to check if the domain of the transducers are the same. This reduces to checking the equivalence of the underlying automata. For sequential transducers, the underlying automaton is a DFA, while for unambiguous transducers, the underlying automaton is an unambiguous NFA. Checking the equivalence of two unambiguous automata can be done in polynomial time \cite{SteHun85}, while it is PSPACE in the case of ambiguous automata \cite{stock73}.  Therefore, from now on, we assume that the domains of the transducers given as input to the closeness problem are identical.

\Cref{Prop:rattoseq} allows us to state the distance and closeness problems more abstractly in terms of an automaton over pairs of words. The proposition asserts that distance and closeness problems of two given sequential or unambiguous transducers $\calT_1$ and $\calT_2$ can be reduced to the corresponding problem for a DFA $\calA$ with two sets of output functions $\lambda_1,\omicron_1$ and $\lambda_2,\omicron_2$. We can combine the output functions to output a pair of words. That is to say, let $\lambda:\Delta \rightarrow B^*\times B^*$ be defined as $\lambda(\delta) = (\lambda_1(\delta),\lambda_2(\delta))$, where $\delta \in \Delta$ and $\Delta$ is the set of transitions of $\calA$. Similarly let  $\omicron(p) = (\omicron_1(p),\omicron_2(p))$, where $p \in F$ and $F$ is the set of accepting states of $\calA$. Henceforth, we can assume that we are given a DFA $\calA$ with the output functions $\lambda$ and $\omicron$, denoted as the sequential transducer $\calT$. Since the input words are inconsequential for computing the distance, we can convert the transducer $\calT$ to an automaton $\calA$ that accepts a set of pairs of output words over $B^* \times B^*$, i.e., $L(\calA) = \{(u,v) \in B^* \times B^* \mid (u,v) = \calT(w), w \in \dom(\calT)\}$. Clearly, transducers $\calT_1$ and $\calT_2$ are close w.r.t.~$d$ if and only if there exist an integer $k \geq 0$ such that $\forall (u,v) \in L(\calA)$, $d(u,v) \leq k$. 

Conjugacy of words plays an important role in closeness problems. A pair of words $(u,v)$ is \emph{conjugate} if there exist words $x$ and $y$ (possibly empty) such that $u=xy \text{ and }v=yx$ or equivalently, $u$ and $v$ are cyclic shifts of one another. For example, $(\mathit{ aaab}, \mathit{aaba})$ is a conjugate pair where $x=a$ and $y=aab$. Conjugacy relation is an equivalence relation on the set of words. A set of pairs is \emph{conjugate} if each pair in the set is conjugate.

\begin{lemma}\label{kleeneconjugacy}
Let $\calT_1$ and $\calT_2$ be two sequential transducers that define a function from $A^*$ to $B^*$. If $\calT_1$ and $\calT_2$ are close w.r.t.~a metric $d\in \{d_l,d_h,d_t,d_c,\dlcs,\ddl\}$, then every loop in the trim automaton over $B^* \times B^*$, that accepts set of all pairs of output words of $\calT_1$ and $\calT_2$ on any input, generates only conjugate pair of words.
\end{lemma}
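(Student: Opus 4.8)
The plan is to reduce all six metrics to the single case of the Levenshtein distance $d_l$. By \Cref{lemma:metricrelation} we have $d_l \lesssim d$ for every $d\in\{d_l,d_h,d_t,d_c,\dlcs,\ddl\}$ (combining $d_l\le d_h\lesssim d_t$, $d_l\lesssim d_c$, and $d_l\approx\dlcs\approx\ddl$), so by \Cref{remark:metricrelation} closeness w.r.t.\ $d$ implies closeness w.r.t.\ $d_l$. Since the conclusion --- that every loop generates a conjugate pair --- is a purely structural property independent of $d$, it suffices to prove the lemma for $d=d_l$. Fix a loop at a state $q$ of the trim automaton $\calA$ over $B^*\times B^*$, and let $(\alpha,\beta)$ be the pair of words it outputs in one traversal. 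Because $\calA$ is trim, there is an accepting path through $q$ giving prefix output $(u_0,v_0)$ reaching $q$ and suffix output $(u_1,v_1)$ from $q$ to a final state. Pumping the loop $i$ times yields, for every $i\ge 0$, an accepted pair $(P_i,Q_i)=(u_0\alpha^i u_1,\; v_0\beta^i v_1)$, and closeness gives a constant $K$ with $d_l(P_i,Q_i)\le K$ for all $i$.

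First I would pin down the lengths. Since $\dlen\le d_l$ by \Cref{lemma:metricrelation}, we get $\dlen(P_i,Q_i)\le K$ for all $i$; as $|P_i|-|Q_i|=(|\alpha|-|\beta|)\,i+c$ is linear in $i$, boundedness forces $|\alpha|=|\beta|=:m$. Thus the two sides of the loop output words of equal length, which is the prerequisite for conjugacy.

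The heart of the argument, and the step I expect to be the main obstacle, is to upgrade ``bounded $d_l$ for all $i$'' to conjugacy of $\alpha$ and $\beta$. I would view an optimal edit script from $P_i$ to $Q_i$ as a monotone alignment in which all but at most $K$ positions of $P_i$ are matched by equality to a position of $Q_i$; for such a matched pair $(p,q)$ the offset $q-p$ equals (insertions so far) $-$ (deletions so far) and therefore changes only at the $\le K$ insertion/deletion operations, staying within $[-K,K]$. Hence the equal-matched positions of $P_i$ split into at most $K+1$ runs, on each of which the offset is a constant $s$ and $P_i[p]=Q_i[p+s]$. These runs cover at least $|P_i|-K\to\infty$ positions, so for $i$ large enough the longest run exceeds $|u_0|+|u_1|+|v_0|+|v_1|+2m$ in length; trimming it to avoid the fixed prefixes and suffixes of $P_i$ and $Q_i$ leaves a central window of length $>m$ lying inside the $\alpha^i$-block of $P_i$ and mapping, under the fixed shift $s$, into the $\beta^i$-block of $Q_i$. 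On this window $P_i$ is $m$-periodic with generator $\alpha$ and $Q_i$ is $m$-periodic with generator $\beta$, so the identity $P_i[p]=Q_i[p+s]$ over more than $m$ consecutive positions forces $\beta$ to be a fixed cyclic rotation of $\alpha$. Writing $\alpha=xy$ for the corresponding split gives $\beta=yx$, i.e.\ $(\alpha,\beta)$ is conjugate.

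The delicate points to get right are (i) that the offset is locally constant and boundedly many-valued --- this rests only on the alignment being monotone and using $\le K$ non-matching operations --- and (ii) the pigeonhole step locating a within-block window longer than one period, which requires $i$ to be chosen large relative to $K$, $m$, and the fixed prefix/suffix lengths. Once such a window is found, the periodicity-to-conjugacy conclusion is the classical observation that two $m$-periodic words agreeing on a length-$m$ window have rotated generators.
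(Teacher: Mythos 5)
Your proof is correct, and its skeleton is the same as the paper's: trim the automaton, pump the loop alongside fixed prefix/suffix paths, and use the uniform bound on the edit distance to force a large literal match between the pumped outputs, which then yields conjugacy of the loop labels. Two steps genuinely differ, though. First, you reduce everything to $d_l$ up front via $d_l \lesssim d$ and \Cref{remark:metricrelation}; the paper instead runs the argument for a generic $d$ from the table, and its pivotal sentence (``there exist large portions of $u$'s and $v$'s that match'') must then be read separately against each metric's edit semantics --- e.g.\ for $d_c$ the match is a bounded rotation, for $d_h$ an exact positional match. Your reduction makes this uniform, and your alignment/offset analysis (bounded shift, boundedly many constant-offset runs) supplies exactly the rigor that the paper leaves implicit at that point; this is the stronger feature of your write-up. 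Second, the endgame differs: you extract a matching window of length $\geq m$ inside the periodic blocks and invoke the rotation lemma for $m$-periodic words, whereas the paper concludes that $v$ is a factor of $uu$ (with $|u|=|v|$) and finishes with the factor argument $v=xy$, $u=px=yq$, $|p|=|y| \Rightarrow p=y$, hence $u=yx$. These are interchangeable pieces of word combinatorics; yours is slightly more self-contained, the paper's is shorter once the match is granted. One small quantitative slip on your side: the equal-matched positions split into at most $2K+1$, not $K+1$, maximal intervals of consecutive constant-offset matches, since substitutions also interrupt runs; any bound linear in $K$ suffices for your pigeonhole, so nothing breaks.
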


\begin{proof} This proof is an adaptation of a related result in \cite{decidingconjugacy}. 
Let $\calA$ be a trim automaton that realises the pair of output words of transducers $\calT_1$ and $\calT_2$ on any input. Since $\calT_1$ and $\calT_2$ are close w.r.t.~$d$, there exist an integer $k \geq 0$ such that $\forall (u,v) \in L(\calA)$, $d(u,v) \leq k$.  Let $(u,v)$ be a pair labelled in a loop rooted at some state $q$. Hence $(u^\ell,v^\ell)$ for each $\ell \geq 0$ is also a pair in a loop rooted at $q$. We can safely assume that $|u| = |v|$, otherwise the edit distance will be unbounded as each iteration will increase the edit distance by a difference in length of $u$ and $v$  (\Cref{Len:1} of \Cref{lemma:metricrelation}). 

Since $\calA$ is trimmed, there exists a path from an initial state $q_0$ to $q$ and from $q$ to a final state $q_f$. Let $(\as_0,\bs_0)$ be a pair labelled in a path from $q_0$ to $q$, and let $(\as_1,\bs_1)$ be a pair labelled in a path from $q$ to $q_f$. Thus, pair $(\as_0,\bs_0)(u^\ell,v^\ell)(\as_1,\bs_1)$ belongs to $L(\calA)$ where $\ell = 2^k$ (some value much larger than $k$). Since $\ell$ is much larger than $k$ and $d(\as_0u^\ell \as_1,\bs_0v^\ell \bs_1) \leq k$, there exist large portions of $u$'s and $v$'s that match. Therefore, we can infer that $u$ is a factor of $vv$, and $v$ is a factor of $uu$.

\begin{figure}[t]
\centering
\scalebox{.8}{
\begin{tikzpicture}

\draw[line width=.55mm,loosely dotted] [-](-1,2) -- (0,2)[anchor=south, xshift=-1.5cm, yshift=0cm]node{};
\draw[line width=.55mm,densely dotted] [|-](0,2) -- (2.5,2)[anchor=south, xshift=-1.5cm, yshift=0cm]node{$u$};
\draw[line width=.55mm,densely dotted] [|-](2.5,2) -- (5,2)[anchor=south, xshift=-1.5cm, yshift=0cm]node{$u$};
\draw[line width=.55mm,densely dotted] [|-](5,2) -- (7.5,2)[anchor=south, xshift=-1.5cm, yshift=0cm]node{$u$};
\draw[line width=.55mm,densely dotted] [|-](7.5,2) -- (10,2)[anchor=south, xshift=-1.5cm, yshift=0cm]node{$u$};
\draw[line width=.55mm,loosely dotted] [|-](10,2) -- (11,2)[anchor=south, xshift=-1.5cm, yshift=0cm]node{};

\draw[line width=.55mm,loosely dotted] [-](.5,0) -- (1.5,0)[anchor=north, xshift=-1.5cm, yshift=0cm]node{};
\draw[line width=.55mm,densely dotted] [|-](1.5,0) -- (4,0)[anchor=north, xshift=-1.5cm, yshift=0cm]node{$v$};
\draw[line width=.55mm,densely dotted] [|-](4,0) -- (6.5,0)[anchor=north, xshift=-1.5cm, yshift=0cm]node{$v$};
\draw[line width=.55mm,densely dotted] [|-](6.5,0) -- (9,0)[anchor=north, xshift=-1.5cm, yshift=0cm]node{$v$};
\draw[line width=.55mm,densely dotted] [|-](9,0) -- (11,0)[anchor=north, xshift=-1.5cm, yshift=0cm]node{};

\draw [decorate,
decoration = {calligraphic brace,
raise=3pt,
amplitude= 5pt,mirror
}] (2.5,2) -- (4,2);
\draw [decorate,
decoration = {calligraphic brace,
raise=3pt,
amplitude= 5pt,mirror
}] (4,2) -- (5,2);

\draw [decorate,
decoration = {calligraphic brace,
raise=3pt,
amplitude= 5pt,mirror
}] (5,2) -- (6.5,2);

\draw [decorate,
decoration = {calligraphic brace,
raise=3pt,
amplitude= 5pt,mirror
}] (6.6,2) -- (7.5,2);

\draw [decorate,
decoration = {calligraphic brace,
raise=3pt,
amplitude= 5pt
}] (2.5,0) -- (4,0);
\draw [decorate,
decoration = {calligraphic brace,
raise=3pt,
amplitude= 5pt
}] (4,0) -- (5,0);

\draw [decorate,
decoration = {calligraphic brace,
raise=3pt,
amplitude= 5pt
}] (5,0) -- (6.5,0);

\draw [decorate,
decoration = {calligraphic brace,
raise=3pt,
amplitude= 5pt
}] (6.6,0) -- (7.5,0);

\draw[draw=none] [-](2.5,2) -- (4,2)[anchor=south, xshift=-0.7cm, yshift=-.7cm]node{$p$};
\draw[draw=none] [-](4,2) -- (5,2)[anchor=south, xshift=-0.45cm, yshift=-.7cm]node{$x$};
\draw[draw=none] [-](5,2) -- (6.5,2)[anchor=south, xshift=-0.7cm, yshift=-.7cm]node{$y$};
\draw[draw=none] [-](6.5,2) -- (7.5,2)[anchor=south, xshift=-0.45cm, yshift=-.7cm]node{$q$};

\draw[draw=none] [-](2.5,0) -- (4,0)[anchor=south, xshift=-0.7cm, yshift=.25cm]node{$p$};
\draw[draw=none] [-](4,0) -- (5,0)[anchor=south, xshift=-0.45cm, yshift=.25cm]node{$x$};
\draw[draw=none] [-](5,0) -- (6.5,0)[anchor=south, xshift=-0.7cm, yshift=.25cm]node{$y$};
\draw[draw=none] [-](6.5,0) -- (7.5,0)[anchor=south, xshift=-0.45cm, yshift=.25cm]node{$q$};

\end{tikzpicture}}
\caption{$v$ as infix of $uu$.}

\label{uvlarge}
\end{figure}
Since $v$ is an infix of $uu$, the following holds as shown in \Cref{uvlarge}. There exist words $x,y,p$ and $q$ such that $v = xy$ and $u = px = yq$. 
Since $|u| = |v|$, length of $p$ and length of $y$ are the same, that implies $p=y$ (since $u = px = yq)$. Therefore, $u = yx$. Hence $u$ and $v$ are conjugate words. Since the pair $(u,v)$ was arbitrary, any pair generated by a loop in $\calA$ is conjugate.
\end{proof}

\subsection{Closeness w.r.t.~Levenshtein distances and Conjugacy}
In this subsection, we decide closeness w.r.t.~Levenshtein family of distances --- Levenshtein, Damerau-Levenshtein, and LCS distances --- and conjugacy distance. Levenshtein family of distances are all equivalent with respect to closeness problems by \Cref{lemma:metricrelation} and \Cref{remark:metricrelation}.

We have already seen that given two unambiguous transducers $\calT_1$ and $\calT_2$ with identical domains, there exists an automaton $\calA$ over $B^* \times B^*$ that accepts a set of all pairs of output words of $\calT_1$ and $\calT_2$ on any input. Thus, we can state the distance and closeness problems in terms of rational expressions over $B^* \times B^*$.

We define \emph{pairs over the alphabet $B$} to be the set $B^* \times B^*$ with the pointwise concatenation $(u,v) \cdot (u',v') = (u\cdot u', v \cdot v')$. 
A \emph{rational expression of pairs} over the alphabet $B$ is a rational expression over the alphabet $\{(b,b') \mid b,b' \in \(B \cup \{\epsilon\}\)\}$ that generates subset of pairs over $B$.

From the automaton $\calA$ over $B^* \times B^*$,  using state elimination method (\cite{kozen}, Lecture 9), we can construct the rational expression of pairs $E$  for the output pairs generated by the transducer $\calT_1$ and $\calT_2$ on any input. 
We can lift the metric $d$ to expressions by letting $d(E) = \sup{\{d(u,v) \mid (u,v) \in L(E)\}}$.
Clearly $d(E) = d(\calT_1,\calT_2)$. Thus, the distance and closeness problems of sequential and unambiguous transducers reduce to the corresponding problems for a rational expression of pairs. Henceforth we assume that we are given a rational expression of pairs. 

In the context of conjugacy distance, the closeness of a rational expression necessarily implies that every pair in the expression is conjugate. Otherwise, if there exists a pair $(u,v) \in L(E)$ such that $u$ is not conjugate to $v$, then $d_c(u,v) = \infty$, thus $d_c(E) = \infty$. In fact, this is also a sufficient condtion. The proof relies on the results from \cite{decidingconjugacy} that studies the conjugacy of rational expression over pairs of words. 
It crucially uses the notion of a \emph{common witness} of a set of pairs.
\begin{definition}[Common Witness of a Set of Pairs]
 A \emph{witness} of  pair of conjugate words $(u,v)$ is a word $z$ such that either $uz = zv$ (called an \emph{inner witness}) or $zu = vz$ (called an \emph{outer witness}). A \emph{common witness} of a set of pairs is a word $z$ such that either $z$ is an inner witness of every pair in the set, or $z$ is an outer witness of every pair in the set.
\end{definition}
Lyndon and Sch\"utzenberger in 1962 gave a characterisation of conjugacy of a pair of words, stated as a pair of words is conjugate if and only if it has both inner and outer witness (Proposition 1.3.4 of \cite{lyndonSch}). In \cite{decidingconjugacy}, it is generalised to a set of pairs as follows.
\begin{theorem} (\cite{decidingconjugacy})\label{conjugacytheorem}
Let $M = (\alpha_0,\beta_0) G_1^* (\alpha_1,\beta_1) \cdots G_k^*(\alpha_k,\beta_k)$ be a set of pairs where $G_1, \ldots, G_k$, $k > 0$ are arbitrary sets of pairs of words, and $(\alpha_0,\beta_0), \ldots, (\alpha_k,\beta_k)$ are arbitrary pairs of words. The set $M$ is conjugate iff $M$ has a common witness.
\end{theorem}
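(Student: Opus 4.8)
The \emph{if} direction is immediate. If $z$ is a common inner witness of $M$, then every pair $(u,v)\in M$ satisfies $uz=zv$, and by the Lyndon--Sch\"utzenberger characterisation (Proposition 1.3.4 of \cite{lyndonSch}) each such pair is conjugate; the symmetric argument handles a common outer witness. The starting point for the converse is the observation that inner-witnesshood is closed under products: if $uz=zv$ and $u'z=zv'$, then $uu'z=u(u'z)=u(zv')=(uz)v'=zvv'$, so $z$ also witnesses $(uu',vv')$, and dually for outer witnesses. Hence producing a single word $z$ that witnesses each \emph{building block} --- the constant pairs $(\as_j,\bs_j)$ together with every generator occurring in every $G_i$ --- automatically yields a witness, and therefore conjugacy, for every element of $M$. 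The content of the theorem thus lies entirely in the converse: from the conjugacy of \emph{all} products one must extract one uniform witness.

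For the converse I would first dispose of lengths. A conjugate pair has equal-length components, and since each $G_i$ sits under a Kleene star, any single generator $(g,h)\in G_i$ can be pumped in isolation inside an otherwise fixed product; equal length of the result for every iteration count forces $|g|=|h|$, after which a single balance condition forces $\sum_j|\as_j|=\sum_j|\bs_j|$. I may therefore assume every building block is length-balanced. I would then argue by induction on $k$, the number of starred blocks, the crux being a \emph{two-block merging lemma}: if $P=(u,v)$ and $Q=(s,t)$ are length-balanced pairs such that $P$, $Q$, and the products $PQ$, $QP$, $P^mQ^n$ all occur as factors of elements of $M$ and are conjugate, then $P$ and $Q$ admit a common inner witness (or a common outer witness).

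To prove the merging lemma I would fix witnesses for the individual pairs and examine a heavily pumped product. Conjugacy of the element containing $P^mQ^n$ gives a word $w$ with $c_1\,u^m s^n\,c_2\,w = w\,d_1\,v^m t^n\,d_2$ for fixed context words $c_1,c_2,d_1,d_2$ coming from the rest of the expression. Taking $m,n$ large makes the central blocks $u^m$ and $t^n$ dominate, exhibiting long periodic regions inside a single word equation; Fine and Wilf's periodicity theorem then forces the periods carried by the individual witnesses of $P$ and $Q$ to align, so that one word --- the common primitive root, equivalently a single rotation amount --- simultaneously satisfies $uz=zv$ and $sz=zt$. Applying the lemma to every pair of building blocks that can be made adjacent, and propagating the resulting witness along the linear order of the expression, yields a single word witnessing all blocks at once, which by the product-closure of the first paragraph witnesses all of $M$.

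I expect the main obstacle to be exactly this globalisation rather than any single pairwise step. For an individual conjugate pair both an inner and an outer witness exist, but the dichotomy must be resolved \emph{uniformly}: the extracted $z$ has to be of the same type (inner, say) for every block, and the Fine--Wilf alignments obtained for different block pairs must be mutually consistent, not merely pairwise valid. Degenerate blocks with $u=v$, whose witness sets are large (all words commuting with $u$), and blocks whose components are non-primitive require separate bookkeeping so that a common primitive root can still be singled out. Threading one rotation through the entire expression, instead of re-choosing it per block, is where the real combinatorial work concentrates.
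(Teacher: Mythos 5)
Note first that this paper does not prove the statement at all: it is imported verbatim from \cite{decidingconjugacy}, so there is no in-paper argument to compare against, and your attempt has to be measured against what a complete proof must deliver. Your easy direction is correct: an equation $uz=zv$ forces conjugacy of $u$ and $v$ by Lyndon--Sch\"utzenberger, and your product-closure computation ($uz=zv$ and $u'z=zv'$ imply $uu'z=zvv'$, dually for outer witnesses) is exactly the mechanism by which a witness for the building blocks extends to every element of $M$. The length-balancing reduction by pumping a single generator is also fine.

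The genuine gap is in the converse, and you have in effect named it yourself. Your plan rests on a two-block merging lemma plus a globalisation step, and neither is carried out. In the merging lemma, the word $w$ witnessing conjugacy of the element containing $c_1u^ms^nc_2$ \emph{depends on $m$ and $n$}; before Fine--Wilf can be invoked you must control that dependence (extract witnesses of uniformly bounded length, or otherwise pin the rotation amount independently of $m,n$), and this is precisely where the cases you set aside bite: self-conjugate blocks ($u=v$, whose inner-witness set is the entire centraliser $\rho(u)^*$ of the primitive root $\rho(u)$) and non-primitive components make the period alignment non-unique, so ``the common primitive root'' is not well defined without further argument. The final step --- producing one word $z$, of one uniform type (inner for \emph{all} pairs or outer for \emph{all} pairs), valid simultaneously for every starred block \emph{and} every constant pair $(\alpha_j,\beta_j)$ --- is the actual content of the theorem, yet in your write-up it appears only as ``propagating the resulting witness along the linear order of the expression,'' followed by the admission that this globalisation is where the real work lies. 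As it stands nothing rules out that the pairwise alignments are mutually inconsistent, or that different blocks force witnesses of different types relative to your chosen candidates; ruling out exactly these scenarios is what the inductive, case-splitting proof in \cite{decidingconjugacy} is for. A proposal that defers its central difficulty to an acknowledged obstacle is a strategy, not a proof.
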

Existence of a common witness bounds the conjugacy distance of an expression as follows.
\begin{claim}\label{conjugacyclaim}
If a rational expression over pairs $E$ has a common witness $z$, then $d_c(E) \leq |z|$.
\end{claim}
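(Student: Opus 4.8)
The plan is to show that if $z$ is a common witness of the rational expression $E$, then the conjugacy distance of every pair in $L(E)$ is at most $|z|$, which immediately gives $d_c(E)\leq |z|$. Recall that $z$ being a common witness means that either $z$ is an inner witness of every pair (so $uz = zv$ for all $(u,v)\in L(E)$) or $z$ is an outer witness of every pair (so $zu = vz$ for all $(u,v)\in L(E)$). The two cases are symmetric, so I would treat the inner-witness case in detail and remark that the outer case is analogous.

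First I would fix an arbitrary pair $(u,v)\in L(E)$ and use the defining equation $uz = zv$. The key observation is that this single word equation forces a very specific structure relating $u$ and $v$, and that the length $|z|$ directly controls how far apart $u$ and $v$ are as conjugates. Concretely, from $uz = zv$ one deduces that $z$ is a prefix of $uz$ and a suffix of $zv$, and by a standard factorisation argument (reasoning about overlaps in $uz = zv$) one obtains words $x,y$ with $u = xy$, $v = yx$, where one of $x,y$ has length determined by $|z| \bmod |u|$. In particular, $z$ witnesses a cyclic rotation of $u$ into $v$ by an amount at most $\min(|z|,|u|)$ positions. Since $d_c$ counts the minimum number of single left/right cyclic shifts needed to transform $u$ into $v$, this rotation can be realised by at most $|z|$ cyclic shifts, giving $d_c(u,v)\leq |z|$.

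The main obstacle I anticipate is making the bound $d_c(u,v)\leq |z|$ fully precise from the equation $uz = zv$, since a priori $z$ could be longer than $u$ and the rotation amount is really $|z| \bmod |u|$ rather than $|z|$ itself. I would handle this by noting that when $z$ is an inner witness, $z$ can be written (by periodicity arising from $uz=zv$) so that the effective shift between $u$ and $v$ is at most $|z|$; since the conjugacy distance is the minimal number of cyclic shifts and a shift by $|z| \bmod |u|$ positions requires at most $|z|$ single-letter cyclic shifts, the inequality $d_c(u,v)\leq |z|$ holds regardless. Here I am using that a cyclic shift of $u$ by $m$ positions costs exactly $\min(m, |u|-m)\leq m$ single shifts, and $m \leq |z|$.

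Finally, since the pair $(u,v)$ was arbitrary and the bound $d_c(u,v)\leq |z|$ is uniform over all of $L(E)$, I would conclude
$$d_c(E) = \sup\{\, d_c(u,v) \mid (u,v)\in L(E)\,\} \leq |z|,$$
which is exactly the claim. The outer-witness case $zu = vz$ is symmetric: it forces $u = yx$, $v = xy$ for the corresponding factorisation, yielding the same bound by the same rotation argument.
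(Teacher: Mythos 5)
Your proof is correct and takes essentially the same route as the paper's: both derive a conjugate factorization $u = xy$, $v = yx$ from the witness equation $uz = zv$ and bound the number of cyclic shifts by $|z|$; the paper does this via a two-case split ($|u| > |z|$, where $z$ itself is the prefix giving $(u,v)=(zp,pz)$, versus $|u| \leq |z|$, where conjugates of length $|u| \leq |z|$ are trivially within $|z|$ shifts), while you obtain the same bound uniformly by reducing the rotation amount modulo $|u|$ via the standard solution structure of $uz = zv$. One small imprecision: the cost of rotating $u$ by $m$ positions is \emph{at most} $\min(m, |u|-m)$, not exactly that (periodic words can be cheaper), but since only the upper bound is used, nothing breaks.
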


\begin{proof}
Since $E$ has a common witness, either $\forall (u,v) \in L(E)$, $uz = zv$, or $\forall (u,v) \in L(E)$, $zu = vz$. WLOG, assume that  $\forall (u,v) \in L(E)$, $uz = zv$. Now, for any pair $(u,v) \in L(E)$:
\begin{enumerate}
\item If $|u| > |z|$, then $z$ is a prefix of $u$ and suffix of $v$ and hence $(u,v) = (zp,pz)$ for some word $p \in A^*$. Therefore $d_c(u,v) \leq |z|$ since $v$ can be  obtained by $|z|$ left cyclic shifts of $u$.
\item Otherwise, when $|u| \leq |z|$, the number of cyclic shifts required to transform $u$ to $v$ (note that $u$ and $v$ are conjugate since they have a witness) is less than $|u| \leq |z|$. \qedhere
\end{enumerate}
\end{proof}
A rational expression is \emph{sumfree} if it does not use sum (i.e., $+$). In \cite{decidingconjugacy}, it is shown that if  a common witness exists, it is computable 
for a sumfree rational expression over pairs of words.
It is folklore that every rational expression is equivalent to a sum of sumfree expressions \cite{decidingconjugacy}. 
The proposition below implies that to show closeness for a sum of sumfree expressions, it suffices to show closeness for each of its constituent sumfree expressions. 
\begin{proposition}\label{maxedit}
Let $E = E_1 + \cdots + E_k, k\geq 1$ be a rational expression of pairs. Then $d(E) =  \max (d(E_1), \ldots, d(E_k))$ for all word metrics $d$.
\end{proposition}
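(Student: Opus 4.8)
The plan is to reduce the statement to a routine fact about suprema over finite unions, exploiting that the sum operator of a rational expression of pairs denotes set union. First I would observe that, by the semantics of the $+$ operator, $L(E) = L(E_1) \cup \cdots \cup L(E_k)$, and therefore $d(E) = \sup\{d(u,v) \mid (u,v) \in \bigcup_{i=1}^k L(E_i)\}$, where the supremum is taken in the totally ordered, complete extended naturals $\N \cup \{\infty\}$.

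Next I would establish the two inequalities separately. For $d(E) \geq \max_i d(E_i)$, I would note that $L(E_i) \subseteq L(E)$ for each $i$, and that the supremum of a function over a larger set dominates its supremum over a subset; hence $d(E) \geq d(E_i)$ for every $i$, and so $d(E) \geq \max_i d(E_i)$, since the maximum of finitely many terms is well defined and attained. For the reverse inequality, I would fix an arbitrary pair $(u,v) \in L(E)$; because the union is finite, $(u,v) \in L(E_j)$ for some index $j$, so $d(u,v) \leq d(E_j) \leq \max_i d(E_i)$. As this bound is uniform over all pairs in $L(E)$, taking the supremum on the left yields $d(E) \leq \max_i d(E_i)$. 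Combining both directions gives the claimed equality.

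I expect no genuine obstacle here: the argument uses nothing about the metric axioms — only that $d$ takes values in the ordered, complete set $\N \cup \{\infty\}$ — so the conclusion holds verbatim for \emph{every} word metric $d$, integer-valued or not. The sole point requiring care is the \emph{finiteness} of the sum $E = E_1 + \cdots + E_k$: it is precisely this that guarantees that $\max$ (rather than a further supremum) is the correct outer operation and that the maximum is attained even when some $d(E_i) = \infty$. For an infinite union the analogous identity would require replacing $\max$ by $\sup$, but rational expressions admit only finite sums, so the stated form with $\max$ is exactly right.
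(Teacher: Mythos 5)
Your proof is correct, but the nontrivial inequality is handled by a genuinely different (and cleaner) argument than the paper's. The paper first reduces to the binary case $E = E_1 + E_2$ and then proves $d(E) \leq \max(d(E_1),d(E_2))$ by a case split on whether $d(E)$ is finite or infinite: in the finite case it picks a pair \emph{attaining} the supremum and notes that the summand containing it has the same distance; in the infinite case it argues that either some pair has infinite distance or there are infinitely many pairs of unbounded distance, and that one of the two summands must inherit this. Your argument is uniform and needs no reduction, no induction on $k$, and no case analysis: every pair of $L(E)$ lies in some $L(E_j)$, hence its distance is bounded by $\max_i d(E_i)$, and taking the supremum over $L(E)$ finishes. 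This also sidesteps a subtlety the paper glosses over: the paper's finite case tacitly assumes a finite supremum is attained, which is automatic for integer-valued metrics (the paper's real concern) but not for arbitrary word metrics, which the paper defines as $[0,\infty]$-valued in Section 2.2; your pointwise bound requires no attainment. One minor correction to your closing remarks: the paper's word metrics take values in $[0,\infty]$, not $\N \cup \{\infty\}$ as you assert, but this changes nothing — your argument uses only the complete order structure of $[0,\infty]$ and goes through verbatim.
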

\begin{proof}
It suffices to show that if $E=E_1+E_2$, then $d(E) = \max(d(E_1),d(E_2))$. Since $L(E_i) \subseteq L(E), i \in \{1,2\}$, we can deduce that $d(E_i) \leq d(E)$ and hence $\max(d(E_1),d(E_2)) \leq d(E)$. It remains to show that $d(E) \leq \max(d(E_1),d(E_2))$. We have two cases.
\begin{enumerate}
\item If $d(E) = k \in \N$, then for each pair in $E$ the distance is at most $k$, and there is a pair $(u,v) \in L(E)$ with distance $k$. Then, for each pair in $E_1$ as well as $E_2$ the distance is at most $k$, and the expression that contains the pair $(u,v)$ has distance $k$. 

\item If $d(E) = \infty$, then either there is a pair $(u,v)$ with distance $\infty$, in which case one of $d(E_1),d(E_2)$ is $\infty$. Otherwise, there is an infinite subset of pairs $L \subseteq L(E)$ such that for each $k \in N$ there is a pair with distance at least $k$. Since $L$ is infinite, one of $L \cap L(E_1), L \cap L(E_2)$ is infinite, and the corresponding expression has distance $\infty$. \qedhere
\end{enumerate}
\end{proof}
An expression is conjugate, if every pair generated by the expression is conjugate. The following proposition characterises closeness w.r.t.~conjugacy distance.

\begin{proposition}\label{closewrtconjugacy}
A rational expression over pairs of words is close w.r.t.~conjugacy distance if and only if the expression is conjugate. Furthermore, the closeness w.r.t.~conjugacy distance is decidable.
\end{proposition}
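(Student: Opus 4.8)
The plan is to prove both directions of the biconditional, then derive decidability from the constructive content of the forward argument. The ``only if'' direction is the easy one and was essentially observed already in the text preceding the statement: if the expression is \emph{not} conjugate, then there is some pair $(u,v) \in L(E)$ with $u$ not a cyclic shift of $v$, so $d_c(u,v) = \infty$ and hence $d_c(E) = \infty$, meaning $E$ is not close. So the bulk of the work is the ``if'' direction, namely that conjugacy of the expression implies boundedness of $d_c(E)$.

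For the ``if'' direction, the strategy is to reduce to the sumfree case and then invoke the common-witness machinery. First I would write $E = E_1 + \cdots + E_k$ as a sum of sumfree expressions, which is always possible by the folklore fact cited in the excerpt. By \Cref{maxedit}, $d_c(E) = \max(d_c(E_1), \ldots, d_c(E_k))$, so it suffices to bound $d_c(E_i)$ for each sumfree $E_i$ separately; since the maximum of finitely many finite numbers is finite, boundedness of each piece yields boundedness of the whole. Now fix a sumfree $E_i$. Since $E$ is conjugate, each $L(E_i) \subseteq L(E)$ is conjugate as well, so every pair generated by $E_i$ is conjugate. A sumfree rational expression over pairs has exactly the shape $(\alpha_0,\beta_0) G_1^* (\alpha_1,\beta_1) \cdots G_m^*(\alpha_m,\beta_m)$ required by \Cref{conjugacytheorem}, so conjugacy of $L(E_i)$ gives a common witness $z_i$ of the set $L(E_i)$. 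By \Cref{conjugacyclaim}, this yields $d_c(E_i) \leq |z_i| < \infty$. Taking the maximum over $i$ bounds $d_c(E)$, completing the ``if'' direction.

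For decidability, I would lean on the effectiveness results imported from \cite{decidingconjugacy}. The procedure is: decompose $E$ into sumfree summands $E_1, \ldots, E_k$; for each $E_i$, test whether $L(E_i)$ is conjugate, which by \Cref{conjugacytheorem} amounts to testing for the existence of a common witness, and this is decidable (indeed the witness is computable) for sumfree expressions by the cited result. The expression $E$ is conjugate iff every $E_i$ is conjugate. Since conjugacy of $E$ is, by the biconditional just proved, equivalent to closeness w.r.t.\ conjugacy distance, this decision procedure decides closeness. One subtlety worth spelling out is that conjugacy of the whole set $L(E)$ is equivalent to conjugacy of each $L(E_i)$: conjugacy is a property of individual pairs, and $L(E) = \bigcup_i L(E_i)$, so a pair of $L(E)$ is nonconjugate iff it is a nonconjugate pair of some $L(E_i)$; hence testing each summand suffices.

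The main obstacle is really just the interface with the external machinery rather than any deep new argument: the statement is essentially an assembly of \Cref{maxedit}, \Cref{conjugacytheorem}, and \Cref{conjugacyclaim}, all of which are available. The one place demanding care is verifying that each sumfree summand genuinely matches the syntactic form $(\alpha_0,\beta_0) G_1^* \cdots G_m^*(\alpha_m,\beta_m)$ demanded by \Cref{conjugacytheorem} — in particular that nested or concatenated starred subexpressions can be flattened into that normal form so the common-witness theorem applies verbatim. I would handle this by appealing to the standard normal form for sumfree (i.e.\ $+$-free) rational expressions, under which every such expression is equivalent to one of exactly this shape. The remaining steps are routine and do not require detailed computation.
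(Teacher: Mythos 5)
Your proposal is correct and follows essentially the same route as the paper's proof: decompose $E$ into sumfree summands, apply \Cref{maxedit} to reduce to each summand, invoke \Cref{conjugacytheorem} to obtain a common witness, bound the distance via \Cref{conjugacyclaim}, and derive decidability from the computability of common witnesses in \cite{decidingconjugacy}. Your extra care about the syntactic normal form of sumfree expressions is a reasonable point the paper leaves implicit, but it does not change the argument.
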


\begin{proof}
One direction is trivial. Assume $E$ to be an arbitrary rational expression of pairs and is conjugate. Let $E = E_1 + E_2 +\cdots + E_k$ where $E_1,E_2, \ldots, E_k$ are sumfree expressions. Since $E$ is conjugate, each of its sumfree constituent $E_i$ for $1 \leq i \leq k$ is also conjugate. Using \Cref{conjugacytheorem}, each $E_i$ has a common witness, say $z_i$. From \Cref{conjugacyclaim}, $d_c(E_i) \leq z_i$. Therefore, $d_c(E)$ is close w.r.t.~conjugacy distance by \Cref{maxedit}. Hence, to decide closeness of $E$ w.r.t.~conjugacy distance, it suffices to check if $E$ is conjugate. This reduces to checking if a common witness exists for each sumfree constituents. It is shown to decidable in \cite{decidingconjugacy}.
\end{proof}

Now consider the case of Levenshtein distances. From \Cref{kleeneconjugacy}, if an expression is close w.r.t.~Levenshtein distances, it is necessary that every pair generated by a Kleene star in the expression needs to be conjugate. Using common witness, we show that it is also a sufficient condition.
\begin{claim}\label{closenessClaim}
If a rational expression of pairs $E$ has a common witness $z$, then $d_l(E) \leq 2|z|$.
\end{claim}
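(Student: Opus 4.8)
The plan is to prove that a common witness $z$ for a rational expression $E$ over pairs gives the bound $d_l(E) \leq 2|z|$, paralleling \Cref{conjugacyclaim} but now for Levenshtein distance, where we are allowed insertions, deletions, and substitutions rather than only cyclic shifts. The key observation is that a common witness already controls the conjugacy distance (by \Cref{conjugacyclaim}, $d_c(E) \leq |z|$), so I would leverage the relationship between conjugacy and Levenshtein distance rather than re-deriving the matching structure from scratch.

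The main step is to relate a single cyclic shift to a constant number of Levenshtein edits. A left (or right) cyclic shift moves one letter from one end of the word to the other; this can be simulated by one deletion followed by one insertion, i.e.\ at most $2$ Levenshtein edits. Hence for any conjugate pair $(u,v)$ we have $d_l(u,v) \leq 2 \cdot d_c(u,v)$, which is essentially the inequality $d_l \lesssim d_c$ recorded in \Cref{lemma:metricrelation}, but here I need the precise constant $2$ rather than the asymptotic relation. Combining this with \Cref{conjugacyclaim}, for every pair $(u,v) \in L(E)$ I get $d_l(u,v) \leq 2\, d_c(u,v) \leq 2|z|$, and taking the supremum over $L(E)$ yields $d_l(E) \leq 2|z|$.

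Concretely, I would first recall that, since $E$ has a common witness, every pair $(u,v) \in L(E)$ is conjugate, so $d_c(u,v)$ is finite. I would then split into the two cases from the proof of \Cref{conjugacyclaim} to make the constant explicit. If $|u| > |z|$ (assuming WLOG the inner witness case $uz=zv$), then $(u,v)=(zp, pz)$ for some $p$, so $v$ is obtained from $u$ by $|z|$ cyclic shifts; each shift costs at most $2$ Levenshtein edits, giving $d_l(u,v) \leq 2|z|$. Otherwise $|u| \leq |z|$, and since $u$ and $v$ are conjugate of equal length at most $|z|$, the number of cyclic shifts needed is at most $|u| \leq |z|$, again yielding $d_l(u,v) \leq 2|u| \leq 2|z|$.

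The only subtlety I anticipate is making sure the ``one cyclic shift equals at most two edits'' bound is applied to a genuine sequence of shifts realizing the conjugacy, rather than to an arbitrary alignment; but this follows directly from the witness structure used in \Cref{conjugacyclaim}, which exhibits $v$ as a specific cyclic rotation of $u$. Since each elementary rotation is a deletion-plus-insertion, composing at most $|z|$ of them stays within $2|z|$ edits, and the bound is uniform over all pairs in $L(E)$, so $d_l(E) = \sup\{ d_l(u,v) \mid (u,v)\in L(E)\} \leq 2|z|$ as claimed.
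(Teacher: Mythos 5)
Your proof is correct, but it reaches the bound by a different and more modular route than the paper. The paper re-runs the argument of \Cref{conjugacyclaim} directly for $d_l$: WLOG $uz=zv$ for every pair; if $|u|>|z|$ then $(u,v)=(zp,pz)$ and one exhibits an explicit edit sequence (delete the prefix $z$, insert $z$ at the end) costing exactly $2|z|$; if $|u|\leq|z|$ one uses the trivial bound $d_l(u,v)\leq |u|+|v|\leq 2|u|\leq 2|z|$. You instead use \Cref{conjugacyclaim} as a black box, giving $d_c(u,v)\leq |z|$ for every pair in $L(E)$, and compose it with the pointwise inequality $d_l\leq 2\cdot d_c$, justified by simulating each cyclic shift with one deletion plus one insertion. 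This is sound, and your remark that you need the precise constant $2$ --- not merely the asymptotic relation $d_l\lesssim d_c$ recorded in the statement of \Cref{lemma:metricrelation} --- is exactly the right point of care; the constant appears in the paper only inside the proof of that lemma, and your shift-simulation argument supplies it independently. What your route buys is brevity and reuse: the claim becomes a one-line corollary of \Cref{conjugacyclaim}, and the case analysis in your later paragraphs is strictly speaking redundant (though harmless). What the paper's route buys is an explicit edit sequence achieving the bound, with no reliance on the precise-constant strengthening of \Cref{lemma:metricrelation} or on the triangle inequality applied along a chain of rotations.
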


\begin{proof}
The proof is similar to \Cref{conjugacyclaim}. Since $E$ has a common witness, either $\forall (u,v) \in L(E)$, $uz = zv$, or $\forall (u,v) \in L(E)$, $zu = vz$. WLOG, assume that  $\forall (u,v) \in L(E)$, $uz = zv$. For any pair $(u,v) \in L(E)$, $|u| = |v|$ since $uz = zv$. There are two cases, either $|u| > |z|$ or $|u| \leq |z|$. If $|u| > |z|$, then $z$ is a prefix of $u$ and suffix of $v$ and hence $(u,v) = (zp,pz)$ for some word $p$. Therefore, $d_l(E) \leq 2|z|$ by deleting $z$ in the beginning and insert $z$ at the end of $u$. Suppose $|u| \leq |z|$, the number of edits required to transform $u$ to $v$ is less than $|u| + |v| \leq 2|u| \leq 2|z|$.
\end{proof}

\begin{proposition}
Closeness of a rational expression w.r.t.~Levenshtein distance is decidable.
\end{proposition}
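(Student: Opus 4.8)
The plan is to reduce the closeness question for an arbitrary rational expression of pairs to the sumfree case, and then invoke the common witness machinery. First I would write the given expression as a finite sum $E = E_1 + \cdots + E_k$ of sumfree expressions, which is always possible by the folklore fact recalled in the excerpt. By \Cref{maxedit}, $d_l(E) = \max(d_l(E_1),\ldots,d_l(E_k))$, so $E$ is close w.r.t.\ the Levenshtein distance if and only if every sumfree constituent $E_i$ is close. Hence it suffices to decide closeness for a single sumfree expression, and the overall procedure is: decompose into sumfree pieces and check each one.

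The second step is to characterise closeness of a sumfree expression by conjugacy of its Kleene-starred subpatterns. By \Cref{kleeneconjugacy}, if $E_i$ is close w.r.t.\ a Levenshtein-family distance then every pair generated by a loop (Kleene star) must be conjugate; this gives the necessary direction. For sufficiency I would argue that if every starred pattern in $E_i$ generates only conjugate pairs, then $E_i$ as a whole is conjugate, and so by \Cref{conjugacytheorem} it possesses a common witness $z_i$, which is moreover computable for sumfree expressions by the result of \cite{decidingconjugacy}. Then \Cref{closenessClaim} immediately yields $d_l(E_i) \leq 2|z_i| < \infty$, establishing closeness. Thus for sumfree expressions, closeness w.r.t.\ $d_l$ is equivalent to conjugacy, exactly as in the conjugacy-distance case of \Cref{closewrtconjugacy}.

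The decision procedure then follows: decompose $E$ into sumfree expressions $E_1,\ldots,E_k$, and for each $E_i$ check whether it is conjugate by testing for the existence of a common witness, which is decidable and constructive by \cite{decidingconjugacy}. The expression $E$ is close w.r.t.\ $d_l$ if and only if each $E_i$ is conjugate. Since $d_l \approx \dlcs \approx \ddl$ by \Cref{lemma:metricrelation} and these coincide on boundedness by \Cref{remark:metricrelation}, the same procedure decides closeness for the entire Levenshtein family.

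The subtle point — and the place where I expect the argument to need the most care — is the sufficiency direction, specifically the claim that conjugacy of every \emph{individual starred loop} upgrades to conjugacy of the \emph{whole sumfree expression} (and hence to the existence of a single common witness for all of $E_i$ simultaneously). This is not automatic: a concatenation of loops, each of which is individually conjugate, need not have a uniform witness unless the witnesses are compatible. The heavy lifting here is carried by \Cref{conjugacytheorem}, which guarantees that a conjugate set of the prescribed form $(\alpha_0,\beta_0)G_1^*(\alpha_1,\beta_1)\cdots G_k^*(\alpha_k,\beta_k)$ always has a common witness; one must verify that loop-wise conjugacy, together with the boundedness hypothesis ruling out length mismatches via \Cref{Len:1} of \Cref{lemma:metricrelation}, forces the entire expression to be conjugate so that \Cref{conjugacytheorem} applies. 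Once that reduction is justified, the bound $d_l(E_i)\le 2|z_i|$ from \Cref{closenessClaim} closes the loop cleanly.
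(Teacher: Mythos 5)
Your setup (decompose into sumfree expressions, apply \Cref{maxedit}, use \Cref{kleeneconjugacy} for necessity, and invoke the common-witness machinery for sufficiency) follows the paper's toolkit, but the sufficiency step contains a genuine error, and you yourself put your finger on exactly the wrong place to fix it. The claim that ``conjugacy of every individual starred loop upgrades to conjugacy of the whole sumfree expression'' is not merely delicate --- it is false, and no amount of care will rescue it. Take $E_i = (a,b)$, with no stars at all, or $E_i = (a,b)\,G^*(\epsilon,\epsilon)$ with $G$ conjugate: the condition on starred parts holds vacuously (or by hypothesis), yet the pair $(a,b)$ is not conjugate, so the expression as a whole is not conjugate. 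Nevertheless $d_l(a,b)=1<\infty$, so the expression \emph{is} close w.r.t.\ Levenshtein distance. Consequently your proposed characterisation ``close w.r.t.\ $d_l$ iff the sumfree expression is conjugate'' is the characterisation for the \emph{conjugacy} distance (\Cref{closewrtconjugacy}), not for Levenshtein, and the decision procedure you describe would answer incorrectly on such inputs: it would declare $(a,b)$ not close. The difference between the two metrics is precisely that a non-conjugate pair is at infinite conjugacy distance but at finite Levenshtein distance, so the fixed joint pairs $(\alpha_j,\beta_j)$ are harmless for $d_l$ and fatal for $d_c$.

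The paper's proof of \Cref{closewrtlevenshtein} avoids the upgrade entirely. It applies \Cref{conjugacytheorem} to each starred subexpression $E_i^*$ \emph{separately} (each is a conjugate set of the form covered by the theorem), obtaining a witness $z_i$ per star and hence $d_l(E_i^*)\leq 2|z_i|$ by \Cref{closenessClaim}; it then combines these local bounds using subadditivity of Levenshtein distance under concatenation,
\begin{equation*}
d_l(E) \;\leq\; \sum_{j} d_l(\alpha_j,\beta_j) \;+\; \sum_{i} d_l(E_i^*) \;<\; \infty,
\end{equation*}
where the first sum is finite because any fixed pair of words is at finite Levenshtein distance. This is the missing idea in your argument: you do not need (and cannot have) a single common witness for all of $E_i$; you need one witness per star plus the observation that finitely many constant pairs contribute only a constant. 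With that replacement, the rest of your procedure --- decompose, check existence of a common witness for each Kleene-starred part via \cite{decidingconjugacy}, and transfer to $\dlcs$ and $\ddl$ by \Cref{lemma:metricrelation} and \Cref{remark:metricrelation} --- goes through as in the paper.
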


\begin{proof}
Given an arbitrary rational expression, there is an equivalent sum of sumfree expression. From \Cref{maxedit}, to show closeness for a sum of sumfree expressions, it suffices to show closeness for each of its constituent sumfree expressions. The general form of a sumfree expression $E = (\alpha_0,\beta_0) E_1^*(\alpha_1,\beta_1) \cdots E_k^*(\alpha_k,\beta_k)$ where $k \in \mathbb N$, for $0 \leq j \leq k$, $(\as_j,\bs_j)$ is a (possibly empty) pair of words, and for each $1 \leq i \leq k$, $E_i$ is a sumfree expression.

\begin{claim}\label{closewrtlevenshtein}
A sumfree expression $E = (\alpha_0,\beta_0) E_1^*(\alpha_1,\beta_1) \cdots E_k^*(\alpha_k,\beta_k)$ is close w.r.t.~Levenshtein distance if and only if each $E_i^*$ for $1 \leq i \leq k$ is conjugate.
\end{claim}

\begin{claimproof}
From \Cref{kleeneconjugacy}, if $E$ is close w.r.t.~Levenshtein edit distance then each $E_i^*$ is conjugate. For the other direction, if each $E_i^*$ is conjugate, then each $E_i^*$ has a common witness, say $z_i$, by \Cref{conjugacytheorem}. From \Cref{closenessClaim}, $d_l(E_i^*) \leq 2|z_i|$. Further, $d_l(E) \leq \sum_{j \in \{0 \ldots k\}}d_l(\as_j,\bs_j) + \sum_{i \in \{1 \ldots k\}}d_l(E_i^*) = \sum_{j \in \{0 \ldots k\}}d_l(\as_j,\bs_j) + 2\sum_{i \in \{1 \ldots k\}}z_i$, hence finite. This implies that if each $E_i^*$ in $E$ is conjugate, then $d_l(E)$ is finite.
\end{claimproof}
Therefore, checking the closeness of a rational expression w.r.t.~Levenshtein distances reduces to checking the existence of a common witness of each Kleene star in its sumfree constitutents, and thus decidable.
\end{proof}
For a sumfree rational expression, a witness, if exists, can be computed in polynomial time \cite{decidingconjugacy}, and thus closeness w.r.t.~Levenshtein and conjugacy distances are decidable in polynomial time. However, converting a rational expression to a sum of sumfree rational expressions can cause an exponential blow-up both in the number of summands and the size of each summand  \cite{decidingconjugacy}.

\subsection{Closeness w.r.t.~Hamming and Transposition distances}
\begin{theorem}
Closeness w.r.t.~Hamming and Transposition distance are decidable for functional transducers.
\end{theorem}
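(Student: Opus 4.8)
The plan is to follow the same structure used for the Levenshtein and conjugacy cases, reducing the problem to a rational expression of pairs over $B^* \times B^*$ and then analysing its sumfree constituents. By \Cref{Prop:rattoseq} and the state elimination construction, closeness w.r.t.~$d_h$ (or $d_t$) of the two functional transducers reduces to computing $d_h(E)$ (or $d_t(E)$) for a rational expression of pairs $E$, and by \Cref{maxedit} it suffices to handle a single sumfree expression $E = (\alpha_0,\beta_0) E_1^* (\alpha_1,\beta_1) \cdots E_k^* (\alpha_k,\beta_k)$. As in the Levenshtein case, \Cref{kleeneconjugacy} tells us that conjugacy of each $E_i^*$ is \emph{necessary} for closeness (if some loop pair is nonconjugate the distance is infinite, since $\dlen \le d_h, d_t$ by \Cref{Len:1}). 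So the real content is to find the correct \emph{additional} condition under which closeness holds, and to show it is decidable.

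The key difference from Levenshtein is that Hamming and transposition distances are length-sensitive even between conjugate words: a conjugate pair $(u,v)$ iterated as $(u^\ell,v^\ell)$ need not have bounded $d_h$ merely because it has a common witness. First I would establish, using the common witness $z_i$ of each conjugate $E_i^*$, that every pair $(u,v)$ produced by a loop satisfies $uz_i = z_i v$ (WLOG the inner case), so $(u,v)=(z_i p, p z_i)$ for a prefix $z_i$ once $|u|>|z_i|$. The crucial observation is that after a fixed \emph{shift} by $|z_i|$, the pair becomes aligned: the Hamming distance between $u$ and $v$ after cyclically accounting for this shift is what governs boundedness. Concretely, for Hamming I would show that closeness holds iff, for each loop, the two words coincide after the shift induced by the witness — i.e.~the pairs generated by the loops, after the shifted delay, are identical (this is the content foreshadowed by \Cref{Claim:hamming}). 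Iterating a loop then contributes a bounded number of mismatches only if the unshifted parts agree, otherwise each iteration adds a constant number of substitutions and $d_h \to \infty$.

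For transposition the analysis is analogous but requires one more ingredient: since a transposition preserves the multiset of letters, $d_t(u,v)<\infty$ forces $u$ and $v$ to be permutations of each other. Hence for each loop I would additionally check that the generated words are permutations (anagrams) of one another, on top of the shifted-identity condition used for Hamming (this is the content of \Cref{Claim:trans}). Putting the constituents together via \Cref{maxedit}, closeness of $E$ reduces to: every $E_i^*$ is conjugate, and each loop satisfies the shifted-identity (and, for $d_t$, the permutation) condition; the border pairs $(\alpha_j,\beta_j)$ contribute only a fixed additive amount and do not affect finiteness. Each of these conditions is checkable — conjugacy and witness computation are decidable by \Cref{conjugacytheorem} and \cite{decidingconjugacy}, and the shifted-identity/permutation tests reduce to comparing the finitely many loop-generated words after a computable shift.

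The main obstacle I anticipate is making the ``shifted delay'' argument precise and uniform across all loops of a sumfree expression simultaneously, rather than one loop in isolation: the shifts imposed by different $E_i^*$ factors interact through the intermediate border pairs $(\alpha_j,\beta_j)$, and one must argue that a single bounded global alignment survives arbitrary iteration counts in all loops at once. I expect the cleanest route is to use the common witness to normalise each factor into an aligned form and then bound the total mismatch additively, but verifying that the witness-induced alignments are mutually compatible (so that bounded local mismatch implies bounded global $d_h$ or $d_t$) is the delicate step. The lower-bound direction — showing that violating the condition forces the distance to diverge — should follow by iterating the offending loop and invoking $\dlen \lesssim d_h$ (resp.~the permutation obstruction for $d_t$) to drive the count to infinity.
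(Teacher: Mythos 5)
Your high-level instinct is right on two counts: conjugacy of the loops is necessary but not sufficient for $d_h$ and $d_t$, and some extra ``alignment'' condition is needed. But the extra condition you propose is wrong, and the place where it fails is exactly the step you yourself flag as delicate. You align each loop using its \emph{own} conjugacy witness $z_i$, declaring closeness when the loop pairs ``coincide after the shift induced by the witness.'' For Hamming distance the alignment is not chosen by the loop --- it is imposed externally by the delay, i.e.\ the difference in lengths of the partial outputs on the paths reaching the loop. These two shifts need not agree. Concretely, take the expression $(\epsilon,\epsilon)\{(ab,ba)\}^*(\epsilon,\epsilon)$ (realized by two sequential transducers mapping $x^i$ to $(ab)^i$ and $(ba)^i$): every pair $((ab)^j,(ba)^j)$ is conjugate with common inner witness $z=a$, so each pair is of your form $(zp,pz)$ and your test passes; yet the prefix forces delay $0$, with that alignment the words disagree in every position, and $d_h((ab)^i,(ba)^i)=2i$ is unbounded. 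This is why the paper's \Cref{Claim:hamming} is phrased in terms of $\interior_{\partial_q}(u,v)$, where $\partial_q$ is the delay computed from the prefix paths (via \Cref{Prop:finitedelay}), not in terms of any witness; witnesses suffice for Levenshtein (\Cref{closenessClaim}) because there one can pay $2|z|$ edits \emph{once} to realign, whereas Hamming pays at every iteration of the loop. The paper accordingly abandons the expression/witness route for these two metrics and works directly on the automaton: it proves each state has a well-defined delay, characterizes unboundedness by a pumping/pigeonhole argument, and decides the interior-identity condition with forward/backward gadgets per SCC. Your sum-free decomposition could in principle be repaired, but only by replacing the witness-induced shift with the accumulated length-difference of the preceding borders $(\alpha_j,\beta_j)$ --- which is the delay in disguise.

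For transposition the gap is strictly worse. Once shifted-identity holds for all iterates of a loop, the loop words are forced into the form $u=wr$, $v=rw$ (identity of interiors of $(u^2,v^2)$ forces $\lborder=\rborder$), so they are \emph{automatically} permutations of each other: your extra loop-level anagram check is vacuous, and your conditions for $d_t$ collapse to those for $d_h$. But $d_t$-closeness is strictly stronger than $d_h$-closeness, and the difference lives precisely in the borders, whose contribution you dismiss as ``a fixed additive amount.'' Consider outputs $(ac^ib,\; bc^ia)$: Hamming distance is $2$ for every $i$, all loops are conjugate with identical interiors and permutation-equivalent components, yet transforming $ac^ib$ into $bc^ia$ by adjacent transpositions requires carrying a letter across the growing block $c^i$, so $d_t\geq i$ is unbounded. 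This is exactly what condition (3) of the paper's \Cref{Claim:trans} captures: an alphabetic-vector mismatch between a prefix pair $(\alpha,\beta)\in M_{q_0,q}$ and the loop's border, e.g.\ $\alpha\not\equiv\beta\cdot\lborder(u,v)$, forces unboundedly many transpositions across an interior that is identical on both sides. Any correct characterization must include such a prefix-versus-border condition; it cannot be formulated loop by loop, which is the structural reason your ``borders don't affect finiteness'' step fails.
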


Given two functional transducers, check if their domains are the same. If not, the distance is $\infty$ hence they are not close. 
Assume they have an identical domain. By \Cref{Prop:rattoseq}, it suffices to consider two sequential transducers with a common underlying DFA. Let $\calT_1 = \langle \calA, \lambda_1, \omicron_1\rangle$ and $\calT_2 = \langle \calA, \lambda_2, \omicron_2\rangle$ be two sequential transducers. WLOG, we make the following assumptions.
\begin{enumerate}

\item (Property $\star$) Automaton $\calA$ is trimmed, i.e., all states are accessible (reachable from the initial state) and coaccessible (from each state there is a path to some final state).

\item (Property $\ddagger$) $\calT_1$ and $\calT_2$ produce output words of identical length: otherwise the Hamming as well as transposition distance will be $\infty$. We can check this property: rename all the output letters in $\calT_1$ and $\calT_2$ to $a$ and check their equivalence.
\item The delay between partial outputs of $\calT_1$ and $\calT_2$ is at most $k \in \N$ (By \Cref{Prop:finitedelay}).
\end{enumerate}

Let $Q$ and $F\subseteq Q$ be the set of states and final states of $\calA$ respectively, and let $q_0\in Q$ be the initial state. For states $p,q \in Q$, Let $M_{p,q}$ be the set of pairs $(u,v)$ such that there is a run $\rho$ from $p$ to $q$ and $u = \lambda_1(\rho)$ and $v=\lambda_2(\rho)$. Extending this notation, for a state $q_f \in F$, let $M'_{q,q_f}$ be the set of pairs $(u,v)$ such that $u = u'\cdot \omicron_1(q_f), v = v' \cdot \omicron_2(q_f)$ and $(u',v')\in M_{q,q_f}$.

Let $q$ be a state of the automaton. If $(\alpha,\beta)$ and $(\alpha',\beta')$ are two pairs in  $M_{q_0,q}$, then 
$|\alpha|-|\beta| = |\alpha'|-|\beta'|$, or else one of the pairs in $\{ (\alpha\alpha'', \beta\beta''),  (\alpha'\alpha'', \beta'\beta'')\}$ will have different lengths, where $(\alpha'',\beta'')$ is some pair in $M_{q,q_f}$, for some $q_f \in F$,  guaranteed by Property ($\star$). Therefore with each state $q$, we can associate the delay of a run reaching it, called the \emph{delay at $q$}, denoted by $\partial_q$, as $|\alpha|-|\beta|$. Clearly $\partial_q \leq k$. 
By a symmetric argument, if $(\alpha,\beta)$ and $(\alpha',\beta')$ are two pairs in  $M_{q,q_f}$, where $q_f$ is some final state, then $|\alpha|-|\beta| = |\alpha'|-|\beta'| = -\partial_q$.
This also implies that for all $(u,v) \in M_{q,q}$, $|u| = |v|$.

For each state $q$, either $M_{q,q}=\{(\epsilon,\epsilon)\}$, or $M_{q,q}$ is infinite. Let $q$ be a state for which $M_{q,q}$ is nonempty.  For a delay $\partial \in \mathbb{Z}$, a pair $(u,v) \in M_{q,q}$ where $n= |u| > \partial$, we define the interior of the pair $(u,v)$ as 
\begin{align*}
\interior_{\partial}(u,v)&= \begin{cases} (u[1\ldots n-\partial],v[\partial+1\ldots n]) &\text{if $\partial \geq 0$}\\
(u[\partial+1\ldots n],v[1\ldots n-\partial]) &\text{if $\partial < 0$}
\end{cases}
\end{align*}
For example, $\interior_{1}(abc, def) = (ab,ef)$ and $\interior_{-1}(abc, def) = (bc,de)$. We also define the \emph{Left-Border} and \emph{Right-Border} of the pair $(u,v)$ as\\
\begin{minipage}{0.35\textwidth}
\noindent	\begin{align*}
		\lborder_{\partial}(u,v) &= \begin{cases} v[1\ldots \partial] &\text{if $\partial \geq 0$}\\
			u[1\ldots \partial] &\text{if $\partial < 0$}
		\end{cases}
	\end{align*}
\end{minipage} 
\hfill
\begin{minipage}{0.6\textwidth}
\begin{align*}
	\rborder_{\partial}(u,v) &= \begin{cases} u[n-\partial+1\ldots n] &\text{if $\partial \geq 0$}\\
		v[n-\partial+1\ldots n] &\text{if $\partial < 0$}
	\end{cases}
\end{align*}
\end{minipage}

\begin{claim}\label{Claim:hamming}
Hamming distance between $\calT_1$ and $\calT_2$ is unbounded if and only if there exists a state $q\in Q$ and $(u,v) \in M_{q,q}$ such that $|u|=|v| > \partial_q$, and $u' \neq v'$ where $(u',v')=\interior_{\partial_q}(u,v)$.
\end{claim}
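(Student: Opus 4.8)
The plan is to prove both directions by tracking how the Hamming cost behaves under pumping of a loop, working throughout with the position-wise alignment induced by the delay. Fix a state $q$ and a pair $(u,v)\in M_{q,q}$ with $|u|=|v|=n>\partial_q$, and assume WLOG $\partial_q\ge 0$ (so $\calT_1$ runs ahead of $\calT_2$ by $\partial_q$ letters at $q$). By Property ($\star$) there are contexts $(\as_0,\bs_0)\in M_{q_0,q}$ and $(\as_1,\bs_1)\in M'_{q,q_f}$, so that $\as_0 u^\ell \as_1$ and $\bs_0 v^\ell \bs_1$ are genuine output pairs of $\calT_1,\calT_2$ of equal length (the prefix carries delay $\partial_q$, each loop iteration preserves it, and the suffix carries delay $-\partial_q$). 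Under the resulting global alignment, a letter $u[i]$ emitted inside a loop copy is compared with $v[i+\partial_q]$ when $i\le n-\partial_q$, a comparison internal to the copy, and with a letter of the \emph{next} copy of $v$ when $i>n-\partial_q$, a comparison matching the suffix $\rborder_{\partial_q}(u,v)$ against the prefix $\lborder_{\partial_q}(u,v)$. The internal comparisons are exactly the two words of $\interior_{\partial_q}(u,v)$.

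For the direction ($\Leftarrow$), suppose some $(u,v)\in M_{q,q}$ has $u'\ne v'$, i.e. $u[i]\ne v[i+\partial_q]$ for some $i\le n-\partial_q$. This internal mismatch recurs in each of the $\ell$ loop copies of $\as_0 u^\ell \as_1$ versus $\bs_0 v^\ell \bs_1$, so the Hamming distance is at least $\ell-O(1)$; letting $\ell\to\infty$ gives $d_h(\calT_1,\calT_2)=\infty$.

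For the converse I argue the contrapositive: assuming every element of every $M_{q,q}$ has matching interior, I show $d_h(\calT_1,\calT_2)$ is bounded. The first step upgrades interior matching to a \emph{full cyclic} alignment of each loop. The border comparison of $(u,v)$ reappears as an \emph{interior} comparison of the square $(u^2,v^2)$, which also lies in $M_{q,q}$: for $i\in\{n-\partial_q+1,\dots,n\}$ the index $i+\partial_q$ falls into the second copy of $v$, and $i\le n<2n-\partial_q$ stays inside the interior range of $(u^2,v^2)$. Hence if both $(u,v)$ and $(u^2,v^2)$ match on their interiors, then $u^\infty$ and $v^\infty$ coincide after a shift by $\partial_q$, so the loop contributes no mismatch per iteration. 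The second step turns this into a global bound: build the finite delay automaton whose states augment those of $\calA$ with the unmatched buffer of length $\le k$ (in the spirit of \Cref{prop:kapprox}), each transition emitting the number of mismatches created while consuming the shorter of $\lambda_1(\delta),\lambda_2(\delta)$ against the buffered overhang. The total emitted cost along an accepting run equals $d_h$ of the two outputs. Any cycle of this finite automaton is a loop of $\calA$ returning to the same buffer, hence a pair in some $M_{q,q}$ with perfect shifted alignment, so it carries cost $0$; with no positive-cost cycle and finitely many states, the total cost on any run is bounded by the number of states times the maximal single-transition cost, giving $d_h(\calT_1,\calT_2)<\infty$.

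The main obstacle is this converse. The delicate point is that interior matching of the primitive loop \emph{alone} does not yield boundedness: a loop may agree on its interior yet be wrongly shifted on its border, and pumping it still drives $d_h$ to infinity — it is precisely the square that then exhibits the interior mismatch. The role of the square, and uniformly of all iterates living in $M_{q,q}$, is to promote border disagreement into an interior statement, which is what licenses the passage to full cyclic alignment. With that in hand, the buffer-automaton argument is routine, since every genuine cycle (same state \emph{and} same buffer) is a member of $M_{q,q}$ and therefore has zero per-iteration cost.
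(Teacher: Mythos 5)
Your proof is correct, but the argument you give for the harder direction is genuinely different from the paper's. For ($\Leftarrow$) you and the paper do the same thing: embed the loop in contexts guaranteed by trimness, pump it, and count at least one interior mismatch per copy. For ($\Rightarrow$), the paper argues \emph{directly}: assuming $d_h(\calT_1,\calT_2)=\infty$, it picks one accepting run whose outputs mismatch in at least $((k+2)n+1)\ell$ positions, attributes each mismatch to the transition producing it, and applies the pigeonhole principle on source states to extract a loop at some state $q$ whose output carries at least $k+1$ mismatches; since the border accounts for at most $\partial_q\leq k$ of these, some mismatch lies in the interior, which is exactly the required witness. You instead argue the \emph{contrapositive}: assuming every interior matches, you use the square trick --- the border comparison of $(u,v)$ reappears as an interior comparison of $(u^2,v^2)\in M_{q,q}$, forcing $\lborder_{\partial_q}(u,v)=\rborder_{\partial_q}(u,v)$ --- to upgrade interior matching to perfect shifted alignment of every loop, and then show that every cycle of the delay/buffer automaton has cost zero, so the cost of any accepting run equals that of a simple path and is bounded. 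Both routes are sound. The paper's is shorter and more elementary (one pigeonhole application, no auxiliary automaton). Yours buys two things the paper's does not make explicit: it isolates \emph{why} the characterisation must quantify over all of $M_{q,q}$, powers included (a loop can match on its interior yet be wrongly shifted on its border, and only its square exposes this), and it yields an effective upper bound on $d_h(\calT_1,\calT_2)$ in the bounded case, namely the number of buffer-automaton states times the maximal single-transition cost, which dovetails with the machinery of \Cref{prop:kapprox}. Two routine details you should still discharge to make the contrapositive airtight: loops with $|u|\leq|\partial_q|$, whose interior is undefined, need the alignment statement applied to a power $(u^j,v^j)$ with $j|u|>|\partial_q|$ before the zero-cost-cycle claim applies to them; and the end-of-input outputs $\omicron_1(q_f),\omicron_2(q_f)$ contribute a final comparison whose cost must be absorbed into the constant.
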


\begin{proof}
The \Cref{fig:jaro} depicts the situation described by $(2)$. 

($\leftarrow$): Assume there exists a state $q\in Q$ and $(u,v) \in M_{q,q}$ such that $|u|=|v| > \partial_q$, and $u' \neq v'$ where $(u',v')=\interior_{\partial_q}(u,v)$. Let $(\alpha_0,\beta_0)\in M_{q_0,q}$ and $(\alpha_1,\beta_1)\in M_{q,q_f}$. Consider the pair $(u_i = \alpha_0u^i\alpha_1, v_i = \beta_0v^i\beta_1)$, $i\geq 1$ (shown in \Cref{fig:jaropump}). Since $u'\neq v'$, we can deduce that $d_h(u_i,v_i) \geq i$. Hence $d_h(\calT_1,\calT_2) = \infty$.

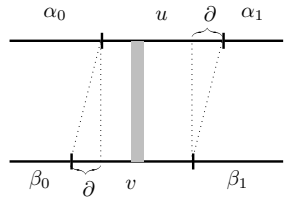
\begin{figure}[htbp]
\centering
\scalebox{.8}{
\begin{tikzpicture}

\draw[line width=0.4mm] [-](0,2) -- (1.5,2)[anchor=south, xshift=-0.75cm, yshift=.2cm]node{$\alpha_0$};
\draw[line width=0.4mm] [|-](1.5,2) -- (3.5,2)[anchor=south, xshift=-1cm, yshift=.2cm]node{$u$};
\draw[line width=0.4mm] [|-](3.5,2) -- (4.5,2)[anchor=south, xshift=-0.5cm, yshift=.2cm]node{$\alpha_1$};

\draw[line width=0.4mm] [-](0,0) -- (1,0)[anchor=south, xshift=-0.5cm, yshift=-.6cm]node{$\beta_0$};
\draw[line width=0.4mm] [|-](1,0) -- (3,0)[anchor=south, xshift=-1cm, yshift=-.6cm]node{$v$};
\draw[line width=0.4mm] [|-](3,0) -- (4.5,0)[anchor=south, xshift=-0.75cm, yshift=-.6cm]node{$\beta_1$};

\fill[lightgray] (2,2) rectangle (2.2,0);

\draw [dotted](1,0) -- (1.5,2);
\draw [dotted](3,0) -- (3.5,2);
\draw [dotted](1.5,0) -- (1.5,2);
\draw [dotted](3,0) -- (3,2);

\draw [decorate, 
    decoration = {calligraphic brace,
        raise=2pt,
 	amplitude= 4pt,mirror
       }] (1,0) --  (1.5,0) node[anchor=north, xshift=-0.2cm,yshift=-0.2cm]{$\partial$};
\draw [decorate, 
    decoration = {calligraphic brace,
        raise=2pt,
 	amplitude= 4pt,
       }] (3,2) --  (3.5,2) node[anchor=south, xshift=-0.2cm,yshift=0.2cm]{$\partial$};

\end{tikzpicture}}
\caption{An edit in the interior of $u$ and $v$.}

\label{fig:jaro}

\end{figure}

 \begin{figure}[htbp]

\centering

\scalebox{.8}{
\begin{tikzpicture}

\draw[line width=0.4mm] [-](0,2) -- (1.5,2)[anchor=south, xshift=-0.75cm, yshift=.2cm]node{$\alpha_0$};
\draw[line width=0.4mm] [|-](1.5,2) -- (3.5,2)[anchor=south, xshift=-1cm, yshift=.2cm]node{$u$};
\draw[line width=0.4mm] [|-](3.5,2) -- (5.5,2)[anchor=south, xshift=-1cm, yshift=.2cm]node{$u$};
\draw[line width=0.4mm] [|-](5.5,2) -- (7.5,2)[anchor=south, xshift=-1cm, yshift=.2cm]node{$u$};
\draw [dotted](7.5,2) -- (9.5,2);
\draw[line width=0.4mm] [|-](9.5,2) -- (10.5,2)[anchor=south, xshift=-0.5cm, yshift=.2cm]node{$\alpha_1$};

\draw[line width=0.4mm] [-](0,0) -- (1,0)[anchor=south, xshift=-0.5cm, yshift=-.6cm]node{$\beta_0$};
\draw[line width=0.4mm] [|-](1,0) -- (3,0)[anchor=south, xshift=-1cm, yshift=-.6cm]node{$v$};
\draw[line width=0.4mm] [|-](3,0) -- (5,0)[anchor=south, xshift=-1cm, yshift=-.6cm]node{$v$};
\draw[line width=0.4mm] [|-](5,0) -- (7,0)[anchor=south, xshift=-1cm, yshift=-.6cm]node{$v$};
\draw [dotted](7,0) -- (9,0);
\draw[line width=0.4mm] [|-](9,0) -- (10.5,0)[anchor=south, xshift=-0.75cm, yshift=-.6cm]node{$\beta_1$};

\fill[lightgray] (2,2) rectangle (2.2,0);
\fill[lightgray] (4,2) rectangle (4.2,0);
\fill[lightgray] (6,2) rectangle (6.2,0);
\fill[lightgray] (8,2) rectangle (8.2,0);

\draw [dotted](1,0) -- (1.5,2);
\draw [dotted](3,0) -- (3.5,2);
\draw [dotted](5,0) -- (5.5,2);
\draw [dotted](7,0) -- (7.5,2);
\draw [dotted](9,0) -- (9.5,2);

\draw [dotted](1.5,0) -- (1.5,2);
\draw [dotted](3.5,0) -- (3.5,2);
\draw [dotted](5.5,0) -- (5.5,2);
\draw [dotted](7.5,0) -- (7.5,2);
\draw [dotted](9.5,0) -- (9.5,2);

\draw [dotted](3,0) -- (3,2);
\draw [dotted](5,0) -- (5,2);
\draw [dotted](7,0) -- (7,2);
\draw [dotted](9,0) -- (9,2);

\draw [decorate, 
    decoration = {calligraphic brace,
        raise=2pt,
 	amplitude= 4pt,mirror
       }] (1,0) --  (1.5,0) node[anchor=north, xshift=-0.2cm,yshift=-0.2cm]{$\partial$};
\draw [decorate, 
    decoration = {calligraphic brace,
        raise=2pt,
 	amplitude= 4pt,
       }] (3,2) --  (3.5,2) node[anchor=south, xshift=-0.2cm,yshift=0.2cm]{$\partial$};

\draw [decorate, 
    decoration = {calligraphic brace,
        raise=2pt,
 	amplitude= 4pt,mirror
       }] (3,0) --  (3.5,0) node[anchor=north, xshift=-0.2cm,yshift=-0.2cm]{$\partial$};
\draw [decorate, 
    decoration = {calligraphic brace,
        raise=2pt,
 	amplitude= 4pt,
       }] (3,2) --  (3.5,2) node[anchor=south, xshift=-0.2cm,yshift=0.2cm]{$\partial$};

\draw [decorate, 
    decoration = {calligraphic brace,
        raise=2pt,
 	amplitude= 4pt,mirror
       }] (5,0) --  (5.5,0) node[anchor=north, xshift=-0.2cm,yshift=-0.2cm]{$\partial$};
\draw [decorate, 
    decoration = {calligraphic brace,
        raise=2pt,
 	amplitude= 4pt,
       }] (5,2) --  (5.5,2) node[anchor=south, xshift=-0.2cm,yshift=0.2cm]{$\partial$};

\draw [decorate, 
    decoration = {calligraphic brace,
        raise=2pt,
 	amplitude= 4pt,mirror
       }] (7,0) --  (7.5,0) node[anchor=north, xshift=-0.2cm,yshift=-0.2cm]{$\partial$};
\draw [decorate, 
    decoration = {calligraphic brace,
        raise=2pt,
 	amplitude= 4pt,
       }] (7,2) --  (7.5,2) node[anchor=south, xshift=-0.2cm,yshift=0.2cm]{$\partial$};

\draw [decorate, 
    decoration = {calligraphic brace,
        raise=2pt,
 	amplitude= 4pt,mirror
       }] (9,0) --  (9.5,0) node[anchor=north, xshift=-0.2cm,yshift=-0.2cm]{$\partial$};
\draw [decorate, 
    decoration = {calligraphic brace,
        raise=2pt,
 	amplitude= 4pt,
       }] (9,2) --  (9.5,2) node[anchor=south, xshift=-0.2cm,yshift=0.2cm]{$\partial$};

\end{tikzpicture}
}
\caption{Words that require arbitrarily large number of edits.}

\label{fig:jaropump}

\end{figure}
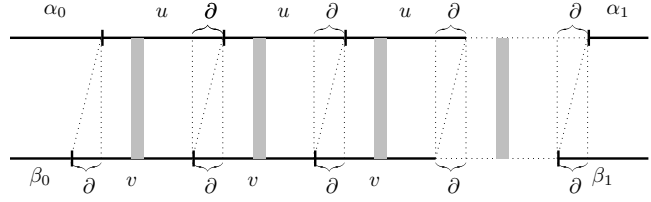

($\rightarrow$): Assume $d_h(\calT_1,\calT_2) = \infty$. Assume $\calA$ has $n$ states and the maximum length of an output produced on any transition or at the end-of-input is $\ell$. Choose a run $\rho$ of $\calA$ such that the distance between the outputs produced on $\rho = \delta_1 \cdots \delta_m$, $m>0$ is at least $((k+2)n+1)\ell$. We can associate each edit in $\lambda_1(\rho)$ with the transition $\delta_i$ such that the edit happens in $\lambda_1(\delta_i)$. Since there are 
$((k+2)n+1)\ell$ edits, there are at least $(k+2)n+1$ transitions in $\rho$ whose output words are edited. 
Associate each transition with its source state. By pigeonhole principle, there is a state $q$ such that $\rho=\rho_1\cdot \rho_2\cdot \rho_3$ where
\begin{enumerate}
\item $\rho_1$ is a run from the initial state $q_0$ to $q$,
\item $\rho_2$ is a run from $q$ to itself,
\item $\rho_3$ is a run from $q$ to a final state $q_f$, and
\item there are at least $(k+1)$ edits in the factor $\lambda_1(\rho_2)$.
\end{enumerate}

Let $u=\lambda_1(\rho_2)$ and $v = \lambda_2(\rho_2)$. Clearly $|u| = |v|$ and $|u| \geq (k+1)$. Since the edits in $u$ are at least $k+1$, there is a position on which the pair $\interior_{\partial_q}(u,v)$ differ.
\end{proof}

\subparagraph*{Deciding \Cref{Claim:hamming}:}
For checking characterisation in \Cref{Claim:hamming}, for each state $q$ for which $M_{q,q}$ is nonempty, we need to check if the $\interior(q) = \{\interior_{\partial_q}(u,v) \mid (u,v) \in M_{q,q}\}$ consist of only identical pairs of words. Towards this, we assume that the transducers given are normalised, i.e., for any transition $\delta$ in $\calA$, $\lambda_i(\delta)$ for $i = \{1,2\}$ is either a single letter or $\epsilon$.

We can verify and compute the unique delay, if exists, for each state in polynomial time. For this, set the delay at intial state to be zero. Now for each transition, compute the delay of the target state. If the delay of the target state is not yet computed, set this as its delay. On the other hand, if delay of target state is already computed, then check the computed delay is equal to that. If not, the delay is not unique and thus, Property ($\star$) is not satisfied and transducers are not close. Otherwise continue with the next transition of $\calA$ till all transitions are covered.

 We construct a forward and backward gadget for each state that unfolds the loop up to its delay in both directions trimming the $\lborder$ and $\rborder$ respectively. For a state $q$ with the delay $\partial_q$, if $\partial_q > 0$, then the forward gadget of $q$ will produce pairs $\{(u[1 \tdots \partial_q], \epsilon) \mid (u,v) \in M_{q,q}, |u| > \partial_q\}$, and the backward gadget of $q$ will produce pairs $\{( \epsilon, v[(n-\partial_q) \tdots n) \mid (u,v) \in M_{q,q}, n=|v| > \partial_q\}$. Symmetrically for $\partial_q <0$.

Instead of constructing these gadgets for each state, we combine the construction for each strongly connected component (SCC). In polynomial time, we can find all the strongly connected components  (SCCs) of the automaton $\calA$. Let $S$ be the set of states in a SCC and $\partial$ be the maximum absolute value of delay among all the states in $S$. Construct a forward and backward gadget (or automaton) $\calF_{S}$ and $\calB_{S}$ respectively with states from $S \times \{-\partial, -\partial +1 , \ldots, 0, 1, 2 , \ldots, \partial\}$ with output functions $\lambda_{f_1},\lambda_{f_2}$ and $\lambda_{b_1},\lambda_{b_2}$ respectively. 

Initially, the forward gadget has states of the form $(q,\partial_q)$ for $q \in S$, and the backward gadget has states of the form $(q,-\partial_q)$for $q \in S$ . 

For a transition $\delta = (p,a,q)$ in $\calA$ where $a\in A$ and $p,q \in S$,
\begin{itemize}
\item In forward gadget, for each state $(p,i) \in \calF_{S}$, we add a transition $\delta_f = ((p,i), a,(q,j))$  such that 
\begin{enumerate}
\item if $i>0$, then $\lambda_{f_1}(\delta_f) = \lambda_1(\delta)$, $\lambda_{f_2}(\delta_f) = \epsilon$, and if  $\lambda_2(\delta) \neq \epsilon$ then $j=i-1$; else $j=i$.
\item if $i<0$, then $\lambda_{f_1}(\delta_f) = \epsilon$, $\lambda_{f_2}(\delta_f) = \lambda_2(\delta)$, and if  $\lambda_1(\delta) \neq \epsilon$ then $j=i+1$; else $j=i$.
\end{enumerate}
\item In backward gadget, for each state $(q,i) \in \calB_{S}$, we add a transition $\delta_b = ((p,j), a,(q,i))$  such that 
\begin{enumerate}
\item if $i>0$, then $\lambda_{b_1}(\delta_b) = \lambda_1(\delta)$, $\lambda_{b_2}(\delta_b) = \epsilon$, and if  $\lambda_2(\delta) \neq \epsilon$ then  $j=i-1$; else $j=i$.
\item if $i<0$, then $\lambda_{b_1}(\delta_b) = \epsilon$, $\lambda_{b_2}(\delta_b) = \lambda_2(\delta)$, and if  $\lambda_1(\delta) \neq \epsilon$ then $j=i+1$; else $j=i$.
\end{enumerate}
\end{itemize}

We construct such a forward and backward gadget for each SCCs.
For each state $q$ in an SCC with a set of states $S$, we construct an automaton $\calA_{S_q}$ by concatenating $\calF_S$, $\calA$ induced with states $S$, and $\calB_S$ with existing transitions and labels. Additionally, following transitions are added.
\begin{itemize}
\item For each state $(p,0) \in \calF_S$, $p \in \calA$, add $((p,0), \epsilon, p)$ with $\epsilon$ output labels.
\item For each state $p \in \calA$, $(p,0) \in \calB_S$, add $(p, \epsilon, (p,0))$ with $\epsilon$ output labels.
\end{itemize}
Let $\lambda_1' = \lambda_{f_1} \cup \lambda_1 \cup \lambda_{b_1}$ and $\lambda_2' = \lambda_{f_2} \cup \lambda_2 \cup \lambda_{b_2}$ be the output labels in $\calA_{S_q}$, and $\omicron_1'$, $\omicron_2'$ maps each state to $\epsilon$ in $\calA_{S_q}$. Let the initial state of $\calA_{S_q}$ being $(q,\partial_q)$ and final state being $(q,-\partial_q)$. It is easy to see that the automaton $\calA_{S_q}$ is unambiguous and is of polynomial size w.r.t.~$\calA$. To verify \Cref{Claim:hamming}, it suffices to check if the functional transducers  $\langle \calA_{S_q}, \lambda_1',\omicron_1'\rangle$ and  $\langle \calA_{S_q}, \lambda_2',\omicron_2'\rangle$ are equivalent for each state $q$ in a SCC with a set of states $S$. 

Next we show closeness w.r.t.~transposition distance. 
We write $u \equiv v$ to denote that words $u$ and $v$ are permutations of each other. The \emph{alphabetic vector} of a word over the alphabet $A$, denoted by $\vec{u}$, is the sequence $(|w|_{a_i})_{a_i\in A}$ for some fixed ordering of $A$. It is easy to observe that two words are permutations of each other if their alphabetic vectors are the same.
\begin{claim}\label{Claim:trans}
Transposition distance between $\calT_1$ and $\calT_2$ is unbounded if and only if one of the following holds
\begin{enumerate}
\item \label{trans1} There is a pair $(u,v) \in M'_{q_0,q_f}$, $q_f\in F$ such that $u\not \equiv v$.
\item \label{trans2} There exists a state $q\in Q$ and $(u,v) \in M_{q,q}$ such that $|u|=|v| > \partial_q$, and $u' \neq v'$ where $(u',v')=\interior_{\partial_q}(u,v)$.
\item \label{trans3} There exists a state $q\in Q$ such that $M_{q,q}$ is infinite, and for each pair $(u,v) \in M_{q,q}$ of length at least $|\partial_q|$, $\interior_\partial(u,v)$ is identical. Further, there are pairs $(u,v)\in M_{q,q}$ and $(\alpha,\beta) \in M_{q_0,q}$ (\textit{resp.}~$M_{q,q_f}$) such that: If $\partial_q\geq 0$, then $\alpha \not \equiv \beta \cdot \lborder(u,v)$ (\textit{resp.}~$ \rborder(u,v) \cdot \alpha  \not \equiv \beta $), and if $\partial_q < 0$, then $\alpha \cdot \lborder(u,v) \not \equiv \beta$ (\textit{resp.}~$\alpha \not \equiv \rborder(u,v) \cdot \beta$).
\end{enumerate} 
\end{claim}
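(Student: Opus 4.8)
The plan is to build on two elementary facts about the transposition distance. First, an adjacent swap preserves the multiset of letters, so $d_t(x,y)<\infty$ if and only if $x\equiv v$; thus if some accepting pair has $u\not\equiv v$ the distance is already infinite. Second, for every letter $a$ one has the projection lower bound
\[
d_t(x,y)\ \geq\ \sum_{p=1}^{|x|}\bigl|\ |x[1\tdots p]|_a-|y[1\tdots p]|_a\ \bigr|,
\]
obtained by collapsing all letters other than $a$ to a single symbol and observing that each adjacent transposition projects to at most one adjacent transposition of the resulting binary words, for which the right-hand side is exactly the distance. For an input $w$ I write $F=\calT_1(w)$, $S=\calT_2(w)$, and I track the \emph{discrepancy} $g_a(p)=|F[1\tdots p]|_a-|S[1\tdots p]|_a$ along the delay-shifted alignment of the two outputs. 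The whole proof then reduces unboundedness of $d_t$ to unboundedness of $\sum_p|g_a(p)|$ for some $a$.

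For the easy direction, that any of the three conditions forces unboundedness, I would argue as follows. Condition~\ref{trans1} gives an accepting pair with $u\not\equiv v$, hence $d_t(u,v)=\infty$. For condition~\ref{trans2} I pump the loop exactly as in \Cref{Claim:hamming}, obtaining $(\alpha_0u^i\alpha_1,\beta_0v^i\beta_1)$; since the interiors differ, each of the $i$ copies contributes at least one to $\sum_p|g_a(p)|$ at its first differing interior position (and a nonzero per-copy drift if the interiors even differ as multisets), so the sum is $\Omega(i)$. For condition~\ref{trans3} I pump the distinguished loop and use that a border-adjusted mismatch such as $\alpha\not\equiv\beta\cdot\lborder(u,v)$ produces a \emph{constant} nonzero value $c$ of $g_a$ at loop entry which, because the interiors are identical and the loop therefore preserves letter counts, persists unchanged across the entire $\Theta(in)$-length bulk, giving $\sum_p|g_a(p)|=\Omega(i)$; the three symmetric variants ($\partial_q<0$, and the suffix cases through $M_{q,q_f}$) are handled identically. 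In all cases $d_t(\calT_1,\calT_2)=\infty$.

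The converse is the contrapositive. Assume none of \ref{trans1}--\ref{trans3} holds. The negation of~\ref{trans2} is precisely the Hamming-boundedness hypothesis of \Cref{Claim:hamming}, and applying it to the doubled loops $(u^2,v^2)$ forces not only that all loop interiors coincide but also that $\rborder(u,v)=\lborder(u,v)$; consequently every closed walk satisfies $|u|_a=|v|_a$, so loop segments contribute \emph{zero} net change to $g_a$. Decomposing any run into the at most $n$ inter-SCC transient edges together with one simple path per SCC (total length $O(n^2)$) plus loops, this bounds $|g_a(p)|$ uniformly over all inputs by a constant depending only on $\calA$ and the maximal delay $k$. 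Since $g_a$ changes by at most one per position, $\sum_p|g_a(p)|$ can be unbounded only through a nonzero stretch of unbounded length; such a stretch must sit inside a loop bulk preserving a nonzero value $c$ of $g_a$, and tracing to the loop entry (resp.\ exit) this $c$ equals $|\alpha|_a-|\beta\cdot\lborder(u,v)|_a$ for the genuine prefix pair $(\alpha,\beta)\in M_{q_0,q}$ (resp.\ the analogous $M_{q,q_f}$ expression) — exactly the mismatch ruled out by the negation of~\ref{trans3}. Hence $\sum_p|g_a(p)|$ is uniformly bounded for every $a$; combining the per-letter bounds and repairing each of the boundedly many bounded-size boundary windows by local transpositions yields an explicit uniform bound on $d_t(\calT_1,\calT_2)$, and closeness holds.

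I expect the main obstacle to lie in the converse bookkeeping: making the delay-shifted accounting of $g_a$ rigorous for a run that weaves through several SCCs, and in particular matching the position at which a long constant-$c$ stretch begins or ends with the exact $\lborder$/$\rborder$ offsets across the four sign/side cases of condition~\ref{trans3}. Deriving $\rborder=\lborder$ from interior matching via the $(u^2,v^2)$ pumping, and then certifying that loops contribute no net discrepancy while transient parts contribute only a constant, is the combinatorial heart of the argument; once this uniform bound on $|g_a|$ is established, pinning unboundedness to a single loop with a concrete prefix or suffix witness is routine.
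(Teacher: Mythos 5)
Your proposal is correct in substance, but it proves the claim by a genuinely different route from the paper's. In the direction ``conditions imply unboundedness'' both you and the paper pump a loop, but your quantitative engine is the per-letter prefix-count lower bound $d_t(x,y)\geq\sum_p\bigl|\,|x[1\tdots p]|_a-|y[1\tdots p]|_a\,\bigr|$, whereas the paper only asserts, with a one-line appeal to induction, that cancelling a unit discrepancy across an interior of length $|x|$ costs at least $|x|$ transpositions; your projection bound makes that step rigorous. In the converse direction the two proofs diverge completely: the paper argues directly from unboundedness, splitting into the case of an output family that is not ``error-bounded'' (unboundedly many differing positions), which yields \Cref{trans2} by the same pigeonhole as in \Cref{Claim:hamming}, and the error-bounded case, from which it extracts a loop state whose prefix or suffix runs fail the permutation test, i.e.\ \Cref{trans3}. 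You instead prove the contrapositive by a global upper-bound construction: your doubling trick $(u^2,v^2)$ --- which does not appear in the paper and neatly forces $\lborder=\rborder$, hence count-preserving cycles --- gives a uniform bound on $|g_a(p)|$, the negation of \Cref{trans3} forces $g_a=0$ on loop bulks, and what remains is boundedly many bounded windows. Your route buys an explicit finite bound on $d_t(\calT_1,\calT_2)$ in the bounded case (something the paper only obtains later via its acyclic-automaton construction for $k$-closeness), at the price of heavier positional bookkeeping; the paper's route is shorter but is terse exactly where yours is strong, namely in deriving the precise border-adjusted form of \Cref{trans3}.

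Two compressed steps must be made explicit for your argument to stand. First, in the forward direction for \Cref{trans3} you write that ``the interiors are identical and the loop therefore preserves letter counts''; identical interiors of a \emph{single} pair do not imply $\vec{u}=\vec{v}$ (take $u=ab$, $v=ca$ with $\partial_q=1$, whose interior is $(a,a)$). You need either to invoke your own doubling argument here --- available, since \Cref{trans3} assumes \emph{all} pairs of $M_{q,q}$ have identical interiors, and $(u^2,v^2)\in M_{q,q}$ --- or the escape case that $u\not\equiv v$ pumps to non-permutation accepting pairs, giving $d_t=\infty$ outright. Second, the closing ``repair'' step needs the splitting fact that if $g_a(p_0)=0$ for every letter $a$ then $d_t(F,S)\leq d_t(F[1\tdots p_0],S[1\tdots p_0])+d_t(F[p_0+1\tdots |F|],S[p_0+1\tdots |S|])$, so that the boundedly many mismatch windows, each permutation-equivalent and of bounded length, can be repaired independently; a bound on $\sum_p|g_a(p)|$ for each letter does not by itself bound $d_t$ without this (or a Diaconis--Graham-type inequality). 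Both additions are routine, so these are presentational gaps rather than flaws in the approach.
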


\begin{proof}
($\leftarrow$): It is obvious that if \Cref{trans1} is true, then the transposition distance between $\calT_1$ and $\calT_2$ is unbounded. Therefore we assume that the output pairs of the transducers are permutations of each other. For \Cref{trans2}, the proof is the same as in \Cref{Claim:hamming}.  Next we consider \Cref{trans3}. The cases are symmetric. Assume 
that there exist a pair $(u,v)\in M_{q,q}$, $(\alpha,\beta) \in M_{q_0,q}$, and WLOG $\partial_{q}\geq 0$  such that 
$\alpha \not \equiv \beta \cdot \lborder(u,v)$. Let $(\alpha',\beta')$ be some pair in $M_{q,q_f}$. Consider the pair $(u_i = \alpha u^i\alpha', v_i = \beta v^i\beta')$, $i\geq 1$.  

Let $(x,x) = \interior_{\partial_q}(u,v)$, $z_1 = \lborder(u,v)$, $z_2 = \rborder(u,v)$. 
By assumption $\alpha \not \equiv \beta z_1$, and hence  $z_2\alpha' \not \equiv \beta'$. 
Since interior of $(u,v)$ is $(x,x)$, we can deduce that $\alpha z_2\alpha' \equiv \beta z_1\beta'$. Therefore 
$\vec{\alpha}- \vec{\beta z_1} = \vec{z_2\alpha'}- \vec{\beta'}$. This means that the transpositions have to cancel out the differences in the vectors at each end of the word. We can prove by induction that it requires at least $|x|$ transpositions to mitigate a difference of $1$, while keeping the alphabetic vector of the middle portion the same.
Hence we deduce that $d_t(u_i,v_i) \geq i$.

($\rightarrow$): If $d_t(\calT_1,\calT_2) \in \infty$, either there is a pair of outputs $(u,v)$ such that $d_t(u,v)=\infty$ (This is \Cref{trans1}), or all the output pairs are permutations of each other and there is an infinite set of pairs $S=\{ (u_i,v_i) \mid i>0\}$ such that $d_t(u_i,v_i) \geq i$.  

In the latter case, we show that either \Cref{trans2} or \Cref{trans3} holds. We say the set $S$ is \emph{error-bounded} if there is an $r>0$ such that $u_i$ and $v_i$ differ in at most $r$ positions. 
Clearly, there are sets with bounded errors on which $d_t$ is infinite. We do case analysis.  

If there is an infinite set of pairs $S=\{ (u_i,v_i) \mid i>0\}$ such that $d_t(u_i,v_i) \geq i$ that is \emph{not} error-bounded, we proceed as in the proof of \Cref{Claim:hamming} and obtain \Cref{trans2} by pigeonhole principle.

If the set of all output pairs is error-bounded, then clearly for states $q$ such that $M_{q,q}$ is infinite, the interior of all the sufficiently large pairs in $M_{q,q}$ are identical. Moreover since the output pairs are permutations of each other there is a state $q$ such that $|M_{q,q}| =\infty$ and there is a partial run from $q_0$ to $q$ (or a partial run from $q$ to $q_f$) whose output words are not permutations of each other.  
\end{proof}

\subparagraph*{Deciding \Cref{Claim:trans}:}
The \Cref{trans2} in \Cref{Claim:trans} is same as that of characterisation in  \Cref{Claim:hamming}, and thus can be checked in polynomial time. For checking \Cref{trans3}, it suffices to check if all the paths between any two strongly connected components in $\calA$ have the same alphabetic vectors. Towards this, first topologically sort the strongly connected components in $\calA$.  Consider any two adjacent SCCs with a set of states $S_1$ and $S_2$, respectively in the topological sort of $\calA$ with directed paths from states in $S_1$ to that of $S_2$. Our aim is to check if all the paths from a state in $S_1$ to a state in $S_2$ have the same alphabetic vector. For this, we construct a induced subgraph of $\calA$, say $\calA_I$, with $S_1 \cup S_2$ being the set of states, along with all edges associated with it. Let $S_1'$ be the states in $S_1$ that has an outgoing edge to next SCC in $\calA_I$ and $S_2'$ be the set of states in $S_2$ that has an incoming edge from previous SCC in $\calA_I$. Assume that the induced graph is trimmed, i.e., any state is reachable from a state in $S_1'$ and co-reachable from a state in $S_2'$. For each state in the induced subgraph, we maintain a vector/ sequence of length $|A|$ for some fixed ordering of $A$ where $A$ is the alphabet of $\calA$ such that each position in the vector tells the difference in the count of a letter encountered till now. It can either be a positive value or a negative value. For example, for $A= \{1,b\}$, $(2\ \ \  -3)$ means there are $2$ $a$'s more (or less) in the first (or second) component of the output and there are $3$ $b$'s less (or more) in the first (or second) component of the output seen till now.  Initially, for all states in $S_1'$ in $\calA_I$ we associate a zero vector.  For each transition in $\calA$ involving states in $\calA_I$, we compute the vector of the target state from the computed vector of the initial state. For any state, if there are two different vectors associated, then since $A_I$ is trimmed, there exist a path from $S_1'$ to $S_2'$ with different alphabetic vector. Thus, for each state in $\calA_I$, there is a unique vector. Finally, we need to check if the vector associated with each state in $S_2'$ is a zero vector. Otherwise, there exist a path from $S_1'$ to $S_2'$ with different alphabetic vector. Hence, \Cref{trans3} can be checked in polynomial time. It is easy to see that if both \Cref{trans2} and \Cref{trans3} are not satisfied, then \Cref{trans1} is also not satisfied. 

Therefore, \Cref{Claim:hamming} and \Cref{Claim:trans} can be verified for $\calT_1$ and $\calT_2$ in polynomial time. Thus, closeness of sequential and unambiguous transducers w.r.t.~hamming and transposition distance is decidable in polynomial time.

\subsection{$k$-closeness of Hamming and Transposition distances}
\label{kclosehammingtrans}

We have shown that using \Cref{prop:kapprox}, $k$-closeness is decidable for all metrics in Table~\ref{table:editdistances}. To check if two transducers $\calT_1$ and $\calT_2$ are $k$-close, we check if they have the same domain and they are close w.r.t.~the length metric (otherwise they are neither close nor $k$-close). If so, we check if the domain $T$ is same as the domain of $\kdist{\calT_1}{\calT_2}{d}{k}$.  From the construction used in \Cref{prop:kapprox}, the underlying automaton of $f_{\calT_1,\calT_2}^{\leq k}$ can be of exponential (as it has to keep track of the unprocessed words). So, the complexity of the $k$-closeness algorithm is EXPSPACE. However, for hamming and transposition distance, using the characterisations used for deciding their closeness (see \Cref{Claim:hamming} and \Cref{Claim:trans}), we give a co-NP procedure for deciding $k$-closeness and co-NP $\cap$ NP procedure for computing the distance.
\begin{theorem}
$k$-closeness w.r.t.~Hamming and Transposition distance are decidable for sequential and unambiguous transducers in co-NP time.
\end{theorem}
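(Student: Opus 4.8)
The plan is to show that the complement problem---deciding whether $d(\calT_1,\calT_2) > k$ for $d \in \{d_h, d_t\}$---lies in \NP, which immediately places $k$-closeness in \coNP. As before, by \Cref{Prop:rattoseq} I first reduce the unambiguous case to a pair of sequential transducers $\calT_1 = \langle \calA, \lambda_1,\omicron_1\rangle$ and $\calT_2 = \langle \calA, \lambda_2,\omicron_2\rangle$ sharing a trimmed DFA $\calA$, and I discharge the easy rejections in polynomial time: if the domains differ, if the outputs are not of equal length (Property $\ddagger$), or if the per-state delay is not well defined (Property $\star$ fails), then $d = \infty > k$ and the instance is already a ``no''. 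Otherwise all delays $\partial_q$, which are bounded by $n\ell$ with $n = |Q|$ and $\ell$ the maximal output length of a single transition, are computed in polynomial time. The crux is then to exhibit, whenever $d > k$, a certificate of size polynomial in $|\calA|$ and $\ell$ that is \emph{independent of the magnitude of $k$}, which may be written in binary.

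Two kinds of certificate suffice. First, if $d = \infty$, I certify unboundedness directly from the characterisations \Cref{Claim:hamming} and \Cref{Claim:trans}, rather than by exhibiting a long input. For Hamming it is enough to guess a simple loop, i.e.\ a cycle through a reachable and co-reachable state $q$ whose label $(u,v) \in M_{q,q}$ satisfies $|u|,|v| \le n\ell$, and to verify in polynomial time that $\interior_{\partial_q}(u,v)$ is a non-identical pair; by the forward direction of \Cref{Claim:hamming} this forces $d_h = \infty > k$. For transposition I additionally allow guessing a simple accepting run witnessing \Cref{trans1} (a pair with $u \not\equiv v$, checked via alphabetic vectors), or a simple loop together with an adjoining simple path witnessing the alphabetic-vector mismatch of \Cref{trans3}; each such object has polynomial size, and its defining condition is checked in polynomial time. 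In every case the certificate establishes $d = \infty$, hence $d > k$, without any pumping.

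Second, if $d$ is finite but exceeds $k$, I exhibit a single input word $w$ of polynomial length together with its two outputs, and verify $d(\calT_1(w),\calT_2(w)) > k$ directly---note $d_h$ is computed trivially, and $d_t$, being the inversion count between two words that are permutations of each other, is computed in polynomial time. Soundness of this certificate is immediate; completeness rests on the key lemma that \emph{a finite distance is always attained by a run of polynomial length, and is itself polynomially bounded}. This is exactly where closeness enters: when $d$ is finite, every loop interior is identical (the negation of \Cref{Claim:hamming}, resp.\ of \Cref{trans2}), so iterating a loop contributes the same, iteration-independent error in its border regions and none in its interior, and leaves the alignment of the rest of the run unchanged. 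Consequently no maximising run needs to traverse any transition more than a bounded number of times; such a run has length $O(n^2)$ and outputs of length $O(n^2\ell)$, whence $d$ is polynomially bounded. Guessing this run (equivalently $w$) and checking that its distance beats $k$ completes the \NP procedure.

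The main obstacle is precisely this key lemma: turning the qualitative statement ``pumping a loop does not increase the distance'' into a quantitative bound that lets me replace an arbitrary maximising run by one of polynomial length. I expect the delicate points to be (i)~arguing that the border contributions of different loops sharing a state do not interact so as to force many iterations, and (ii)~for transposition, controlling the global inversion count once \Cref{trans3} is excluded, by confining all genuine rearrangements to windows of width $O(\partial_q) = O(n\ell)$ around the loop borders and the inter-SCC bridges. Granting the lemma, $k$-closeness is in \coNP; and since the finite distance is polynomially bounded, the same certificates witness $d \le k$ in \coNP and $d > k$ in \NP simultaneously, placing the distance computation itself in $\NP \cap \coNP$.
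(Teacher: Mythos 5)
Your overall strategy (reduce to a common DFA via \Cref{Prop:rattoseq}, then certify $d > k$ with a polynomial-size witness, exploiting the fact that closeness forces loop interiors to be identical) is the same as the paper's, but the proposal has a genuine gap exactly where you flag it: the ``key lemma'' that a finite distance exceeding $k$ is witnessed by a run of polynomial length is not proven, and it is the entire technical content of the theorem. The paper closes this gap constructively: for each SCC it builds forward and backward gadgets that keep only the $\lborder$ and $\rborder$ contributions of loops and erase the (provably identical) interiors, merges the two gadgets at the zero-delay states, and glues the resulting components along a topological sort of the SCCs. This yields a polynomially sized \emph{acyclic} automaton $\calA_T$ such that $d(\calT_1,\calT_2)$ equals the maximum over accepting paths of $\calA_T$ of the edits needed on that path. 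Acyclicity plus polynomial size makes every accepting path polynomially long, so ``guess a path requiring $\geq k+1$ edits'' is a complete NP certificate for the complement, and it also yields the polynomial bound on the finite distance that you assert. Your informal argument (``iterating a loop contributes an iteration-independent border error'') is the right intuition, but turning it into a bound when many loops share states inside a complex SCC --- your delicate point (i) --- is precisely what the gadget construction does; without it the claimed $O(n^2)$ bound on a maximising run is unsubstantiated, especially for transposition where edits act non-locally.

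There is also a second, concrete flaw: your unboundedness certificate (guess a \emph{simple} loop with non-identical interior) is incomplete. Non-identical interiors need not appear on simple cycles. For example, take a state $q$ with $\partial_q = 1$ and two simple loops at $q$ with output pairs $(ab,ba)$ and $(ba,ab)$: their interiors are $(a,a)$ and $(b,b)$, both identical, yet the composite loop has output pair $(abba,baab)$ with $\interior_1(abba,baab) = (abb,aab)$, so $d_h = \infty$ by \Cref{Claim:hamming}. For such an instance with $k$ given in binary and larger than any polynomial in the transducer size, your second certificate type cannot help either (a polynomial-length input word has outputs of polynomial length, hence distance at most polynomial, which may be $\leq k$), so your NP machine has no accepting computation on a ``yes'' instance of $d > k$. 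The paper avoids this entirely by first running the \emph{deterministic} polynomial-time closeness procedure of the previous subsection (which handles arbitrary loops through the SCC-wise equivalence check) and only then invoking the acyclic-automaton certificate; your coNP machine should do the same rather than fold the unboundedness test into the nondeterministic guess.
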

\begin{proof}
Given two unambiguous transducers, check in polynomial time that if they are close w.r.t.~Hamming (or Transposition) distance. If not, then the distance is $\infty$ and they are not $k$-close. Assume they are close. 

By \Cref{Prop:rattoseq}, it suffices to consider two sequential transducers with a common underlying DFA. Let $\calT_1 = \langle \calA, \lambda_1, \omicron_1\rangle$ and $\calT_2 = \langle \calA, \lambda_2, \omicron_2\rangle$ be two sequential transducers.

Since $T_1$ and $T_2$ are close, from Condition $(2)$ in  \Cref{Claim:hamming} and Condition $2.(b)$ in  \Cref{Claim:trans}, we get that the for any state $q$ in $\calA$, $\interior(q)$ produces only identical pair of outputs. In fact, we can ignore the $\interior$ of the loops, and by only keeping $\lborder$ and $\rborder$ of any loops, we construct a polynomially sized acyclic automaton from $\calA$ such that distance of $\calT_1$ and $\calT_2$ equals to the maximum over the minimum edits required in any accepting path of the acyclic automaton. Towards this, we construct a forward and backward gadget similar to the one constructed for deciding \Cref{Claim:hamming}. However, in this case, instead of trimming the $\lborder$ (\textit{resp.}~$\rborder$) in the forward (\textit{resp.}~backward) gadget, we keep only $\lborder$ (\textit{resp.}~$\rborder$)  and make the other component $\epsilon$. We construct such a forward and backward gadget for each SCCs in $\calA$.  For an SCC, we combine the forward and backward gadgets by merging the states of the form $(p,0)$ in both of these gadgets where $p$ is a state in the SCC. This combined gadget is acyclic.

Given $\calA$, we topologically sort the strongly connected components. We replace each SCC with its combined forward and backward gadget. For any transition $(p,a,q)$ between two adjacent SCC, we add a transition from $((p,-\partial_p), a, (q,\partial_q))$ connecting the combined gadgets of the two SCCs. Therefore, we get a polynomially sized acyclic automaton, say $\calA_T$, from $\calA$ by removing the $\interior$ of any state in $\calA$ and keeping the initial and final states the same as that of $\calA$. Now, it suffices to check if all paths from initial to the final state in the acylic automaton require only at most $k$ edits (substitutions in the case of Hamming distance, while adjacent transpositions in the case of Transposition distance) to convert the output of the first transducer to that of second. The complement of this problem is asking if there exists a path that requires at least $k+1$ edits to convert one output to another. This is in NP since given a certificate , we can polynomially compute the number of edits required to convert one output to another (for hamming - $\mathcal{O}(n)$ and for transposition - $\mathcal{O}(nlogn)$ \cite{cicirello2019kendall} where $n$ is the length of the output words), and hence verify polynomially that whether it requires $k+1$ edits. Since the complement of this problem is in NP, the problem is in co-NP.
\end{proof}

As a consequence, we get the following result.

\begin{proposition}
Computing distance of two transducers w.r.t.~Hamming as well as transposition distance is in co-NP $\cap$ NP.
\end{proposition}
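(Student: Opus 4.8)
The plan is to derive this bound directly from the preceding $k$-closeness theorem, once I establish that a \emph{finite} Hamming (resp.\ transposition) distance is polynomially bounded. First I would dispose of the infinite case: using the polynomial-time closeness test of the previous subsection, check whether $\calT_1$ and $\calT_2$ are close; if not, the distance is $\infty$ and nothing remains to compute. So assume from now on that they are close.

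The crucial structural point is that, under closeness, the distance is polynomially bounded. I would argue this by reusing the acyclic automaton $\calA_T$ constructed in the proof of the previous theorem, for which the distance $v = d(\calT_1,\calT_2)$ equals the maximum, over all accepting paths, of the minimum number of edits needed to convert the first output word into the second along that path. Since $\calA_T$ is acyclic and of polynomial size, every accepting path visits each state at most once and hence produces a pair of output words $(U_\pi, V_\pi)$ of polynomial length $n = |U_\pi|$; consequently each per-path edit distance, and therefore its maximum $v$, is bounded by a fixed polynomial in the size of $\calA$. This is exactly where the closeness analysis of \Cref{Claim:hamming} and \Cref{Claim:trans} is used: closeness forces the interior of every loop to contribute no edits, so all mismatches (resp.\ inversions) are confined to the borders of loops and to the acyclic part between strongly connected components, whose total length is polynomial and is precisely what $\calA_T$ retains.

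Next I would place the two complementary threshold problems in the right classes. The predicate ``$d(\calT_1,\calT_2) \ge k$'' is in \NP: a certificate is an accepting path $\pi$ of $\calA_T$ whose output pair satisfies $e(U_\pi,V_\pi) \ge k$, and since $U_\pi,V_\pi$ have polynomial length, $e$ is computable in polynomial time ($\mathcal{O}(n)$ for Hamming and $\mathcal{O}(n\log n)$ for transposition, as in the previous proof) and compared with $k$. Dually, ``$d(\calT_1,\calT_2) \le k$'' is precisely the $k$-closeness problem, which is in \coNP\ by the previous theorem (equivalently, its complement is the \NP\ problem above at threshold $k+1$). Because $v$ is polynomially bounded, a claimed value $v_0$ is confirmed by one \NP\ witness that $d \ge v_0$ (a path meeting $v_0$) together with the \coNP\ guarantee that $d \le v_0$ (no path exceeds $v_0$); certifying a lower bound is thus an \NP\ task and certifying an upper bound a \coNP\ task, which is the sense in which computing the distance lies in $\NP \cap \coNP$.

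The hard part is the polynomial bound on the finite distance, since without it neither the threshold $k$ nor the \NP\ certificate need be of polynomial size; this is the step that genuinely relies on the loop-interior analysis rather than on a generic boundedness argument. A secondary care point is the precise reading of ``$\NP \cap \coNP$'' here: it is the lower-threshold problem ``$d \ge k$'' that witnesses \NP-membership and the upper-threshold problem ``$d \le k$'' that witnesses \coNP-membership, and the exact value is pinned down by combining the two.
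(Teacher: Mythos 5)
Your proposal is correct and takes essentially the same route as the paper: a polynomial-time closeness test, the polynomially sized acyclic automaton $\calA_T$ (whose acyclicity yields a polynomial bound on the finite distance), and the two threshold problems --- ``$d \geq k$'' in NP via a path certificate and ``$d \leq k$'' in co-NP via the $k$-closeness theorem. The only cosmetic difference is that the paper pins down the exact value by iterating $k$ upward (co-NP checks) or downward (NP checks) within the polynomial bound, whereas you certify a claimed value by pairing one NP lower-bound witness with one co-NP upper-bound guarantee; the underlying content is identical.
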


\begin{proof}
First, determinine the closeness in polynomial time. If not close, then distance is $\infty$. If close, then we can construct an polynomially sized acyclic automaton $\calA_T$ such that the distance equals to the maximum over the minimum edits required in any accepting path.  In fact, the distance is bounded by product of the number of transitions in $\calA_T$ and the maximum value in the output length on any single transition. Let $b$ denote the bound on the distance. Now, we can compute the exact distance using $k$-closeness in two ways. One way, is to iterate $k$ over $\{0, 1, \ldots, b\}$, and check for each $k$ that every accepting path in $\calA_T$ requires at most $k$ edits (which is in co-NP). If not, increment $k$ and continue. Else, $k$ is the distance between $T_1$ and $T_2$.  Hence, computing distance is in co-NP. The other way is to iterate $k$ over $\{b, b-1, \ldots , 1\}$, and for each $k$, check if there exists a accepting path with at least $k$ edits (which is in NP). If yes, then $k$ is the required distance between $T_1$ and $T_2$. Otherwise, decrement $k$ and continue. Hence, computing distance is also in NP.
\end{proof}

\section{Discussion and Conclusion}
\label{Sec:conclusion}
 It is shown that  distance between two rational functions w.r.t.~common edit distances is computable. The related notions of diameter of a rational relation, and  the index of a rational relation in the composition closure of another are also computable. We leave open the question of finding the precise computational complexity of the problems in Tables~\ref{table:distance},~\ref{table:diameter} and \ref{table:index}.

The current decision procedure for closeness w.r.t.~conjugacy and Levenshtein family of distances proceeds through the analysis of rational expressions. One could directly work on automata, but it is not enough to check for the conjugacy of simple cycles, as there can be complex strongly connected components. In such cases, a decidability proof for conjugacy can be achieved by utilizing Simon's factorization forests \cite{simon1990factorization} and checking the conjugacy of the factorization trees inductively. Sumfree expressions are doing this in essence, circumventing the need to construct the transition monoids.

Lifting these notions to infinite words, and two-way transducers is an immediate next step. Distance between one-way transducers could be seen as the diameter of a rational relation obtained by the cartesian product. However, when the transducers $\calT,\calS$ are two-way or polyregular, the relation $\{(\calT(w),\calS(w)) \mid w \in \dom(\calT)\}$ need not be rational. It remains to develop techniques for checking the conjugacy of non-rational relations. 

An interesting question is: given two functional transducers $\calT_1$ and $\calT_2$ with bounded distance, does there exist a transducer $\calT$ such that $\calT_2$ is equivalent to a cascading composition of $\calT_1$ and $\calT$? This is often called the \emph{repair problem} and is well-studied between two regular languages \cite{benedikt2011regular}.

\bibliography{reference}

\end{document}